\definecolor{nodeAny}{rgb}{0.5, 0.5, 0.5}
\definecolor{nodeA}{rgb}{0.0, 0.0, 0.0}
\definecolor{nodeB}{rgb}{0.0, 0.0, 0.0}
\definecolor{nodeC}{rgb}{0.0, 0.0, 0.0}
\definecolor{BLUE}{RGB}{0,0,255}
\definecolor{myParula01Blue}{RGB}{0,114,189}
\definecolor{myParula02Orange}{RGB}{217,83,25}
\definecolor{myParula03Yellow}{RGB}{237,177,32}
\definecolor{myParula04Purple}{RGB}{126,47,142}
\definecolor{myParula05Green}{RGB}{119,172,48}
\definecolor{myParula06LightBlue}{RGB}{77,190,238}
\definecolor{myParula07Red}{RGB}{162,20,47}
\def\mathcolor#1#{\@mathcolor{#1}}
\def\@mathcolor#1#2#3{%
  \protect\leavevmode
  \begingroup
    \color#1{#2}#3%
  \endgroup
}
\newcommand{\NA}[0]{\mathcolor{nodeA}{1}}
\newcommand{\NB}[0]{\mathcolor{nodeB}{2}}
\newcommand{\NC}[0]{\mathcolor{nodeC}{3}}
\newcommand{\DAP}[1]{#1}
\newcommand{\DO}[1]{#1}
\newcommand{\DSI}[1]{#1}
\renewcommand{\H}[0]{\ensuremath{\bm{H}}}
\newcommand{\G}[0]{\ensuremath{\bm{G}}}
\newcommand{\V}[0]{\ensuremath{\bm{V}}}
\newcommand{\T}[0]{\ensuremath{\bm{T}}}
\newcommand{\W}[0]{\ensuremath{\bm{W}}}
\newcommand{\X}[0]{\ensuremath{\bm{X}}}
\newcommand{\Y}[0]{\ensuremath{\bm{Y}}}
\newcommand{\Z}[0]{\ensuremath{\bm{Z}}}
\newcommand{\ZcorrS}[0]{\ensuremath{\Z_{\mathrm{corr}}}}
\newcommand{\Zcorr}[1]{\ensuremath{\Z_{\mathrm{corr},#1}}}
\newcommand{\I}[0]{\ensuremath{\bm{I}}}
\newcommand{\Q}[0]{\ensuremath{\bm{Q}}}
\newcommand{\w}[0]{\ensuremath{\bm{w}}}
\newcommand{\x}[0]{\ensuremath{\bm{x}}}
\newcommand{\y}[0]{\ensuremath{\bm{y}}}
\newcommand{\z}[0]{\ensuremath{\bm{z}}}
\newcommand{\zcorrS}[0]{\ensuremath{\z_{\mathrm{corr}}}}
\newcommand{\zcorr}[1]{\ensuremath{\z_{\mathrm{corr},#1}}}
\newcommand{\ntau}[0]{\ensuremath{\overline{\tau}}}
\newcommand{\herm}[0]{\ensuremath{\mathsf{H}}}
\newcommand{\transp}[0]{\ensuremath{\mathsf{T}}}
\newcommand{\pinv}[0]{\ensuremath{\dagger}}
\newcommand{\Prob}[0]{\ensuremath{\operatorname{Pr}}}
\newcommand{\Exp}[0]{\ensuremath{\mathbb E}}
\newcommand{\hEntr}[0]{\ensuremath{h}}
\newcommand{\MInf}[0]{\ensuremath{I}}
\newcommand{\Span}[0]{\ensuremath{\operatorname{span}}}
\newcommand{\Rank}[0]{\ensuremath{\operatorname{rank}}}
\newcommand{\Bern}[0]{\ensuremath{\operatorname{Bern}}}
\newcommand{\CN}[0]{\ensuremath{\mathcal{CN}}}
\newcommand{\IA}[0]{\ensuremath{{[\mathrm{IA}]}}}
\newcommand{\ZF}[0]{\ensuremath{{[\mathrm{ZF}]}}}
\newcommand{\Mq}[0]{\ensuremath{{[q]}}}
\newcommand{\Mqnot}[0]{\ensuremath{{[\overline{q}]}}}
\newcommand{\ZFIA}[0]{\ensuremath{\{ \mathrm{ZF}, \mathrm{IA} \}}}
\newcommand{\logp}[0]{\ensuremath{\log(\rho)}}
\newcommand{\ologp}[0]{\ensuremath{o\left[\logp\right]}}
\newcommand{\eqdef}[0]{\ensuremath{\triangleq}}
\newcommand{\stepA}[0]{\ensuremath{\mathrm{(a)}}}
\newcommand{\stepB}[0]{\ensuremath{\mathrm{(b)}}}
\newcommand{\stepC}[0]{\ensuremath{\mathrm{(c)}}}
\newcommand{\stepD}[0]{\ensuremath{\mathrm{(d)}}}
\newcommand{\stepE}[0]{\ensuremath{\mathrm{(e)}}}
\newcommand{\stepF}[0]{\ensuremath{\mathrm{(f)}}}
\newcommand{\eqA}[0]{\overset{\stepA}{=}}
\newcommand{\eqB}[0]{\overset{\stepB}{=}}
\newcommand{\eqC}[0]{\overset{\stepC}{=}}
\newcommand{\eqD}[0]{\overset{\stepD}{=}}
\newcommand{\eqE}[0]{\overset{\stepE}{=}}
\newcommand{\leqA}[0]{\overset{\stepA}{\leq}}
\newcommand{\leqB}[0]{\overset{\stepB}{\leq}}
\newcommand{\leqC}[0]{\overset{\stepC}{\leq}}
\newcommand{\leqE}[0]{\overset{\stepE}{\leq}}
\newcommand{\leqF}[0]{\overset{\stepF}{\leq}}
\newcommand{\frakWA}[0]{\ensuremath{\bm{\mathfrak W}}_\textnormal{A}}
\newcommand{\frakWB}[0]{\ensuremath{\bm{\mathfrak W}}_\textnormal{B}}
\newcommand{\frakX}[0]{\ensuremath{\bm{\mathfrak X}}}
\newcommand{\frakY}[0]{\ensuremath{\bm{\mathfrak Y}}}
\newcommand{\frakZ}[0]{\ensuremath{\bm{\mathfrak Z}}}
\newtheorem{lemma}{Lemma}
\newtheorem{definition}{Definition}
\newtheorem{corollary}{Corollary}
\newtheorem{theorem}{Theorem}
\renewcommand{\eqref}[1]{(\ref{#1})}
\tikzset{sfgAdder/.style={
    circle,draw,minimum size=1em,inner sep=0,
    append after command={
        [every edge/.append style={
            shorten >=\pgflinewidth/2,
            shorten <=\pgflinewidth/2,
        }]
        (\tikzlastnode.west) edge (\tikzlastnode.east)
        (\tikzlastnode.north) edge (\tikzlastnode.south)
    }
}}
\tikzset{sfgGain/.style={
    draw,regular polygon,regular polygon sides=3,shape border rotate=-90,inner sep=0.1em,
}}
\tikzset{sfgModulator/.style={
    circle,draw,minimum size=1em,inner sep=0,
    append after command={
        [every edge/.append style={
            shorten >=\pgflinewidth/2,
            shorten <=\pgflinewidth/2,
        }]
        (\tikzlastnode.north west) edge (\tikzlastnode.south east)
        (\tikzlastnode.north east) edge (\tikzlastnode.south west)
    }
}}
\begin{document}

\title{Degrees-of-Freedom of the MIMO Three-Way Channel with Node-Intermittency}

\author{Joachim~Neu,~\IEEEmembership{Student Member,~IEEE,}
        Anas~Chaaban,~\IEEEmembership{Senior~Member,~IEEE,}
        Aydin~Sezgin,~\IEEEmembership{Senior~Member,~IEEE,}
        and~Mohamed-Slim~Alouini,~\IEEEmembership{Fellow,~IEEE}%
\thanks{Manuscript received August 27, 2017; revised August 30, 2018; accepted May 6, 2019. Date of publication May 28, 2019; date of current version May 21, 2019.
This work was supported in part by the German Research Foundation (DFG), under grant SE 1697/16.
This work has been presented in part at the 2017 IEEE International Symposium on Information Theory \cite{ChaabanSezginAlouini_ISIT17_NodeIntermittency}.}%
\thanks{The work of J.~Neu,
at the time a student at Technische Universit\"at M\"unchen (TUM), 80333 M\"unchen, Germany,
was conducted during a research internship at King Abdullah University of Science and Technology (KAUST).
He is now a student at Stanford University, Stanford, CA 94305, USA.
Email: jneu@stanford.edu.}%
\thanks{A.~Chaaban is with the School of Engineering, University of British Columbia (UBC), Kelowna, BC Canada V1Y 1V7. Email: anas.chaaban@ubc.ca.}%
\thanks{A.~Sezgin is with the Institute of Digital Communication Systems, Ruhr-Universit\"at Bochum (RUB), 44780 Bochum, Germany. Email: aydin.sezgin@rub.de.}%
\thanks{M.-S.~Alouini is with the Division of Computer, Electrical, and Mathematical Sciences and Engineering (CEMSE), King Abdullah University of Science and Technology (KAUST), Thuwal, Saudi Arabia. Email: slim.alouini@kaust.edu.sa.}%
\thanks{Communicated by D.~Tuninetti, Associate Editor for Communications.}%
\thanks{Color versions of one or more of the figures in this paper are available online at http://ieeexplore.ieee.org.}%
\thanks{Digital Object Identifier 10.1109/TIT.2019.2919548}%
}

\markboth{IEEE Transactions on Information Theory (forthcoming)}%
{Neu \MakeLowercase{\textit{et al.}}: Degrees-of-Freedom of the MIMO Three-Way Channel with Node-Intermittency}

\makeatletter
\def\@IEEEpubidpullup{1.5\baselineskip}
\makeatother
\IEEEpubid{\begin{minipage}{\textwidth}
\centering
0018--9448~\copyright~2019 IEEE.
Personal use is permitted, but republication/redistribution requires IEEE permission.\\
See http://www.ieee.org/publications\_standards/publications/rights/index.html for more information.
\end{minipage}}

\IEEEtitleabstractindextext{%
\begin{abstract}
The characterization of fundamental performance bounds of many-to-many communication systems in which participating nodes are active in an intermittent way is one of the major challenges in communication theory. In order to address this issue, we
introduce the multiple-input multiple-output (MIMO) three-way channel (3WC) with an intermittent node and study its degrees-of-freedom (DoF) region and sum-DoF. We devise a non-adaptive encoding scheme based on zero-forcing, interference alignment and erasure coding, and show its DoF region (and thus sum-DoF) optimality for non-intermittent 3WCs and its sum-DoF optimality for (node-)intermittent 3WCs. However, we show by example that in general some DoF tuples in the intermittent 3WC can only be achieved by adaptive schemes, such as 
decode-forward relaying. This shows that non-adaptive encoding is sufficient for the non-intermittent 3WC and for the sum-DoF of intermittent 3WCs, but adaptive encoding is necessary for the DoF region of intermittent 3WCs.
Our work contributes to a better understanding of the fundamental limits of multi-way communication systems with intermittency and the impact of adaptation therein.
\end{abstract}

\begin{IEEEkeywords}
Degrees-of-freedom,
three-way channel,
MIMO,
intermittent connectivity,
interference alignment,
zero forcing,
relay networks,
interference channel.
\end{IEEEkeywords}}

\maketitle

\IEEEdisplaynontitleabstractindextext
\IEEEpeerreviewmaketitle

\allowdisplaybreaks

\section{Introduction}
\label{sec:introduction}

\IEEEPARstart{I}{n} multi-way communication scenarios multiple nodes communicate with each other, each acting as a source, a destination, and possibly a relay at the same time. This mode of communication is especially important for future systems employing full-duplex and device-to-device (D2D) communication, \cite{SabharwalSchniterGuo,TehraniUysalYanikomeroglu,AsadiWangMancuso}. It is an important technique for efficient resource utilization that is expected to gain more prominence in future communication systems, especially with the rise of mesh networks, e.g., in industrial and vehicular networks.

Several multi-way communication scenarios have been studied in the literature, \cite{ChaabanSezgin_FnT}. It started with the two-way channel which was first studied by Shannon in \cite{Shannon_TWC} and subsequently in \cite{HekstraWillems,Han,Varshney,SongAlajajiLinder} for instance. Then, the scope was extended to two-way networks where two groups of nodes communicate with each other in a two-way fashion \cite{ChengDevroye,SuhWangTse,ChengDevroye_TwoWayIC}, and two-way relay networks where two nodes communicate with each other in a two-way fashion via a relay node \cite{NamChungLee_IT,SongDevroyeShaoNgo_JSAC,VazeHeath,WilsonNarayananPfisterSprintson}. This line of research has been further extended to multi-way networks where multiple nodes communicate with each other in a multi-way fashion, each node being a source and a destination at the same time \cite{MaierChaabanMathar,MaierChaabanMathar_ITW,ChaabanMaierSezginMathar,ElmahdyKeyiMohassebElBatt,Ong}, and to multi-way relay networks where the same communication as in multi-way networks takes place via a relay node \cite{GunduzYenerGoldsmithPoor_IT,OngKellettJohnson_IT,SezginAvestimehrKhajehnejadHassibi,ZewailMohassebNafieElGamal,MatthiesenZapponeJorswieck_TWC,NgoLarsson,OngLechnerJohnson,WangYuan,TianYener_MWRC_IT,LeeChunKWay,ChenweiWang,
ChengDevroyeLiu}.

\subsection{The Three-Way Channel with Intermittency}
\label{sec:introduction-3wc-with-intermittency}

\IEEEpubidadjcol

In this work, we focus on the multiple-input multiple-output (MIMO) three-way channel (3WC) which can be described as follows: Consider a system consisting of three terminals communicating with each other in a multi-way fashion, e.g., two D2D user terminals and a base station (BS) where the D2D users communicate with each other while exchanging signals with the BS (control signals or data). This 3WC is an extension of Shannon's two-way channel \cite{Shannon_TWC} and has been studied in \cite{ChaabanMaierSezginMathar,Ong,ElmahdyKeyiMohassebElBatt}.

Therein, it is assumed that the three nodes are connected all the time (Fig.~\ref{fig:intermittency-types-noint}).
This forms a \emph{non-intermittent} 3WC, which, although subsumed by the work in this paper, is not its main focus.
The main focus of this paper is the intermittent 3WC instead.
There are several reasons which motivate studying the intermittent 3WC channel, some of which are discussed next.

One motivation stems from practice, where connectivity can be intermittent.
For instance, a pair of D2D users, commonly chosen to be nearby users \cite{TehraniUysalYanikomeroglu,AsadiWangMancuso}, might be both disconnected from the BS due to shadowing (Fig.~\ref{fig:intermittency-types-nodeint}). Future generations of mobile communication systems (e.g., using mmWaves, where line-of-sight (LOS) propagation will become dominant) are expected to be heavily susceptible to this type of shadowing. In another scenario, the D2D users might operate in an underlay mode over a resource block used by the BS to communicate to cellular users (CUs). The BS connects to the D2D users whenever it does not communicate with a CU, and disconnects from the D2D users otherwise (see Fig.~\ref{fig:scenario-d2dcubs}). Note that D2D users are chosen so that they cause/receive negligible interference (relative to the desired signals) to/from other nodes using the same resource block~\cite{mach2015band,celik2017joint}. In both cases the two links from the D2D users to the BS are jointly intermittent, i.e., both are available or blocked at the same time. We call this \emph{node intermittency} and call the BS an \emph{intermittent node} (from the D2D users' perspective).
Another motivation stems from theory.
A channel with intermittency is a special `extreme' case of a channel with state.
Point-to-point channels, multiple-access channels (MAC), and broadcast channels (BC) with state have been studied in the past, see \cite[ch. 7]{ElgamalKim} and references therein.
The impact of intermittency on these channels can be studied based on these results.
There are two approaches to extend results on channels with intermittency.
The first consists of considering larger networks, such as the X channel \cite{YehWang}. The other is to consider networks with bidirectional links (feedback) such as
\cite{KarakusWangDiggavi,WangSuhDiggaviViswanath,VahidMaddahAliAvestimehr}.
We take the second approach in this paper by focusing on multi-way communication.
In such networks, feedback links enable relaying and hence provide additional paths for information flow that might be interrupted by intermittency.
This gives intermittency a more `global' impact since it affects all nodes indirectly by interrupting useful paths through nodes that act as relays.
Thus it is important to study networks with feedback and intermittency.
The smallest multi-way network (in terms of number of nodes) that one could study in this context is the two-way channel (TWC) \cite{Shannon_TWC}.
However, analyzing the impact of intermittency in the Gaussian TWC is straightforward as we shall see in Section~\ref{sec:prerequisites-channels-with-state}.
Therefore, the 3WC is the smallest viable example of a multi-way network with non-trivial behavior under intermittency.
Moreover, the 3WC subsumes other channels of interest such as
the two-way MAC and BC \cite{ChengDevroye},
and MAC and BC with cooperation
\cite{ieee01246003,ieee04106127}.

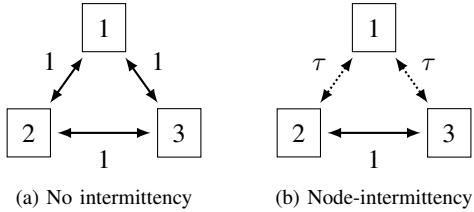
\begin{figure}[!t]
  \centering
  \subfloat[No intermittency]{%
    \hspace{5mm}%
    \begin{tikzpicture}%
      [msgarrow/.append style={
        thick,
        shorten >=\pgflinewidth*3,
        shorten <=\pgflinewidth*3,
      }]

      \node [rectangle,draw,inner sep=2mm] (node2) at (-1,0) {2};
      \node [rectangle,draw,inner sep=2mm] (node3) at (+1,0) {3};
      \node [rectangle,draw,inner sep=2mm] (node1) at (0,+1.4) {1};

      \draw[latex-latex,msgarrow] (node1) -- (node2) node[midway,above left] {1};
      \draw[latex-latex,msgarrow] (node1) -- (node3) node[midway,above right] {1};
      \draw[latex-latex,msgarrow] (node2) -- (node3) node[midway,below,yshift=-1mm] {1};

    \end{tikzpicture}%
    \hspace{5mm}%
    \label{fig:intermittency-types-noint}%
  }
  \subfloat[Node-intermittency]{%
    \hspace{5mm}%
    \begin{tikzpicture}%
      [msgarrow/.append style={
        thick,
        shorten >=\pgflinewidth*3,
        shorten <=\pgflinewidth*3,
      }]

      \node [rectangle,draw,inner sep=2mm] (node2) at (-1,0) {2};
      \node [rectangle,draw,inner sep=2mm] (node3) at (+1,0) {3};
      \node [rectangle,draw,inner sep=2mm] (node1) at (0,+1.4) {1};

      \draw[latex-latex,densely dotted,msgarrow] (node1) -- (node2) node[midway,above left] {$\tau$};
      \draw[latex-latex,densely dotted,msgarrow] (node1) -- (node3) node[midway,above right] {$\tau$};
      \draw[latex-latex,msgarrow] (node2) -- (node3) node[midway,below,yshift=-1mm] {1};

    \end{tikzpicture}%
    \hspace{5mm}%
    \label{fig:intermittency-types-nodeint}%
  }
  \caption{MIMO 3WC with no intermittency such that all nodes are always connected \protect\subref{fig:intermittency-types-noint} and node-intermittency where node $1$ is available only $\tau$ fraction of the time \protect\subref{fig:intermittency-types-nodeint}}
  \label{fig:intermittency-types}
\end{figure}

\tikzset{
  networkBS/.pic = {
    \coordinate (-tip) at (0,0);
    \coordinate (-north) at (0,0.26);
    \coordinate (-south) at (0,-0.5);
    \draw[pic actions] (0,0) -- ++(0.13,-0.5) -- ++(-0.26,0) --cycle;
    \draw[pic actions] (0,0) (-45:0.12) arc(-45:225:0.12);
    \draw[pic actions] (0,0) (-55:0.19) arc(-55:235:0.19);
    \draw[pic actions] (0,0) (-60:0.26) arc(-60:240:0.26);
  },
  networkUE/.pic = {
    \coordinate (-antenna) at (0,0.1);
    \coordinate (-north) at (0.05,0.25);
    \coordinate (-south) at (0.05,-0.50);
    \draw[pic actions] (-antenna) -- ++(0,-0.15) -- ++(0,+0.3);
    \draw[pic actions] (-antenna) ++(-0.15,0.15) -- (-antenna) -- ++(+0.15,0.15);
    \draw[pic actions] (-0.10,-0.05) rectangle (0.20,-0.50);
  },
}

\begin{figure}[!t]
  \centering
  \subfloat[BS serving D2D users]{%
    \begin{tikzpicture}[
        myconnection/.append style={
          latex-latex,
          thick,
        }
      ]

      \pic (bs) at (0,0) {networkBS};
      \pic (ue1) at (2.75,0) {networkUE};
      \pic (ue2) at (2.75,-1.75) {networkUE};

      \node[below=1mm of bs-south] {BS};
      \node[above=1mm of ue1-north] {D2D};
      \node[below=1mm of ue2-south] {D2D};

      \draw[myconnection,shorten <=4mm,shorten >=2mm] (bs-tip) -- (ue1-antenna);
      \draw[myconnection,shorten <=4mm,shorten >=3mm] (bs-tip) -- (ue2-antenna);
      \draw[myconnection,shorten <=7mm,shorten >=3mm] (ue1-antenna) -- (ue2-antenna);

    \end{tikzpicture}%
    \label{fig:scenario-d2dcubs-case1}%
  }
  \hspace{5mm}
  \subfloat[BS serving CU]{%
    \begin{tikzpicture}[
        myconnection/.append style={
          latex-latex,
          thick,
        }
      ]

      \pic (bs) at (0,0) {networkBS};
      \pic (ue1) at (2.75,0) {networkUE};
      \pic (ue2) at (2.75,-1.75) {networkUE};
      \pic (ue3) at (1,-1.75) {networkUE};

      \node[below=1mm of bs-south] {BS};
      \node[above=1mm of ue1-north] {D2D};
      \node[below=1mm of ue2-south] {D2D};
      \node[below=1mm of ue3-south] {CU};

      \draw[myconnection,shorten <=4mm,shorten >=2mm,dotted,thin] (bs-tip) -- (ue1-antenna);
      \draw[myconnection,shorten <=4mm,shorten >=3mm,dotted,thin] (bs-tip) -- (ue2-antenna);
      \draw[myconnection,shorten <=7mm,shorten >=3mm] (ue1-antenna) -- (ue2-antenna);

      \draw[myconnection,shorten <=4mm,shorten >=3mm] (bs-tip) -- (ue3-antenna);
      \draw[myconnection,shorten <=3mm,shorten >=3mm,dotted,thin] (ue1-antenna) -- (ue3-antenna);
      \draw[myconnection,shorten <=2mm,shorten >=3mm,dotted,thin] (ue2-antenna) -- (ue3-antenna);

      \node[rotate=-75,fill=white] at (1.7,-0.9) {\footnotesize negligible interference};

    \end{tikzpicture}%
    \label{fig:scenario-d2dcubs-case2}%
  }
  \caption{A D2D pair sharing the same resources with a cellular user (CU), where the BS communicates with the D2D pair part of the time \protect\subref{fig:scenario-d2dcubs-case1} and with the CU the rest of the time \protect\subref{fig:scenario-d2dcubs-case2}. The D2D pair is far enough from both the BS and the CU and thus causes/receives negligible interference (dotted).}
  \label{fig:scenario-d2dcubs}
\end{figure}
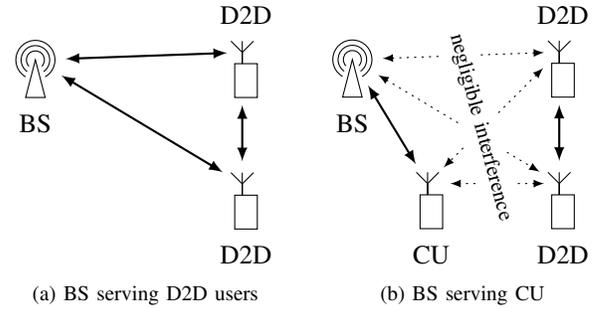

\subsection{Degrees-of-Freedom and Intermittency}

In many previously mentioned works, the focus is generally on the capacity of the studied network. Since finding the capacity is elusive in most cases, some works (and also this work) focus on the degrees-of-freedom (DoF) which provide a good capacity approximation at high signal-to-noise ratio (SNR) \cite{CadambeJafar_XNetworks}, thereby highlighting the interaction between the signals of the different nodes while diminishing the impact of noise. This is of interest since state-of-the-art wireless communication systems operate in a regime where they are essentially interference-limited rather than noise-limited.

The concept of intermittency as a form of channel impairment also fits well with the philosophy of DoFs: With increasing SNR, signal components `harden' in the sense that some allow (almost) noise- and interference-free communication at a rate that scales with SNR on a logarithmic scale, while others are hopelessly burried in uncancelable interference and are therefore useless. This effect is exactly what the DoF perspective captures as it essentially counts available Shannon-Hartley communication `units'. Intermittency is the channel impairment that goes well together with this `all or nothing' perspective, capturing the notion that some signal dimensions might become useless due to stochastic processes in the channel, such as shadowing of dominant LOS components.

\subsection{Scope of this Work}
\label{sec:introduction-scope}

Here, we study the impact of node-intermittency on the DoF region and sum-DoF, i.e., the capacity scaling versus SNR in a dB scale, of a full-duplex MIMO 3WC. In this MIMO network, each node generally has two independent messages, each intended for one of the remaining two nodes. We pay particular attention to the necessity (or the lack thereof) of \emph{adaptive encoding} for reaching DoF region/sum-DoF. Adaptive encoding enables cooperative communication schemes by allowing the transmit signal of a node to depend on its previously received signals, which can be interpreted as a form of feedback. In contrast, with \emph{non-adaptive encoding} transmit signals depend only on the messages to be sent, and cooperation (e.g., in the form of relaying) is excluded.
The issue of (non-)adaptive encoding has been studied for other networks earlier in \cite{Varshney,ChengDevroye,SongAlajajiLinder,ieee4839051} for instance.

We assume that nodes have strictly causal knowledge of the intermittency state of adjacent links.
This can be obtained by estimating the connectivity from the receive signals (e.g., signal strength). Note that in some cases the intermittency state can be known without the need for estimation, occasionally even ahead of time, e.g., in the scenario of D2D/CU/BS users (Fig.~\ref{fig:scenario-d2dcubs}), where the BS knows its scheduling of users in advance and can anticipate when it will not be able to receive the D2D user signals.

The reader might ask how our findings relate to previous results on channels with state and whether the intermittent 3WC is just a channel with state to which known techniques \cite[ch. 7]{ElgamalKim} can be applied. In fact, we make use of the known results for point-to-point channels with random state known to the receiver
\cite[eq. (7.2)]{ElgamalKim}
throughout this paper.
However, the \emph{multi-way} intermittent 3WC asks for a holistic analysis of its fundamental limits and cannot be exhaustively studied as a mere collection of independent \emph{one-way} one-hop multi-user channels with state, e.g., MACs with state
\cite[p. 175]{ElgamalKim}.

\subsection{Outline and Overview of Results}

After introducing the details of the system model in Section~\ref{sec:system-model} and highlighting the main results in Section~\ref{sec:main-results}, we examine the (node-)intermittent 3WC in Section~\ref{sec:node-intermittency}. We devise a non-adaptive encoding scheme based on zero-forcing (ZF), interference alignment (IA) and erasure coding (EC) and derive its achievable DoF region and sum-DoF. We present the genie-aided converse techniques used to derive DoF region and sum-DoF upper bounds, both under non-adaptive and adaptive encoding. We conclude that for the intermittent 3WC the presented non-adaptive scheme is sum-DoF optimal, so adaptation is not necessary to achieve sum-DoF. Then, we provide examples of adaptive relaying schemes that can achieve a DoF region point that no non-adaptive scheme can achieve. This shows that adaptive schemes can achieve strictly larger DoF regions, and therefore adaptation is required to achieve the DoF region of intermittent 3WCs. To complete the picture, we examine the non-intermittent 3WC as special case of intermittent 3WCs in Section~\ref{sec:no-intermittency}. We show that for the non-intermittent 3WC the presented scheme is DoF region optimal (and thus also sum-DoF optimal) and therefore adaptive encoding is not required. This reveals an interesting interplay between intermittency and adaptation. In Section~\ref{sec:conclusion} we provide conclusive remarks and directions for future research.

\subsection{Notation}

Throughout the paper, we use $x_i^n \eqdef (x_{i,1},\ldots,x_{i,n})$ for some index $i$, and $x_{i,\ell}^n \eqdef (x_{i,\ell},\ldots,x_{i,n})$. We use regular letters to denote scalar-valued quantities, boldface letters to denote vector- und matrix-valued quantities; lowercase letters for scalar and vector values (e.g., realizations of random variables), uppercase letters for matrix values and for random variables. The $N \times N$ identity matrix is denoted $\I_N$, the $N \times N$ all-zero matrix is $\bm{0}_N$. We write $\X \sim \CN(\bm{0},\Q)$ to indicate that $\X$ is a multivariate complex Gaussian random variable with zero mean and covariance matrix $\Q$, and $S \sim \Bern(\tau)$ to indicate that $S$ is Bernoulli distributed with $\Prob[S=1] = \tau$ and $\Prob[S=0] = 1-\tau =: \ntau$. We write $x^+$ to denote $\max\{0,x\}$ for some $x \in \mathbb R$, and $\H^\pinv$, $\H^\transp$, $\H^\herm$, and $\Span(\H)$ to denote the Moore-Penrose pseudo-inverse, the transpose, the Hermitian transpose, and the subspace spanned by the columns of the matrix $\H$. By $\log(x)$ we denote the logarithm of $x$ to base $2$, by $i \rightarrow j$ the communication from node $i$ to node $j$ one-way, and by $i \leftrightarrow j$ both $i \rightarrow j$ and $j \rightarrow i$. By $p_X(x)$ we denote the probability density function (PDF) of random variable $X$.

\section{Prerequisites and System Model}

In this section, we briefly recite the basics of channels with state, a fundamental building block of multi-way communication scenarios with intermittency.
We then present the system model of the intermittent 3WC.

\subsection{Channels with State and Intermittency}
\label{sec:prerequisites-channels-with-state}

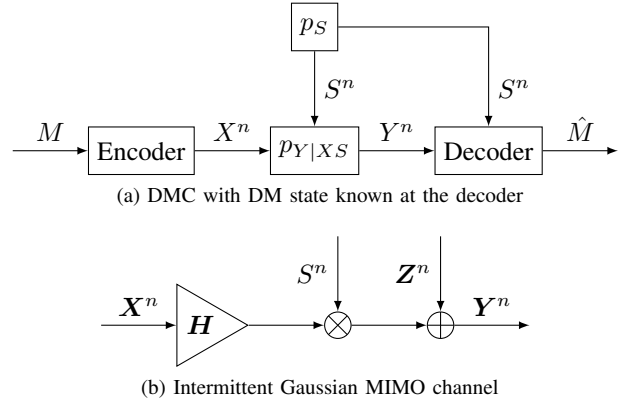
\begin{figure}[!t]
  \centering
  \subfloat[DMC with DM state known at the decoder]{%
    \begin{tikzpicture}

      \coordinate (start) at (0,0);
      \node [draw,right=of start,minimum height=18pt] (encoder) {Encoder};
      \node [draw,right=of encoder,minimum height=18pt] (channel) {$p_{Y|XS}$};
      \node [draw,right=of channel,minimum height=18pt] (decoder) {Decoder};
      \coordinate [right=of decoder] (stop);
      \node [draw,above=of channel,minimum height=18pt] (state) {$p_S$};

      \draw[-latex] (start) -- (encoder) node [midway,above] {$M$};
      \draw[-latex] (encoder) -- (channel) node [midway,above] {$X^n$};
      \draw[-latex] (channel) -- (decoder) node [midway,above] {$Y^n$};
      \draw[-latex] (decoder) -- (stop) node [midway,above] {$\hat{M}$};
      \draw[-latex] (state) -- (channel) node [midway,right] {$S^n$};
      \draw[-latex] (state) -| (decoder) node [near end,right,yshift=-4.5pt] {$S^n$};

    \end{tikzpicture}
    \label{fig:channels-with-state-dmc}
  }
  \\
  \subfloat[Intermittent Gaussian MIMO channel]{%
    \begin{tikzpicture}

      \coordinate (start) at (0,0);
      \node [sfgGain,right=of start] (gain) {$\H$};
      \node [sfgModulator,right=of gain] (modulator) {};
      \coordinate [above=of modulator] (state);
      \node [sfgAdder,right=of modulator] (adder) {};
      \coordinate [above=of adder] (noise);
      \coordinate [right=of adder] (stop);

      \draw[-latex] (start) -- (gain) node [midway,above] {$\X^n$};
      \draw[-latex] (gain) -- (modulator);
      \draw[-latex] (state) -- (modulator) node [midway,left] {$S^n$};
      \draw[-latex] (modulator) -- (adder);
      \draw[-latex] (noise) -- (adder) node [midway,left] {$\Z^n$};
      \draw[-latex] (adder) -- (stop) node [midway,above] {$\Y^n$};

    \end{tikzpicture}
    \label{fig:channels-with-state-gaussian}
  }
  \caption{The capacity of the DMC with state known at the decoder \protect\subref{fig:channels-with-state-dmc} is well-known, and used to derive the capacity of the intermittent Gaussian MIMO channel \protect\subref{fig:channels-with-state-gaussian}}
  \label{fig:channels-with-state}
\end{figure}

Recall the definition of a discrete memoryless channel (DMC) with discrete memoryless (DM) state known at the decoder (Fig.~\ref{fig:channels-with-state-dmc}):
Its state sequence $S^n$ is independent and identically distributed (i.i.d.) $S_\ell \sim p_S$ and independent of the input $X^n$, where $n$ is the number of channel uses. Since both output $Y^n$ and state sequence $S^n$ are known at the decoder, it can be viewed as a DMC with channel law
\[ p_{Y^n S^n | X^n}(y^n,s^n|x^n) = \prod_{\ell=1}^n p_{Y|XS}(y_\ell|x_\ell,s_\ell) p_S(s_\ell). \]
The capacity of this channel is \cite[eq. (7.2)]{ElgamalKim}
\begin{IEEEeqnarray}{C}
  C = \max_{p_X} \MInf( X; Y S ) = \max_{p_X} \MInf( X; Y \mid S).
  \nonumber
\end{IEEEeqnarray}

We use this result to derive the capacity of the intermittent Gaussian MIMO channel (Fig.~\ref{fig:channels-with-state-gaussian}) which will be useful in the sequel.
Here, the input $\X^n$ is a sequence of vectors $\x_\ell$ of length $M$ with average power constraint
\[ \sum_{\ell=1}^n \Exp[\| \X_\ell \|_2^2] \leq n P. \]
$S_\ell \sim \Bern(\tau)$ models the intermittency. The output $\Y^n$ of dimension $N$ is given as
\[ \y_\ell = s_\ell \H \x_\ell + \z_\ell, \quad{} \forall\ell, \]
where $\H \in \mathbb C^{N \times M}$ is the channel matrix, and $\Z^n$ is a noise sequence, i.i.d., with $\Z_\ell \sim \CN(\bm{0}, \sigma^2 \I_{N})$.
\begin{lemma}%
  \label{thm:prelim-capacity-intermittent-gaussian-mimo-channel}
  The capacity of the intermittent Gaussian MIMO point-to-point (P2P) channel is
  \begin{IEEEeqnarray}{rCl}
    C_{\mathrm{P2P}}
    &=& \MInf(\X ; \Y \mid S)   \nonumber\\
    &=& \tau \MInf(\X ; \Y \mid S = 1) + \ntau \MInf(\X ; \Y \mid S = 0)   \nonumber\\
    &=& \tau \log\det\left( \I_N + \frac{P}{\sigma^2} \H \H^\herm \right).
    \nonumber
  \end{IEEEeqnarray}
\end{lemma}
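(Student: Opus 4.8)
The plan is to invoke the capacity formula for a channel with state known at the decoder, recited just above, and reduce the problem to an ordinary Gaussian MIMO computation. Conditioned on the state, the channel $\y_\ell = s_\ell \H \x_\ell + \z_\ell$ is memoryless, and the i.i.d.\ state sequence $S^n$ (available at the decoder) plays exactly the role of the DM state $S$. The recited result therefore gives $C_{\mathrm{P2P}} = \max_{p_\X} \MInf(\X;\Y\mid S)$, the maximum being over single-letter input distributions whose covariance $\Q \eqdef \Exp[\X\X^\herm]$ obeys $\operatorname{tr}(\Q)\le P$; the reduction of the block power constraint to this single-letter form follows from concavity of the objective. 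This establishes the first equality, the subsequent lines evaluating the maximizing mutual information.

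Next I would expand the conditional mutual information by averaging over the two states. Since $S \sim \Bern(\tau)$ is independent of $\X$,
\begin{IEEEeqnarray}{rCl}
  \MInf(\X;\Y\mid S) &=& \tau\, \MInf(\X;\Y\mid S=1) + \ntau\, \MInf(\X;\Y\mid S=0), \nonumber
\end{IEEEeqnarray}
which is the second equality, and the two terms are then handled separately. For $S=0$ the output reduces to $\y = \z$, independent of the input, so $\MInf(\X;\Y\mid S=0)=0$ and the erased slots contribute nothing. For $S=1$ the channel is the standard Gaussian MIMO link $\y = \H\x + \z$ with $\z \sim \CN(\bm 0, \sigma^2 \I_N)$, whose mutual information is maximized by a zero-mean circularly-symmetric Gaussian input and equals $\log\det(\I_N + \tfrac{1}{\sigma^2}\H\Q\H^\herm)$.

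Collecting the two contributions reduces the problem to $C_{\mathrm{P2P}} = \tau \max_{\operatorname{tr}(\Q)\le P} \log\det(\I_N + \tfrac{1}{\sigma^2}\H\Q\H^\herm)$. Because the encoder sees no state, a single covariance must serve both states, but since the $S=0$ term vanishes regardless, this is precisely the non-intermittent Gaussian MIMO problem scaled by $\tau$. The stated closed form corresponds to the isotropic full-power choice $\Q = P\I_M$, and I would argue this is the relevant input: it is full rank and loads every dimension, hence attains the pre-log factor $\Rank(\H) = \min(M,N)$ and is degrees-of-freedom optimal, which is all that the remainder of the paper requires.

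The step I expect to need the most care is the input optimization in the last paragraph. That a Gaussian input is optimal for the $S=1$ term is classical, but pinning down the exact covariance is delicate: under a strict sum-power constraint $\operatorname{tr}(\Q)\le P$ the true maximizer is the water-filling solution rather than the isotropic $\Q = P\I_M$ (whose trace is $MP$), so the clean expression should be read either under a per-dimension power normalization or, in keeping with the degrees-of-freedom focus, as a DoF-optimal representative whose high-SNR pre-log $\tau\,\Rank(\H)$ is what matters downstream. I would make the chosen convention explicit to avoid this factor-of-$M$ ambiguity.
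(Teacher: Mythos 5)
Your proposal is correct and matches the paper's (implicit) argument exactly: the paper gives no proof beyond citing the capacity of a channel with state known at the decoder, \cite[eq.~(7.2)]{ElgamalKim}, and then evaluating $\MInf(\X;\Y\mid S)$ by conditioning on the two states, which is precisely what you do. Your closing caveat about the covariance normalization is well taken --- the displayed closed form with $\frac{P}{\sigma^2}\H\H^\herm$ implicitly ignores the trace constraint versus water-filling distinction, which is harmless at the DoF accuracy the paper needs but worth flagging as you did.
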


Now we turn to the intermittent Gaussian MIMO TWC, where two nodes communicate full-duplex over a bidirectional intermittent Gaussian MIMO channel and the intermittency state is
known strictly causally at the encoders
and decoders, i.e., only intermittency states $s^{\ell-1}$ can be used in encoding.
For simplicity, let the channel be reciprocal,
such that the channel matrices for the two directions are $\H$ and $\H^\herm$, respectively.
As the outputs $(\Y_1^n, S^n)$ and $(\Y_2^n, S^n)$ at nodes $1$ and $2$, with
\begin{IEEEeqnarray}{rClll}
  \y_{1,\ell} &=& s_\ell \H^\herm &\x_{2,\ell} + \z_{1,\ell},
  \quad
  \Z_{1,\ell} \sim \CN(\bm{0}, \sigma^2 \I_{M}),&
  \quad
   \forall\ell,
     \nonumber\\
  \y_{2,\ell} &=& s_\ell \H       &\x_{1,\ell} + \z_{2,\ell},
  \quad
  \Z_{2,\ell} \sim \CN(\bm{0}, \sigma^2 \I_{N}),&
  \quad
   \forall\ell,
     \nonumber
\end{IEEEeqnarray}
and $\Z_{1,\ell}$ and $\Z_{2,\ell}$ independent,
depend only on the inputs $\X_2^n$ and $\X_1^n$ of the respective other node, the channel law distributes as
\begin{IEEEeqnarray}{rCl}
    \IEEEeqnarraymulticol{3}{l}{
      p_{\Y_1^n \Y_2^n S^n | \X_1^n \X_2^n}(\y_1^n,\y_2^n,s^n|\x_1^n \x_2^n)
    }\nonumber\\
    \quad&=&
      \prod_{\ell=1}^n p_{\Y_1|\X_2 S}(\y_{1,\ell}|\x_{2,\ell},s_\ell) p_{\Y_2|\X_1 S}(\y_{2,\ell}|\x_{1,\ell},s_\ell) p_S(s_\ell).
      \nonumber
\end{IEEEeqnarray}
One can readily show that Shannon's outer bound
\cite[p.~447]{ElgamalKim}
holds despite the shared state $S$ of
$\X_{2} \leadsto \Y_{1}$ and $\X_{1} \leadsto \Y_{2}$.
The bounds are achieved by coding independently in both directions
for an intermittent Gaussian MIMO P2P channel.
Hence, the capacity region of this TWC is the rectangle
\[
  \mathcal C_{\mathrm{TWC}} = \{ (R_1, R_2) \in \mathbb R_+^2 \mid R_1 \leq C_{\mathrm{P2P}}, R_2 \leq C_{\mathrm{P2P}} \},
\]
and the sum-capacity of the TWC is twice the capacity of the P2P channel,
\[
  C_{\mathrm{TWC}} = 2 C_{\mathrm{P2P}}.
\]

We presented the capacity of the intermittent Gaussian MIMO P2P channel and TWC, which follow from established results for channels with state.
In particular, intermittency affects these channels in that it linearly scales the capacity of the non-intermittent channel with the fraction $\tau$ of time in which the channel is non-intermittent.
While the impact of intermittency is straightforward in the TWC, we shall see that this is not the case in larger multi-way communication channels.
The smallest (in terms of number of nodes) scenario larger than the TWC where this can be demonstrated is the 3WC.
Hence, the impact of intermittency on the 3WC is studied in this paper.
In the following, we introduce the system model of the intermittent 3WC.

\subsection{The MIMO Three-Way Channel with Node-Intermittency}
\label{sec:system-model}

Throughout this section we assume $i, j, k \in \{ 1, 2, 3 \}$ and mutually distinct, and $n$ is the number of channel accesses.
The MIMO 3WC with node-intermittency is comprised of three terminals $1$, $2$ and $3$ communicating with each other in full-duplex mode over a shared medium. Each node $i$ has two messages $w_{ij}$ and $w_{ik}$ ($\w_i \eqdef (w_{ij},w_{ik})$) to be delivered to the remaining nodes $j$ and $k$, and two messages $\hat{w}_{ji}$ and $\hat{w}_{ki}$ to be decoded from the received signals (Fig.~\ref{fig:system-model-messages}). Each message $w_{ij}$ is a realization of the random variable $W_{ij}$. All random variables $W_{ij}$ are independent.
In the \mbox{(node-)}intermittent 3WC one link is always available and two links are jointly intermittent with probability of being available $\tau$ (Fig.~\ref{fig:intermittency-types-nodeint} and \ref{fig:system-model-messages}).

Our objects under investigation (sum-DoF and DoF region of the intermittent 3WC -- both are introduced in the sequel) depend on how the numbers of antennas at each node relate to each other, i.e., whether the intermittent node has most, second most, or least antennas. To reduce the number of cases one has to analyze, yet study the system without loss of generality (w.l.o.g.), two symmetries come to mind that can be exploited: Either a) fix a certain node to be intermittent (e.g., node $1$ is intermittent), and investigate all possible relations among the numbers of antennas, or b) fix a relation between the numbers of antennas, and allow any one of the three nodes to be intermittent. The respective remaining cases follow by renaming. Preliminary work \cite{ChaabanSezginAlouini_ISIT17_NodeIntermittency} fixed node $1$ to be intermittent and further assumed a relation on the numbers of antennas, which is not w.l.o.g. We fix node $1$ to be intermittent, but allow for any combination of numbers of antennas, which is approach a) and w.l.o.g.

Node $i$ is equipped with $M_i$ antennas that are used for reception and transmission simultaneously. We assume channel accesses of the nodes are synchronized and time-discretized: At time instance $\ell$ the transmit signal $\x_{i,\ell} \in \mathbb C^{M_i}$ is a realization of a random vector $\X_{i,\ell}$ satisfying the power constraint
\begin{IEEEeqnarray}{rCl}
  \sum_{\ell=1}^n \Exp[\| \X_{i,\ell} \|_2^2] \leq n P.
  \nonumber
\end{IEEEeqnarray}
The receive signals $\y_{i,\ell} \in \mathbb C^{M_i}$ are
\begin{IEEEeqnarray}{rCCCCCCCl}
  \y_{1,\ell} &=&  && s_{\ell} \H_{21} \x_{2,\ell} &+& s_{\ell} \H_{31} \x_{3,\ell} &+& \z_{1,\ell}, \nonumber\\
  \y_{2,\ell} &=& s_{\ell} \H_{12} \x_{1,\ell} &&  &+& \H_{32} \x_{3,\ell} &+& \z_{2,\ell}, \nonumber\\
  \y_{3,\ell} &=& s_{\ell} \H_{13} \x_{1,\ell} &+& \H_{23} \x_{2,\ell} && &+& \z_{3,\ell}. \nonumber
\end{IEEEeqnarray}
$\H_{ij} \in \mathbb C^{M_j \times M_i}$ represents the channel matrix from node $i$ to node $j$ and is constant over time and known to all nodes in advance. The elements of these matrices are drawn independently from the same continuous distribution, such that $\Rank(\H_{ij}) = \min(M_j, M_i)$ almost surely. $\z_{i,\ell} \in \mathbb C^{M_i}$ is a realization of the noise process $\Z_{i,\ell} \sim \CN(\bm{0}, \sigma^2 \I_{M_i})$ independent and identically distributed (i.i.d.) with respect to (w.r.t.) $\ell$, and $s_{\ell} \in \{0,1\}$ is a realization of the intermittency state process $S_{\ell}$. $S_{\ell} \sim \Bern(\tau)$ is assumed to be i.i.d. w.r.t. $\ell$.
The state sequence $s^n$ is known strictly causally at all nodes
(i.e., in time instance $\ell$ all nodes know $s^{\ell-1}$), because every node can correctly estimate $s_\ell$ from its $y_{i,\ell}$ with very high probability, e.g., based on the received signal strength.
Due to the physical properties of the channel and the distinct receivers, all random variables $\H_{ij}$, $\Z_i^n$ and $S^n$ are assumed to be mutually independent.
We remark that our analysis continues to hold if the noise at different receivers is correlated.
We denote $\rho \eqdef \frac{P}{\sigma^2}$ and call it SNR (signal-to-noise-power-ratio) throughout the paper.

\begin{figure}[!t]
  \centering
  \begin{tikzpicture}[
      mynode/.append style={
        rectangle,
        draw,
        inner sep=2mm,
      },
      mymsg/.append style={
      },
      mychannel/.append style={
        thick,
        shorten >=\pgflinewidth*3,
        shorten <=\pgflinewidth*3,
      }
    ]

    \node [mynode] (n2) at (-2,0) {Node 2};
    \node [mynode] (n3) at (+2,0) {Node 3};
    \node [mynode] (n1) at (0,2) {Node 1};

    \draw[latex-latex,mychannel,densely dotted] (n1) -- (n2) node[midway] (e12) {};
    \draw[latex-latex,mychannel,densely dotted] (n1) -- (n3) node[midway] (e13) {};
    \draw[latex-latex,mychannel] (n2) -- (n3);

    \draw[{Rays[n=4]}-{Rays[n=4]},densely dotted,shorten >=-2mm,shorten <=-2mm] (e12) -- (e13) node[midway,fill=white] {$S$};

    \node [mymsg,above left=5mm and -6mm of n1] (n1in) {$W_{12},W_{13}$};
    \node [mymsg,above right=5mm and -6mm of n1] (n1out) {$\hat{W}_{21},\hat{W}_{31}$};
    \draw[-latex] (n1in) -- (n1);
    \draw[-latex] (n1) -- (n1out);

    \node [mymsg,below left=5mm and -6mm of n2] (n2in) {$W_{21},W_{23}$};
    \node [mymsg,below right=5mm and -6mm of n2] (n2out) {$\hat{W}_{12},\hat{W}_{32}$};
    \draw[-latex] (n2in) -- (n2);
    \draw[-latex] (n2) -- (n2out);

    \node [mymsg,below left=5mm and -6mm of n3] (n3in) {$W_{31},W_{32}$};
    \node [mymsg,below right=5mm and -6mm of n3] (n3out) {$\hat{W}_{13},\hat{W}_{23}$};
    \draw[-latex] (n3in) -- (n3);
    \draw[-latex] (n3) -- (n3out);

  \end{tikzpicture}
  \caption{Every node $i$ in the MIMO 3WC with node-intermittency sends messages $w_{ij}$ and $w_{ik}$ to nodes $j$ and $k$, respectively (with $i, j, k \in \{1, 2, 3\}$ mutually distinct). W.l.o.g. node $1$ is assumed to be intermittent.}
  \label{fig:system-model-messages}
\end{figure}
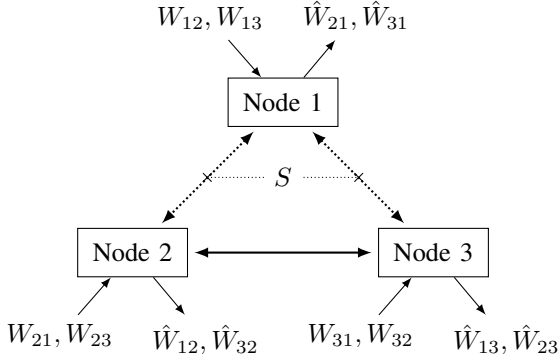

The messages $W_{ij}$ are each uniformly distributed over $\mathcal W_{ij} = \{1,\ldots, |\mathcal W_{ij}(\rho)|\}$. Using an encoding function $\mathcal E_{i,\ell}$ node $i$ constructs $\x_{i,\ell}$ either from $(w_{ij}, w_{ik})$ (non-adaptive encoding) or from $(w_{ij}, w_{ik}, \y_i^{\ell-1}, s^{\ell-1})$ (adaptive encoding).
After $n$ transmissions (where $n$ is the code length), node $i$ decodes its desired messages using a decoding function $\mathcal F_i$ to obtain $(\hat{w}_{ji}, \hat{w}_{ki}) = \mathcal F_i(s^n, \y_i^n, w_{ij}, w_{ik})$. Transmission is considered successful if all messages are recovered successfully ($w_{ij} = \hat{w}_{ij}$), otherwise an error is reported. The average over all messages of the error probability is denoted by $P_{\mathrm{e},n}$.

A rate tuple
\begin{IEEEeqnarray}{rCl}
  \bm{R}(\rho) &\eqdef& \Big( R_{12}(\rho),R_{13}(\rho),R_{21}(\rho), \nonumber\\
    && \qquad R_{23}(\rho),R_{31}(\rho),R_{32}(\rho) \Big) \in \mathbb R_+^6
    \nonumber
\end{IEEEeqnarray}
with $R_{ij}(\rho) = \frac{\log(|\mathcal W_{ij}(\rho)|)}{n}$ is said to be achievable if there exists a sequence of encoder-decoder pairs for increasing code length $n$, where $P_{\mathrm{e},n} \to 0$ as $n \to \infty$. The capacity region $\mathcal C(\rho)$ is the set of all achievable rate tuples.

The DoF region $\mathcal D$ is the set of achievable DoF tuples
\begin{IEEEeqnarray}{rCl}
  \bm{d} &\eqdef& (d_{12},d_{13},d_{21},d_{23},d_{31},d_{32}) \in \mathbb R_+^6
  \nonumber
\end{IEEEeqnarray}
defined as in \cite{JafarShamai_XChannel}%
, i.e.,
\begin{multline}
  \mathcal D \eqdef \Bigg\{
    (d_{12},...,d_{32}) \in \mathbb R_+^6 \;\bigg\vert\;
    \forall (\beta_{12},...,\beta_{32}) \in \mathbb R_+^6: \\
    \qquad
    \sum_{i,j} \beta_{ij} d_{ij} \leq \limsup_{\rho\to\infty} \sup_{\bm{R}(\rho)\in\mathcal C(\rho)} \sum_{i,j} \beta_{ij} \frac{R_{ij}(\rho)}{\logp}
    \Bigg\}.
    \nonumber
\end{multline}
Roughly speaking, if a rate tuple $\bm{R}(\rho)$ as a function of $\rho$ is achievable, i.e., $\bm{R}(\rho)\in\mathcal{C}(\rho)$ for all $\rho > 0$, then the DoF tuple $\bm{d}$ with $d_{ij} = \limsup_{\rho \to \infty}\frac{R_{ij}(\rho)}{\logp}$ is achievable. We denote DoF regions with $\mathcal D$ and use suitable subscripts when further restricting assumptions apply (e.g., under non-adaptive encoding). We define the corresponding sum-DoF as $d_{\mathrm{sum}} = \max_{\bm{d} \in \mathcal D} \sum_{i,j \in\{1,2,3\}, i \neq j} d_{ij}$. The DoF perspective only captures rate contributions that
are non-vanishing relative to $\logp$
as we let $\rho \to \infty$. It neglects vanishing rate portions $f(\rho)$ that grow sublinear in $\logp$ and are therefore $\ologp$, i.e., where $\lim_{\rho \to \infty} \frac{f(\rho)}{\logp} = 0$.

\section{Main Results}
\label{sec:main-results}

\begin{table*}[!t]
  \caption{Overview of Main Results}
  \label{tab:overview-main-results}
  \centering
  \begin{tabular}{llll}
    \toprule
    Channel & Criterion & Necessity of Adaptation & Coding Scheme \\
    \midrule
    Node-intermittent 3WC & Sum-DoF     & \begin{tabular}[t]{@{}l@{}}
                                            Non-adaptive encoding suffices \\
                                            (Theorem~\ref{thm:node-intermittency-sum-dof})
                                          \end{tabular}
                                        & \begin{tabular}[t]{@{}l@{}}
                                            ZF/IA/EC-based \\
                                            (Section~\ref{sec:node-intermittency-achievability}, Theorems~\ref{thm:node-intermittency-dof-region-ib} and \ref{thm:node-intermittency-sum-dof})
                                          \end{tabular} \\
                          & DoF region  & \begin{tabular}[t]{@{}l@{}}
                                            Adaptive encoding required \\
                                            (Theorem~\ref{thm:node-intermittency-dof-region})
                                          \end{tabular}
                                        & \begin{tabular}[t]{@{}l@{}}
                                            Counterexample based on decode-forward relaying \\
                                            (Section~\ref{sec:node-intermittency-dof-region-counterexample-achievability}) 
                                          \end{tabular} \\
    Non-intermittent 3WC  & Sum-DoF     & \begin{tabular}[t]{@{}l@{}}
                                            Non-adaptive encoding suffices \\
                                            (Theorem~\ref{thm:no-intermittency-dof-region}, Corollary~\ref{thm:no-intermittency-sum-dof})
                                          \end{tabular}
                                        & \begin{tabular}[t]{@{}l@{}}
                                            ZF/IA-based \\
                                            (Sections~\ref{sec:node-intermittency-achievability}, \ref{sec:no-intermittency-achievability}
                                            and
                                            \ref{sec:sum-dof-nonintermittent})
                                          \end{tabular} \\
                          & DoF region  & \begin{tabular}[t]{@{}l@{}}
                                            Non-adaptive encoding suffices \\
                                            (Theorem~\ref{thm:no-intermittency-dof-region})
                                          \end{tabular}
                                        & \begin{tabular}[t]{@{}l@{}}
                                            ZF/IA-based \\
                                            (Sections~\ref{sec:node-intermittency-achievability} and \ref{sec:no-intermittency-achievability})
                                          \end{tabular} \\
    \bottomrule
  \end{tabular}
\end{table*}

In this section, we summarize and discuss the main results of this paper,
listed in Table~\ref{tab:overview-main-results}.
We denote the DoF region of the intermittent 3WC under adaptive encoding by $\mathcal D^\mathrm{I}$, the DoF region of the intermittent 3WC under non-adaptive encoding by $\mathcal D_{\overline{\mathrm{A}}}^\mathrm{I}$, and the sum-DoF by $d_{\mathrm{sum}}^\mathrm{I}$. Obviously, $\mathcal D_{\overline{\mathrm{A}}}^\mathrm{I} \subseteq \mathcal D^\mathrm{I}$. We denote the DoF region of the non-intermittent 3WC by $\mathcal D^\mathrm{N}$ and the sum-DoF by $d_{\mathrm{sum}}^\mathrm{N}$. In Section~\ref{sec:node-intermittency-achievability} we devise a non-adaptive encoding scheme whose achievable DoF region (DoF region inner bound) we denote by $\mathcal D_{\mathrm{IB},\overline{\mathrm{A}}}^\mathrm{I}$. We start with this achievable DoF region given in the following theorem.

\begin{theorem}[DoF Region Inner Bound for Node-Intermittent 3WC]
  \label{thm:node-intermittency-dof-region-ib}
  All DoF tuples $\bm{d} \in \mathcal D_{\mathrm{IB},\overline{\mathrm{A}}}^\mathrm{I}$ satisfying the following set of inequalities are achievable in the node-intermittent 3WC using non-adaptive encoding:
  \begin{IEEEeqnarray}{rCl}
    \max\{ d_{12} + d_{13}, d_{21} + d_{31} \} &\leq& \tau M_{1} \nonumber\\
    \max\{ d_{21} + \tau d_{23}, d_{12} + \tau d_{32} \} &\leq& \tau M_{2} \nonumber\\
    \max\{ d_{31} + \tau d_{32}, d_{13} + \tau d_{23} \} &\leq& \tau M_{3} \nonumber\\
    \max\{ d_{12} + d_{13} + \tau d_{23}, d_{21} + d_{31} + \tau d_{32} \} &\leq& \tau \max\left\{ M_{1}, M_{3} \right\} \nonumber\\
    \max\{ d_{12} + d_{13} + \tau d_{32}, d_{21} + d_{31} + \tau d_{23} \} &\leq& \tau \max\left\{ M_{1}, M_{2} \right\} \nonumber\\
    \max\{ d_{12} + d_{31} + \tau d_{32}, d_{21} + d_{13} + \tau d_{23} \} &\leq& \tau \max\left\{ M_{2}, M_{3} \right\} \nonumber\\
    \min \{ d_{12}, d_{13}, d_{21}, d_{23}, d_{31}, d_{32} \} &\geq& 0 \nonumber
  \end{IEEEeqnarray}
  Therefore, $\mathcal D_{\mathrm{IB},\overline{\mathrm{A}}}^\mathrm{I}$ constitutes an inner bound on the DoF region of the node-intermittent 3WC, such that
  \[ \mathcal D_{\mathrm{IB},\overline{\mathrm{A}}}^\mathrm{I} \subseteq \mathcal D_{\overline{\mathrm{A}}}^\mathrm{I} \subseteq \mathcal D^\mathrm{I}. \]
\end{theorem}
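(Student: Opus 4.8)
The plan is to exhibit an explicit non-adaptive scheme that combines erasure coding (EC) across time with spatial zero-forcing (ZF) and interference alignment (IA), and to show that its achievable stream tuples, once translated into DoF, fill exactly the stated polytope.

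First I would \emph{decouple intermittency from the spatial design} via EC. Since the state process $S^n$ is i.i.d.\ and, by assumption, recoverable at every receiver, each intermittent link $1\leftrightarrow 2$ and $1\leftrightarrow 3$ is an erasure link whose erasure pattern is known at the decoder. Transmitting a fixed codebook (which keeps the encoder non-adaptive, as the transmitter need not know $s_\ell$) and coding over the erasures then yields, by the same argument as in Lemma~\ref{thm:prelim-capacity-intermittent-gaussian-mimo-channel}, a DoF equal to $\tau$ times the spatial multiplexing gain of the underlying non-erased MIMO link. Hence I can design a single per-slot linear scheme as though all links were present, and afterwards scale every intermittent-link contribution by $\tau$. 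Concretely, assign $t_{ij}$ transmit streams to message $w_{ij}$, so that $d_{ij}=\tau t_{ij}$ on the intermittent links $\{12,13,21,31\}$ and $d_{ij}=t_{ij}$ on the always-on link $\{23,32\}$, subject to the per-node antenna budgets $t_{ij}+t_{ik}\le M_i$. Substituting these identities turns the antenna budgets into the first (transmit) terms of the per-node inequalities: $t_{12}+t_{13}\le M_1$ becomes $d_{12}+d_{13}\le\tau M_1$, and $t_{21}+t_{23}\le M_2$ becomes $d_{21}+\tau d_{23}\le\tau M_2$, and dually for the receive terms.

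Second I would carry out the \emph{spatial design} so that every receiver can resolve its two desired stream bundles in the active slots, where node~$1$ is present and interference is worst. At each node I would ZF the stream destined for one neighbor out of the other neighbor's observation whenever the transmitter has surplus antennas, and where ZF is infeasible I would align the two interfering bundles arriving at a common receiver into a shared subspace; genericity of the $\H_{ij}$ (full rank a.s.) keeps the desired bundles linearly independent of the (aligned) interference. Translating the per-receiver decodability conditions into inequalities on the $t_{ij}$ and again substituting $d_{ij}=\tau t_{ij}$ or $d_{ij}=t_{ij}$ should reproduce the three pairwise constraints, whose right-hand sides $\tau\max\{M_i,M_j\}$ reflect spreading the aligned interference across, or nulling it with, the \emph{larger} of the two relevant arrays; for example the receiver-$3$ condition $t_{12}+t_{13}+t_{23}\le\max\{M_1,M_3\}$ becomes $d_{12}+d_{13}+\tau d_{23}\le\tau\max\{M_1,M_3\}$. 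Finally, since the set of feasible stream tuples is a polytope and time-sharing among such schemes is itself non-adaptive and convexity-preserving, the whole region $\mathcal D_{\mathrm{IB},\overline{\mathrm{A}}}^\mathrm{I}$ is achievable, giving the claimed chain $\mathcal D_{\mathrm{IB},\overline{\mathrm{A}}}^\mathrm{I}\subseteq\mathcal D_{\overline{\mathrm{A}}}^\mathrm{I}\subseteq\mathcal D^\mathrm{I}$.

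The main obstacle is the spatial step: proving that, for generic channels, a \emph{single} choice of beamformers simultaneously satisfies the ZF/IA conditions at all three receivers, and that the induced $t_{ij}$-inequalities are exactly the six stated ones with no extra constraints. The delicate part is the case split between relying on ZF (using a transmitter's excess antennas) and on IA (using the larger receive array), which is precisely what produces the $\max\{M_i,M_j\}$ terms; one must verify tightness and mutual consistency of the dimension counts across receivers, most cleanly by checking achievability at the vertices of the polytope and invoking time-sharing for the interior.
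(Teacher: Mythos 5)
Your plan follows the same route as the paper: use erasure coding over the known i.i.d.\ state to reduce each intermittent link to $\tau$ times its non-erased spatial multiplexing gain (the paper's Lemma~\ref{thm:prelim-capacity-intermittent-gaussian-mimo-channel} applied to the post-coded streams), design a per-slot ZF/IA beamforming scheme for generic channels, count stream dimensions per node, and convert stream counts to DoF. The first three inequalities indeed drop out of the per-node transmit/receive budgets exactly as you compute them.

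The gap is in the second step, and it is the bulk of the actual proof. First, a single stream count $t_{ij}$ per message with an all-or-nothing choice between ZF and IA is too coarse: the paper must split each message into substreams $a_{ij}^\ZF$ and $a_{ij}^\IA$ (a single DoF tuple can require both simultaneously, e.g.\ the worked example with $a_{23}^\ZF=a_{23}^\IA=2$), with the ZF part capped by the transmitter's antenna surplus via \eqref{eq:zf-condition-dofs} and the aligned overlap $\gamma_i$ capped by $\min\{a_{jk}^\IA,a_{kj}^\IA,(M_j+M_k-M_i)^+\}$ via \eqref{eq:condition-postcoder-gamma}. Second, the three inequalities with right-hand side $\tau\max\{M_i,M_j\}$ are \emph{not} single-receiver decodability conditions as you suggest (e.g.\ $t_{12}$ is neither transmitted by nor desired at node~$3$, so ``$t_{12}+t_{13}+t_{23}\le\max\{M_1,M_3\}$'' is not what receiver~$3$ sees); in the paper they emerge only after Fourier--Motzkin elimination of all auxiliary variables $a_{ij}^\ZF$, $a_{ij}^\IA$, $\gamma_i$ from the joint system of transmit budgets \eqref{eq:condition-dofs-tx-subspaces}, receive budgets \eqref{eq:condition-postcoder-linindep}, and the ZF/alignment caps, carried out separately for each of twelve antenna-ordering cases to resolve the $(\cdot)^+$ expressions, and then recombined into the compact $\max\{\cdot\}$ form. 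Your proposal asserts that this projection yields exactly the six stated inequalities ``with no extra constraints'' but does not perform it; that verification, together with the symbol-extension argument needed to achieve non-integer corner points before invoking time-sharing, is precisely what is missing.
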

\begin{proof}
  This theorem follows by the ZF/IA/EC-based construction in Section~\ref{sec:node-intermittency-achievability}.
\end{proof}

Subsequently we investigate whether this non-adaptive scheme is optimal. For the intermittent 3WC it turns out to be sum-DoF optimal, establishing the sum-DoF of the intermittent 3WC and the fact that adaptive encoding is not necessary to achieve it.

\begin{theorem}[Sum-DoF of Node-Intermittent 3WC]
  \label{thm:node-intermittency-sum-dof}
  A non-adaptive encoding scheme achieves the sum-DoF of the node-intermittent 3WC given by
  \begin{multline}
    d_{\mathrm{sum}}^\mathrm{I} = 2 \ntau \min\{ M_2, M_3 \} + 2 \tau \Big( M_1+M_2+M_3 \\ -\min\{M_1,M_2,M_3\}-\max\{M_1,M_2,M_3\} \Big). \nonumber
  \end{multline}
\end{theorem}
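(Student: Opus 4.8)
The plan is to establish the two matching halves separately: a lower bound obtained by evaluating the achievable region of Theorem~\ref{thm:node-intermittency-dof-region-ib}, and a matching upper bound via a genie-aided converse that holds even under adaptive encoding. I would first record that $M_1+M_2+M_3-\min\{M_1,M_2,M_3\}-\max\{M_1,M_2,M_3\}$ is exactly the median of $M_1,M_2,M_3$, so the claim reads $d_{\mathrm{sum}}^{\mathrm I}=2\ntau\min\{M_2,M_3\}+2\tau\,\mathrm{med}\{M_1,M_2,M_3\}$. This form already reveals the intended mechanism: in the $\ntau$-fraction of channel uses where node $1$ is disconnected, only the always-on link $2\leftrightarrow 3$ survives and carries at most $2\min\{M_2,M_3\}$ DoF, whereas in the $\tau$-fraction where all links exist the system is a non-intermittent MIMO 3WC of sum-DoF $2\,\mathrm{med}\{M_1,M_2,M_3\}$.

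For achievability I would show the right-hand side lies inside $\mathcal D_{\mathrm{IB},\overline{\mathrm A}}^{\mathrm I}$. Concretely, I would solve the linear program $\max\sum_{i\neq j}d_{ij}$ subject to the six inequalities of Theorem~\ref{thm:node-intermittency-dof-region-ib}, splitting into the finitely many cases for the ordering of $M_1,M_2,M_3$ (i.e.\ whether the intermittent node $1$ has the most, the median, or the fewest antennas), and verify that the optimum equals $2\ntau\min\{M_2,M_3\}+2\tau\,\mathrm{med}\{M_1,M_2,M_3\}$ in every case. A natural feasible point places the $2\leftrightarrow 3$ exchange on the off-slots and the node-$1$ traffic on the on-slots; checking this tuple against the inequalities is routine once the antenna ordering is fixed.

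For the converse I would upper bound $\sum_{i\neq j}R_{ij}$ under adaptive encoding by a genie argument conditioned on the realized state sequence $S^n$. Applying Fano at each destination and supplying each receiver its own messages as side information gives, for each $j\in\{1,2,3\}$ with $\{i,k\}=\{1,2,3\}\setminus\{j\}$, the bound $n(R_{ij}+R_{kj})\le \MInf(W_{ij},W_{kj};Y_j^n\mid S^n,W_{ji},W_{jk})+n\epsilon_n$, and summing over $j$ accounts for all six messages. I would then partition the block by the state into the $\approx\tau n$ on-slots and $\approx\ntau n$ off-slots and bound the two contributions: in the off-slots only the $2\leftrightarrow 3$ MIMO link is active, capping the exchange between nodes $2$ and $3$ at $2\min\{M_2,M_3\}$ DoF, while in the on-slots the bound reduces to the non-intermittent 3WC converse of $2\,\mathrm{med}\{M_1,M_2,M_3\}$ DoF, obtained by handing the antenna-richer node a neighbor's received signal and reducing the pairwise flows to rank counts on the channel matrices; a case analysis on the antenna ordering matches each $\tau\,\mathrm{med}$ term to the appropriate $\tau\max\{M_i,M_j\}$-type constraint.

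The hard part will be the converse step that decouples the two states under adaptation. Since encoders may use past outputs, node $3$ can overhear $w_{12}$ during an on-slot and relay it to node $2$ during an off-slot, so the off-slot $2\leftrightarrow 3$ channel carries not only native $w_{23},w_{32}$ traffic but also relayed node-$1$ traffic; this coupling is precisely what makes adaptation strictly beneficial for the DoF region. The delicate point is to prove it does not help the \emph{sum}-DoF: I would absorb the relayed flow by charging it consistently to the on-slot budget, where node $1$ had to emit it originally, using the genie that discloses the relevant overheard signals so that nothing is double-counted and the two per-state caps add up exactly to the claimed value. Showing that this bookkeeping leaves no slack in each antenna-ordering case is the crux of the argument.
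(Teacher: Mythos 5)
Your achievability half matches the paper: the lower bound is obtained exactly as you describe, by solving the sum-DoF linear program over the region of Theorem~\ref{thm:node-intermittency-dof-region-ib} separately for each ordering of $M_1,M_2,M_3$, and your rewriting of the formula as $2\ntau\min\{M_2,M_3\}+2\tau\,\mathrm{med}\{M_1,M_2,M_3\}$ is a correct and useful reading of the result.

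The converse, however, has a genuine gap in its starting point. You propose to apply Fano at each destination $j$ to bound $R_{ij}+R_{kj}$ given $(S^n,W_{ji},W_{jk})$ and then sum over $j$. That is precisely the receiver-side cut-set bound, and the paper shows explicitly (Section~\ref{sec:necessity-of-genieaided-upper-bounds}) that these bounds are strictly too loose: for $M_1=4$, $M_2=M_3=2$, $\tau=1/2$ the three cut-set constraints admit the tuple $(1,1,1,1,1,1)$ with sum-DoF $6$, whereas the true sum-DoF is $4$. No amount of per-state bookkeeping on top of those three inequalities can close this gap, because the deficiency is in how the messages are grouped, not in how the states are split. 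The correct converse partitions the six messages into two \emph{three-message} sums $d_{ij}+d_{kj}+d_{ki}$ and $d_{ik}+d_{jk}+d_{ji}$, where $i$ is the node with the most antennas; a genie gives node $j$ (resp.\ $k$) one extra message, a noise-correction term, and where needed extra antenna or intermittency-compensating observations, so that it can iteratively reconstruct $\y_i^n$ and decode the additional message $w_{ki}$ (resp.\ $w_{ji}$) \emph{not intended for it}. Each partial sum is then bounded by a single mutual information term of the form $\sum_\ell \MInf(\X_{j,\ell}\X_{k,\ell};\hat{\Y}_{\cdot,\ell}\mid S_\ell)\leq n(\tau M_{(2)}+\ntau M_{(3)})\logp+n\,\ologp$, and the two bounds add to the claimed value. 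Your phrase about ``handing the antenna-richer node a neighbor's received signal'' gestures at this, but it cannot be reached from the two-message-per-receiver Fano inequalities you wrote down, and your proposed fix (charging relayed traffic to the on-slot budget) addresses adaptation across states rather than the actual source of slack, which is that cut-set bounds never force any node to act as an intermediary for a message it does not decode.
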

\begin{proof}
  The theorem follows from achievability and converse results developed in Sections~\ref{sec:nodeint-sum-dof-lb} and \ref{sec:nodeint-sum-dof-ub} (Lemmas~\ref{thm:node-intermittency-sum-dof-lb} and \ref{thm:node-intermittency-sum-dof-ub}).
\end{proof}

Theorem~\ref{thm:node-intermittency-sum-dof} is based on an instrumental genie-aided upper bound, which we prove to be tighter than cut-set bounds in Section~\ref{sec:necessity-of-genieaided-upper-bounds}.
Hence, cut-set bounds alone can not describe the DoF of this network comprehensively.

The sum-DoF optimality of non-adaptive schemes, i.e., of schemes that dispense with relaying, agrees with the following intuition: Assume relaying was used for (any part of) any message, say $w_{ik}$ was relayed via node $j$. Then this message occupies communication resources on two links, $i \to j$ and $j \to k$, introducing redundancy and thus waste of resources. Since the sum-DoF criterion allows to trade DoFs among messages, one could instead use the resources used by the one relayed message $w_{ik}$ to increase the DoFs of the two non-relayed messages $w_{ij}$ and $w_{jk}$. This is possible since every node has messages for the two other nodes in the 3WC. Such a reassignment could improve resource utilization and thus sum-DoF, rendering relaying dispensable.

From a DoF region perspective however it turns out that non-adaptive schemes cannot be optimal, and therefore the presented scheme is not DoF region optimal in the intermittent 3WC.

\begin{theorem}[Necessity of Adaptive Encoding for DoF Region of Node-Intermittent 3WC]
  \label{thm:node-intermittency-dof-region}
  Adaptive encoding is required to achieve the DoF region of the node-intermittent 3WC, i.e.,
  \[ \mathcal D^\mathrm{I} \setminus \mathcal D_{\overline{\mathrm{A}}}^\mathrm{I} \neq \emptyset. \]
\end{theorem}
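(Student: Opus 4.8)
The plan is to prove the separation by exhibiting a single channel instance and a single DoF tuple that lies in $\mathcal D^\mathrm{I}$ (via an adaptive decode-forward scheme) but provably outside $\mathcal D_{\overline{\mathrm{A}}}^\mathrm{I}$. Since node $1$ is the intermittent node, the natural place to look for a relaying gain is a message \emph{originating} at node $1$, say $w_{12}$: node $3$ can overhear node $1$ during its ON periods and then forward over the always-on link $2 \leftrightarrow 3$ during node $1$'s OFF periods, a routing path that is structurally unavailable to any non-adaptive scheme (where node $3$'s signal is independent of $w_{12}$). Concretely, I would take $M_1 = M_3 = 2$, $M_2 = 1$, and any $\tau$ with $0 < \tau \le \tfrac12$ (so that $\ntau \ge \tau$), and consider the tuple $\bm d$ with $d_{12} = 2\tau$ and all other components equal to $0$.

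For adaptive achievability, I would use the following decode-forward scheme (the construction of Section~\ref{sec:node-intermittency-dof-region-counterexample-achievability}). In each ON slot node $1$ uses its two antennas to send two fresh $w_{12}$ symbols $a$ and $b$, beamforming $b$ into the one-dimensional null space of the $1 \times 2$ matrix $\H_{12}$ (possible as $M_1 = 2 > M_2 = 1$), while nodes $2$ and $3$ stay silent as transmitters. Then node $2$ receives a clean scalar copy of $a$, whereas node $3$, whose $2 \times 2$ channel $\H_{13}$ is almost surely full rank, recovers both $a$ and $b$ and buffers $b$. In each OFF slot node $1$ is disconnected, so node $2$'s receive signal contains only node $3$'s transmission; node $3$ then forwards buffered $b$-symbols over the always-on $3 \to 2$ link, which supports $\min(M_2,M_3) = 1$ stream per use. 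Over a length-$n$ block the direct path delivers $\approx \tau n$ $a$-symbols and node $3$ buffers $\approx \tau n$ $b$-symbols; since $\ntau \ge \tau$, all buffered symbols are cleared during the $\approx \ntau n$ OFF slots. Standard Gaussian codebooks drive the error probability to zero, yielding $d_{12} = \tau + \tau = 2\tau$, hence $\bm d \in \mathcal D^\mathrm{I}$.

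For non-adaptive impossibility I would argue as follows. Under non-adaptive encoding $\X_3^n$ is a function of $(w_{31},w_{32})$ only and is therefore independent of $W_{12}$, so the \emph{only} component of node $2$'s received signal that carries information about $w_{12}$ is the direct intermittent link from node $1$ into node $2$'s single antenna. Conditioning on $\X_3^n$ and on the state sequence $S^n$, the effective channel seen by $w_{12}$ is exactly an intermittent Gaussian MIMO P2P channel with $\min(M_1,M_2) = 1$, whose DoF equals $\tau$ by Lemma~\ref{thm:prelim-capacity-intermittent-gaussian-mimo-channel}. A Fano-type argument then gives $d_{12} \le \tau \min(M_1,M_2) = \tau$; this is precisely the $d_{32} = 0$ specialization of the non-adaptive converse $d_{12} + \tau d_{32} \le \tau M_2$. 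Since $2\tau > \tau$ for $\tau > 0$, we conclude $\bm d \notin \mathcal D_{\overline{\mathrm{A}}}^\mathrm{I}$, and combining the two parts gives $\bm d \in \mathcal D^\mathrm{I} \setminus \mathcal D_{\overline{\mathrm{A}}}^\mathrm{I}$, proving the theorem.

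I expect the main obstacle to be the converse, not the achievability. The heart of the matter is making rigorous that under non-adaptive encoding no information about $w_{12}$ can reach node $2$ except through the $\tau$-intermittent direct link: the independence $\X_3^n \perp W_{12}$ is the crux, and the delicate step is setting up the genie/conditioning so that what remains is \emph{exactly} the intermittent P2P channel of Lemma~\ref{thm:prelim-capacity-intermittent-gaussian-mimo-channel}, whose DoF is pinned to $\tau$, thereby ruling out the value $2\tau$ that adaptation achieves.
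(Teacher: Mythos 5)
Your proposal is correct in substance and uses the same two-ingredient strategy as the paper: (i) a genie-aided Fano bound showing that under non-adaptive encoding the potential relay's transmit signal is independent of the message of interest, so that after conditioning on it the message sees only the direct intermittent link, and (ii) a decode-forward lower bound showing that an adaptive scheme strictly exceeds that bound. The only genuine difference is the choice of message and relay. The paper bounds $d_{31}\leq \tau M_3$ by handing $\X_2^n$ to node $1$ (Section~\ref{sec:node-intermittency-dof-region-counterexample-converse}, \eqref{eq:nodeint-dof-region-restricted-ub1}) and relays $w_{31}$ through node $2$, exploiting $M_1>M_3$ at the \emph{destination}; this works for every $0<\tau<1$. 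You instead bound $d_{12}\leq\tau\min\{M_1,M_2\}$ by handing $\X_3^n$ to node $2$ and relay $w_{12}$ through node $3$, exploiting $M_1>M_2$ at the \emph{source}; this additionally needs $\tau\leq\tfrac12$ so that the always-on link can carry the overheard traffic. Both instantiations are valid. One bookkeeping caution: the inequality $d_{12}+\tau d_{32}\leq\tau M_2$ that you invoke is part of the \emph{achievable} region of Theorem~\ref{thm:node-intermittency-dof-region-ib}, not an established outer bound on all non-adaptive schemes, so the Fano/genie argument you sketch (independence of $\X_3^n$ and $W_{12}$, then $\MInf(\X_{1,\ell};\Y_{2,\ell}\mid\X_{3,\ell}S_\ell)\leq s_\ell\min\{M_1,M_2\}\logp+\ologp$) is genuinely needed; it does go through exactly as in the paper's derivation of \eqref{eq:node-intermittency-dof-region-ob-d31}.

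The one step that does not survive as literally written is your operational description of the relay: you have node $3$ stay silent whenever $s_\ell=1$ and forward only when $s_\ell=0$, which requires node $3$ to know the \emph{current} state $s_\ell$, whereas the model grants all nodes only strictly causal state knowledge $s^{\ell-1}$. This is repairable without changing the claimed DoF: invoke the Cover--El~Gamal decode-forward bound with node $3$ transmitting in every slot and node $2$ decoding jointly, $C\geq\min\{\MInf(\X_1\X_3;\Y_2\mid S),\,\MInf(\X_1;\Y_3\mid\X_3 S)\}$. With independent Gaussian inputs and $(M_1,M_2,M_3)=(2,1,2)$ the first term is $(\tau+\ntau)\logp+\ologp=\logp+\ologp$ and the second is $2\tau\logp+\ologp$, giving $d_{12}\geq\min\{1,2\tau\}=2\tau$ for $\tau\leq\tfrac12$ --- exactly the route the paper takes for its own counterexample in \eqref{eq:node-intermittency-dof-region-lb-d31}. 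With that substitution your argument is complete.
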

\begin{proof}
  The theorem follows from an upper bound on $d_{31}$ under non-adaptive encoding, presented in Section~\ref{sec:node-intermittency-dof-region-counterexample-converse}, and counterexamples of adaptive schemes exceeding this bound, devised in Section~\ref{sec:node-intermittency-dof-region-counterexample-achievability}.
\end{proof}

Theorems~\ref{thm:node-intermittency-sum-dof} and \ref{thm:node-intermittency-dof-region} show that non-adaptive encoding is sufficient to achieve sum-DoF, but not sufficient to achieve the DoF region of the intermittent 3WC.
Unlike the sum-DoF criterion, the DoF region criterion does not allow to trade DoFs among messages. Instead, e.g., `extreme' DoF tuples need to be achievable as well, that make every effort (e.g., through relaying) to maximize a single message's DoF, usually at the cost of low sum-DoF.
The result is particularly interesting in light of the fact, that non-adaptive encoding is sufficient to achieve the DoF region of the non-intermittent 3WC, which the presented non-adaptive scheme does.

\begin{theorem}[DoF Region of Non-Intermittent 3WC]
  \label{thm:no-intermittency-dof-region}
  The DoF region of the non-intermittent 3WC $\mathcal D^\mathrm{N}$ (with $M_1 \geq M_2 \geq M_3$ w.l.o.g.) is given by
  \begin{IEEEeqnarray}{rCl}
    \max\{ d_{12} + d_{13} + d_{23}, d_{12} + d_{13} + d_{32} \} &\leq& M_1 \nonumber\\
    \max\{ d_{21} + d_{31} + d_{32}, d_{21} + d_{31} + d_{23} \} &\leq& M_1 \nonumber\\
    \max\{ d_{21} + d_{13} + d_{23}, d_{12} + d_{31} + d_{32} \} &\leq& M_2 \nonumber\\
    \max\{ d_{31} + d_{32}, d_{13} + d_{23} \} &\leq& M_3 \nonumber\\
    \min \{ d_{12}, d_{13}, d_{21}, d_{23}, d_{31}, d_{32} \} &\geq& 0 \nonumber
  \end{IEEEeqnarray}
  and achievable using a non-adaptive encoding scheme.
\end{theorem}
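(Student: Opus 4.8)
The plan is to prove the claimed region, call it $\mathcal R$, by showing it is simultaneously an inner and an outer bound on $\mathcal D^{\mathrm N}$. The inner bound (achievability, non-adaptively) follows almost for free by specializing the already-established Theorem~\ref{thm:node-intermittency-dof-region-ib} to $\tau=1$; the matching outer bound is the real work and requires genie-aided converse arguments that are strictly tighter than cut-set bounds.

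For achievability I would set $\tau=1$ in Theorem~\ref{thm:node-intermittency-dof-region-ib} and use $M_1\ge M_2\ge M_3$ to collapse the three $\tau\max\{\cdot\}$ right-hand sides to $M_1$, $M_1$, and $M_2$. Among the six resulting inequalities, the first two ($\max\{d_{12}+d_{13},d_{21}+d_{31}\}\le M_1$ and $\max\{d_{21}+d_{23},d_{12}+d_{32}\}\le M_2$) are redundant: e.g. $d_{12}+d_{13}\le M_1$ follows from $d_{12}+d_{13}+d_{23}\le M_1$ since $d_{23}\ge 0$, and $d_{21}+d_{23}\le M_2$ follows from $d_{21}+d_{13}+d_{23}\le M_2$. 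Pulling the common terms out of the surviving maxima (for instance $d_{12}+d_{13}+\max\{d_{23},d_{32}\}\le M_1$) reproduces exactly the four inequalities in the statement. Hence $\mathcal R$ equals the $\tau=1$ specialization of $\mathcal D_{\mathrm{IB},\overline{\mathrm{A}}}^{\mathrm{I}}$, so $\mathcal R\subseteq\mathcal D^{\mathrm N}$ and every tuple of $\mathcal R$ is achievable by the non-adaptive ZF/IA scheme.

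For the converse I would derive each inequality as an outer bound that holds even under adaptive encoding. The two $M_3$-bounds $d_{31}+d_{32}\le M_3$ and $d_{13}+d_{23}\le M_3$ are elementary single-node cut-set bounds: node $3$ transmits $w_{31},w_{32}$ and receives $w_{13},w_{23}$ through only $M_3$ antennas. The four $M_1$- and two $M_2$-bounds are \emph{not} cut-set-tight — in the symmetric case $M_1=M_2=M_3=M$, cut-set reasoning permits $d_{12}+d_{13}+d_{23}=3M/2$ whereas the true maximum is $M$ — so I would obtain them by genie-aided arguments. For a triple such as $d_{21}+d_{13}+d_{23}\le M_2$, I would pick the receiver that already decodes two of the three messages, feed it genie information (other nodes' messages or transmit vectors, supplied strictly causally) that lets it cancel the undesired contributions, and reduce its observation to an effective channel whose remaining uncertainty is driven by a single $M_i$-dimensional transmit vector; bounding the resulting mutual information by $M_i\logp$ via $\Rank(\H_{ij})=\min\{M_i,M_j\}$ and $M_1\ge M_2\ge M_3$ yields the bound. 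These are the $\tau=1$ specializations of the genie-aided converse behind Theorem~\ref{thm:node-intermittency-sum-dof}.

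The hard part will be making the genies \emph{tight}: a naive combined-receiver genie only gives the loose bound $M_2+M_3$ instead of $M_2$, because the third message then appears to cost an extra $M_3$ dimensions. The correct bound reflects that node $2$'s surplus antennas ($M_2-M_3$) can zero-force $w_{21}$ out of node $3$'s observation, so the triple collapses into a single $M_2$-dimensional constraint; capturing this in the converse requires a dimension-matched genie (supplying only an $(M_2-M_3)$-dimensional projection of the auxiliary signal rather than a full received vector) and a check that the enhanced decoder still recovers all three messages. A secondary obstacle is carrying the bounds through the adaptive regime: since $\X_{i,\ell}$ may then depend on past outputs, the genie and cut-set steps must be phrased purely through $I(\cdot;\cdot)$ with strictly-causal conditioning rather than by reconstructing transmit signals from messages. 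Once all four outer bounds are in hand, combining them with achievability gives $\mathcal D^{\mathrm N}=\mathcal R$ with non-adaptive encoding sufficient.
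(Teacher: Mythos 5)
Your proposal is correct and follows essentially the same route as the paper: achievability by specializing Theorem~\ref{thm:node-intermittency-dof-region-ib} to $\tau=1$ (yielding exactly \eqref{eq:no-intermittency-dof-region-1}--\eqref{eq:no-intermittency-dof-region-9}, with the pairwise node-$1$/node-$2$ constraints redundant), and a converse combining cut-set bounds for the $M_3$ constraints with genie-aided triplet bounds for the $M_1$ and $M_2$ constraints, including the dimension-matched $(M_2-M_3)$-dimensional antenna-augmentation genie that the paper realizes via $\tilde{\H}_{23}$. The only cosmetic difference is that the paper imports most of the triplet outer bounds from \cite{MaierChaabanMathar,ElmahdyKeyiMohassebElBatt,ElmahdyKeyiMohassebElBatt_journal} and derives only the two $M_2$ bounds itself (as the $\tau=1$ case of \eqref{eq:nodeint-converse-M1geqM2geqM3-around2} and \eqref{eq:nodeint-converse-M1geqM2geqM3-around3}), whereas you propose deriving them all by the same technique.
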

\begin{proof}
  The theorem follows from the ZF/IA-based achievability results in Section~\ref{sec:no-intermittency-achievability} (\eqref{eq:no-intermittency-dof-region-1} to \eqref{eq:no-intermittency-dof-region-9}) and the converse results in Section~\ref{sec:no-intermittency-converses} (\eqref{eq:no-intermittency-dof-region-ob-3} to \eqref{eq:no-intermittency-dof-region-ob-2}).
\end{proof}

This complements earlier work \cite{MaierChaabanMathar} that characterized the sum-DoF of the non-intermittent 3WC and showed its achievability using non-adaptive encoding.

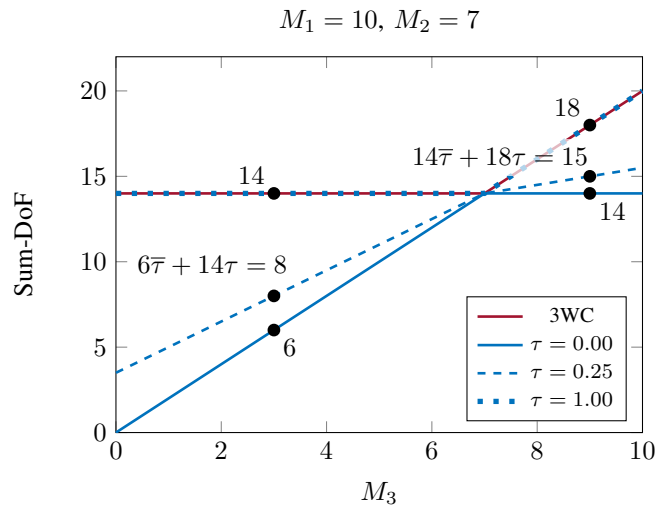
\begin{figure}[!t]
  \centering
  \begin{tikzpicture}
    \begin{axis}[
      scale only axis,
      width=7cm,
      height=5cm,
      title={$M_1 = 10$, $M_2 = 7$},
      xmin=0, xmax=10,
      ymin=0, ymax=22,
      xlabel={$M_3$},
      ylabel={Sum-DoF},
      legend pos=south east,
    ]
      \addplot[color=myParula07Red,line width=1pt,solid] coordinates {
        (0, 14)(1, 14)(2, 14)(3, 14)(4, 14)(5, 14)(6, 14)(7, 14)(8, 16)(9, 18)(10, 20)
      };
      \addplot[color=myParula01Blue,line width=1pt,solid] coordinates {
        (0, 0.0)(1, 2.0)(2, 4.0)(3, 6.0)(4, 8.0)(5, 10.0)(6, 12.0)(7, 14.0)(8, 14.0)(9, 14.0)(10, 14.0)
      };
      \addplot[color=myParula01Blue,line width=1pt,dashed] coordinates {
        (0, 3.5)(1, 5.0)(2, 6.5)(3, 8.0)(4, 9.5)(5, 11.0)(6, 12.5)(7, 14.0)(8, 14.5)(9, 15.0)(10, 15.5)
      };
      \addplot[color=myParula01Blue,line width=2pt,loosely dotted] coordinates {
        (0, 14.0)(1, 14.0)(2, 14.0)(3, 14.0)(4, 14.0)(5, 14.0)(6, 14.0)(7, 14.0)(8, 16.0)(9, 18.0)(10, 20.0)
      };
      \addplot[color=black,only marks,mark=*,thick] coordinates {
        (3, 14)(3, 6.0)(3, 8.0)
        (9, 18)(9, 14.0)(9, 15.0)
      };
      \node at (axis cs:3,14) [anchor=south east] {$14$};
      \node at (axis cs:3,6) [anchor=north west] {$6$};
      \node at (axis cs:3,8) [anchor=south east,xshift=3mm,yshift=1.5mm] {$6 \ntau + 14 \tau = 8$};
      \node at (axis cs:9,18) [anchor=south east] {$18$};
      \node at (axis cs:9,14) [anchor=north west] {$14$};
      \node at (axis cs:9,15) [anchor=south east,xshift=1mm,fill=white,text opacity=1,fill opacity=0.7] {$14 \ntau + 18 \tau = 15$};
      \legend{\footnotesize 3WC, \footnotesize $\tau=0.00$, \footnotesize $\tau=0.25$, \footnotesize $\tau=1.00$}
    \end{axis}
  \end{tikzpicture}
  \caption{Sum-DoF of 3WC (red) and intermittent 3WC (blue) with varying $\tau$ (black marks show the case $\tau=0.25$ as convex combination of the extreme cases $\tau=0$ and $\tau=1$)}
  \label{fig:conclusion-sum-dof-vary-tau}
\end{figure}

\begin{figure}[!t]
  \centering
  \begin{tikzpicture}
    \begin{axis}[
      scale only axis,
      width=7cm,
      height=5cm,
      title={$M_1 = 10$, $\tau = 0.7$},
      xmin=0, xmax=10,
      ymin=0, ymax=22,
      xlabel={$M_3$},
      ylabel={Sum-DoF},
      legend pos=south east,
      legend columns=3,
      transpose legend,
      legend style={/tikz/every even column/.append style={column sep=0.5em},cells={align=left}}
    ]
      \addplot[color=myParula07Red,line width=1pt,solid] coordinates {
        (0, 4)(1, 4)(2, 4)(3, 6)(4, 8)(5, 10)(6, 12)(7, 14)(8, 16)(9, 18)(10, 20)
      };
      \addplot[color=myParula07Red,line width=1pt,dashed] coordinates {
        (0, 10)(1, 10)(2, 10)(3, 10)(4, 10)(5, 10)(6, 12)(7, 14)(8, 16)(9, 18)(10, 20)
      };
      \addplot[color=myParula07Red,line width=1pt,dotted] coordinates {
        (0, 18)(1, 18)(2, 18)(3, 18)(4, 18)(5, 18)(6, 18)(7, 18)(8, 18)(9, 18)(10, 20)
      };
      \addplot[color=myParula01Blue,line width=1pt,solid] coordinates {
        (0, 2.8)(1, 3.4)(2, 4.0)(3, 5.3999999999999995)(4, 6.8)(5, 8.2)(6, 9.599999999999998)(7, 11.0)(8, 12.399999999999999)(9, 13.8)(10, 15.2)
      };
      \addplot[color=myParula01Blue,line width=1pt,dashed] coordinates {
        (0, 7.0)(1, 7.6)(2, 8.2)(3, 8.8)(4, 9.4)(5, 10.0)(6, 11.399999999999999)(7, 12.799999999999999)(8, 14.2)(9, 15.6)(10, 17.0)
      };
      \addplot[color=myParula01Blue,line width=1pt,dotted] coordinates {
        (0, 12.6)(1, 13.2)(2, 13.8)(3, 14.4)(4, 15.0)(5, 15.6)(6, 16.2)(7, 16.8)(8, 17.4)(9, 18.0)(10, 19.4)
      };
      \legend{\footnotesize $M_2=2$, \footnotesize $M_2=5$, \footnotesize $M_2=9$, \footnotesize $M_2=2$, \footnotesize $M_2=5$, \footnotesize $M_2=9$}
    \end{axis}
  \end{tikzpicture}
  \caption{Sum-DoF of 3WC (red) and intermittent 3WC (blue) for varying $M_2$}
  \label{fig:conclusion-sum-dof-vary-m2}
\end{figure}
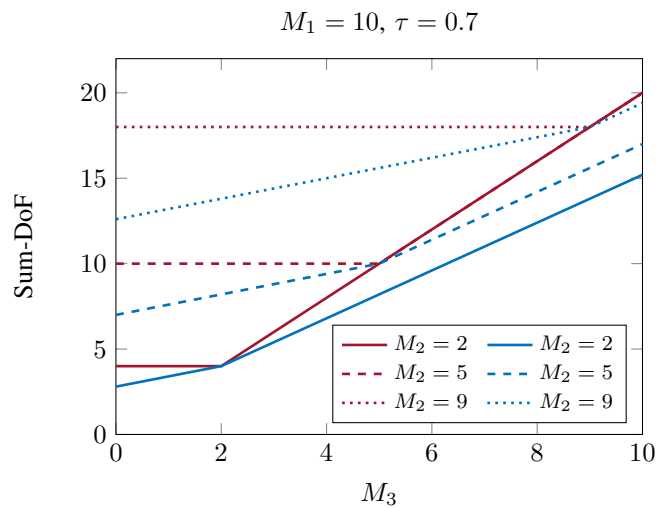

Our results show that intermittency does not decrease sum-DoF as long as node $1$ has the smallest number of antennas. Otherwise, intermittency does affect sum-DoF, which is affine-linearly increasing in $\tau$.
Fig.~\ref{fig:conclusion-sum-dof-vary-tau} and \ref{fig:conclusion-sum-dof-vary-m2} plot sum-DoF of non-intermittent and node-intermittent 3WC against $M_3$, for different values of $\tau$ and $M_2$, respectively, in a scenario where node $1$ has the largest number of antennas. The resulting graphs are piecewise linear with a change in slope at $M_3 = M_2$. For $M_3 < M_2$ the sum-DoF of the 3WC is a constant depending only on $M_2$, for $M_3 > M_2$ it is linear in $M_3$. The slope of the sum-DoF of the intermittent 3WC is proportional to $\tau$ for $M_3 < M_2$ and proportional to $\ntau$ for $M_3 > M_2$.

The sum-DoF of the intermittent 3WC with $0 < \tau < 1$ is a convex time-sharing combination of the two extreme cases $\tau=0$ and $\tau=1$ (Fig.~\ref{fig:conclusion-sum-dof-vary-tau}). Note, that such a time-sharing combination is the best any non-adaptive coding scheme can achieve, when each node knows the intermittency state ahead of time and codes optimally for the respective state in each time instance $\ell$. The achievability scheme presented in this work does not use intermittency state information at the encoder, yet achieves the same sum-DoF, averaging out intermittency state through erasure coding. This shows that intermittency state information is dispensable at the transmitter for the non-adaptive scheme in this case.
Furthermore, note that the sum-DoF of the 3WC is larger than the sum-DoF of the intermittent 3WC (Fig.~\ref{fig:conclusion-sum-dof-vary-m2}).

Intermittency impacts the DoF region of the 3WC in a pivotal way:
While non-adaptive encoding schemes, if suitably designed, are optimal for the non-intermittent 3WC,
adaptive encoding techniques are indispensable for the intermittent 3WC.
For sum-DoF this is not the case; instead, non-adaptive encoding suffices for both intermittent and non-intermittent 3WC.
This reinforces that changes in fundamental qualitative channel properties might not be recognizable from the sum-DoF perspective, corroborating the necessity to study the full DoF region of multi-way communication scenarios.

Considering multi-way communication networks, recall that adaptive encoding opens new paths for information flow which are otherwise unavailable to non-adaptive schemes.
As intermittency impairs parts of the network, the ability of adaptive schemes to exploit path diversity and steer clear of the impairment becomes crucial to achieve `extreme' DoF tuples (e.g., tuples where all resources are used to maximize a certain DoF).
It is in line with this informal reasoning
to find that adaptive encoding is required to achieve the DoF region of a multi-way communication network with intermittency.

In the sequel we provide detailed derivations and proofs of our main results highlighted in this section.

\section{Node Intermittency}
\label{sec:node-intermittency}

In this section we first introduce a non-adaptive ZF/IA/EC-based scheme and derive its achievable sum-DoF and DoF region. We then show, using enhanced genie-aided bounds for both adaptive and non-adaptive encoding, that this scheme is sum-DoF optimal, but not DoF region optimal. Furthermore, no non-adaptive scheme can be DoF region optimal, since tighter outer bounds hold for non-adaptive schemes, that however can be exceeded by adaptive schemes, as we show by the example
of decode-forward relaying. This establishes that adaptive encoding is necessary from a DoF region perspective, but non-adaptive encoding is sufficient to achieve sum-DoF.

\subsection{A Non-Adaptive Scheme Based on ZF, IA and EC}
\label{sec:node-intermittency-achievability}

In this subsection we present a non-adaptive transmission scheme based on ZF, IA and EC that provides an inner bound on sum-DoF and DoF region of 3WCs.
Previous works \cite{MaierChaabanMathar,MaierChaabanMathar_ITW,ElmahdyKeyiMohassebElBatt,ElmahdyKeyiMohassebElBatt_journal} developed similar ZF/IA-based schemes only to the limited extent necessary to analyze the sum-DoF of various 3WCs.
We take EC as additional technique to mitigate intermittency and develop the resulting ZF/IA/EC-based scheme in full generality, i.e., for arbitrary numbers of antennas and flexible allocation of DoFs to the available transmission techniques ZF, IA and EC.
The resulting DoF region is optimal for the non-intermittent 3WC.
Furthermore, the DoF region/sum-DoF constitutes inner/lower bounds for 3WCs with arbitrary intermittency models beyond node-intermittency, some of which are mentioned in Section~\ref{sec:conclusion}.

Throughout this section we continue to assume $i, j, k \in \{ 1, 2, 3 \}$ and mutually distinct. For notational simplicity (capturing the symmetries inherent in the model thereby avoiding case distinctions), we introduce the following aliases that will be resubstituted towards the end of the section,
\begin{IEEEeqnarray}{rCl}
  \label{eq:def-abbreviation-tau-s}
  s_{1,\ell} \eqdef 1,
  \quad{}
  s_{2,\ell} \eqdef s_{3,\ell} \eqdef s_{\ell},
  \quad{}
  \tau_1 \eqdef 1,
  \quad{}
  \tau_2 \eqdef \tau_3 \eqdef \tau.
  \IEEEeqnarraynumspace
\end{IEEEeqnarray}
The idea behind these aliases is visualized in Fig.~\ref{fig:explain-helper-variables-coding-scheme}: Using these aliases we can, for instance, find a general expression for the receive signal $\y_{i,\ell}$ of node $i$ in terms of the transmit signals $\x_{j,\ell}$ and $\x_{k,\ell}$ of nodes $j$ and $k$, and the intermittency states $s_{k,\ell}$ and $s_{j,\ell}$ (marginally distributed as $\Bern(\tau_k)$ and $\Bern(\tau_j)$, respectively) of the links $j \leftrightarrow i$ and $k \leftrightarrow i$, respectively. The fact that $2 \leftrightarrow 3$ is not intermittent, and $1 \leftrightarrow 2$ and $1 \leftrightarrow 3$ are jointly intermittent, is accounted for by the appropriate resubstitution at due time.

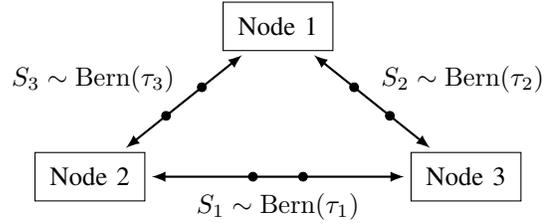
\begin{figure}[!t]
  \centering
  \begin{tikzpicture}[
      mynode/.append style={
        rectangle,
        draw,
        inner sep=2mm,
      },
      mymsg/.append style={
      },
      mychannel/.append style={
        thick,
        shorten >=\pgflinewidth*3,
        shorten <=\pgflinewidth*3,
      }
    ]

    \node [mynode] (n2) at (-2.5,0) {Node 2};
    \node [mynode] (n3) at (+2.5,0) {Node 3};
    \node [mynode] (n1) at (0,2) {Node 1};

    \draw[latex-latex,mychannel] (n1) -- (n2) node[pos=0.25] (e12a) {} node[pos=0.75] (e12b) {};
    \draw[latex-latex,mychannel] (n1) -- (n3) node[pos=0.25] (e13a) {} node[pos=0.75] (e13b) {};
    \draw[latex-latex,mychannel] (n2) -- (n3) node[pos=0.35] (e23a) {} node[pos=0.65] (e23b) {};

    \draw[{Circle[]}-{Circle[]}] (e12a) -- (e12b) node[midway,above left] {$S_3 \sim \Bern(\tau_3)$};
    \draw[{Circle[]}-{Circle[]}] (e13a) -- (e13b) node[midway,above right] {$S_2 \sim \Bern(\tau_2)$};
    \draw[{Circle[]}-{Circle[]}] (e23a) -- (e23b) node[midway,below,yshift=-1mm] {$S_1 \sim \Bern(\tau_1)$};

  \end{tikzpicture}
  \caption{Rationale behind the aliases $s_{1,\ell} \eqdef 1, s_{2,\ell} \eqdef s_{3,\ell} \eqdef s_{\ell}, \tau_1 \eqdef 1, \tau_2 \eqdef \tau_3 \eqdef \tau$ introduced in Section~\ref{sec:node-intermittency-achievability}: treating every link $j \leftrightarrow k$ as potentially intermittent with intermittency state variable $s_{i,\ell}$ (marginally distributed as $\Bern(\tau_i)$) allows to derive general expressions for, e.g., $\y_{i,\ell}$ based on $\x_{j,\ell}$, $\x_{k,\ell}$, $s_{j,\ell}$ and $s_{k,\ell}$, independent of whether $i=1$, $i=2$ or $i=3$}
  \label{fig:explain-helper-variables-coding-scheme}
\end{figure}

\subsubsection{Encoding}

Each node splits each message $w_{ij}$ into $w_{ij}^\ZF$ and $w_{ij}^\IA$ to be sent via zero-forcing and interference alignment, respectively. At node $i$, the four messages $w_{ij}^\Mq$ ($q \in \ZFIA$) are encoded into codewords $\x_{ij}^{\Mq n}$ with symbols $\x_{ij,\ell}^\Mq \in \mathbb C^{a_{ij}^\Mq}$ each, for some vector lengths $a_{ij}^\Mq \in \mathbb N_0$. The symbols of these codewords are chosen i.i.d. $\CN(\bm{0}, p_i \I_{a_{ij}^\Mq})$ respectively, where $p_i$ is the power. The power constraints on $\x_i^n$ stated in Section~\ref{sec:system-model} are satisfied by choosing
\begin{IEEEeqnarray}{rCl}
  \label{eq:tx-power-condition}
  p_i &=& \frac{P}{a_{ij}^\ZF + a_{ij}^\IA + a_{ik}^\ZF + a_{ik}^\IA}.
\end{IEEEeqnarray}

For encoding, the codes are designed to employ EC to be able to tolerate a certain number of symbol erasures (e.g., caused by intermittency), by not using all codeword symbols for net user data, but deliberately adding some redundancy. The rates and DoFs are thereby reduced accordingly. While for the non-intermittent 3WC this additional layer of EC is not required, it is made use of for intermittent 3WCs to cope with intermittency.

\subsubsection{Transmission}

At time $\ell$, node $i$ sends
\begin{IEEEeqnarray}{rCl}
  \x_{i,\ell} &=& \sum_{q \in \ZFIA} \left[ \V_{ij}^\Mq \x_{ij,\ell}^\Mq + \V_{ik}^\Mq \x_{ik,\ell}^\Mq \right]
  \nonumber
\end{IEEEeqnarray}
where $\V_{ij}^\Mq \in \mathbb C^{M_i \times a_{ij}^\Mq}$ are pre-coding matrices with unit-norm column vectors.
Zero-forcing is achieved by choosing the $\V_{ij}^\ZF$ such that
\begin{IEEEeqnarray}{rCl}
  \label{eq:zf-condition-precoder}
  \H_{ik} \V_{ij}^\ZF = \bm{0}.
\end{IEEEeqnarray}
Such matrices $\V_{ij}^\ZF$ exist if node $i$ has enough antennas to send $a_{ij}^\ZF$ streams to node $j$ without interfering with node $k$, i.e.
\begin{IEEEeqnarray}{rCl}
  \label{eq:zf-condition-dofs}
  a_{ij}^\ZF \leq (M_i - M_k)^+.
\end{IEEEeqnarray}
To avoid any overlap of the different transmit signal subspaces, we require furthermore that
\begin{IEEEeqnarray}{rCl}
  \label{eq:condition-dofs-tx-subspaces}
  a_{ij}^\ZF + a_{ij}^\IA + a_{ik}^\ZF + a_{ik}^\IA \leq M_i.
\end{IEEEeqnarray}

\subsubsection{Decoding}

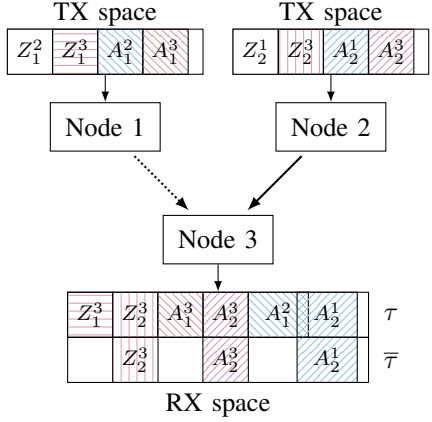
\begin{figure}[!t]
  \centering
  \begin{tikzpicture}[
      mynode/.append style={
        rectangle,
        draw,
        inner sep=2mm,
      },
      mymsg/.append style={
      },
      mychannel/.append style={
        thick,
        shorten >=\pgflinewidth*3,
        shorten <=\pgflinewidth*3,
      }
    ]%

    \node (n1) at (-1.5,1) [mynode] {Node 1};
    \node (n2) at (1.5,1) [mynode] {Node 2};
    \node (n3) at (0,-0.5) [mynode] {Node 3};

    \node (t1) at ($(n1)+(0,1)$) [rectangle,draw,minimum width=2.6cm,minimum height=.6cm] {};
    \node at ($(t1)+(0,.5)$) {TX space};

    \node (t2) at ($(n2)+(0,1)$) [rectangle,draw,minimum width=2.6cm,minimum height=.6cm] {};
    \node at ($(t2)+(0,.5)$) {TX space};

    \node (r3_t) at ($(n3)+(0,-1)$) [rectangle,draw,minimum width=4cm,minimum height=.6cm] {};
    \node at ($(r3_t)+(2.3,0)$) {$\tau$};
    \node (r3_nt) at ($(n3)+(0,-1.6)$) [rectangle,draw,minimum width=4cm,minimum height=.6cm] {};
    \node at ($(r3_nt)+(2.3,0)$) {$\ntau$};
    \node at ($(r3_nt)+(0,-0.6)$) {RX space};

    \draw[-latex] (t1) to (n1);
    \draw[-latex] (t2) to (n2);
    \draw[-latex] (n3) to (r3_t);

    \draw[-latex,mychannel,densely dotted] (n1) to (n3);
    \draw[-latex,mychannel] (n2) to (n3);

    \node at ($(t1)+(-1.0,0)$) [rectangle,draw,fill=white,minimum height=.6cm,minimum width=.6cm] {\footnotesize $Z_1^2$};
    \node at ($(t1)+(-0.4,0)$) [rectangle,draw,pattern=horizontal lines,pattern color=myParula07Red!40,minimum height=.6cm,minimum width=.6cm] {\footnotesize $Z_1^3$};
    \node at ($(t1)+(+0.2,0)$) [rectangle,draw,pattern=north west lines,pattern color=myParula01Blue!40,minimum height=.6cm,minimum width=.6cm] {\footnotesize $A_1^2$};
    \node at ($(t1)+(+0.8,0)$) [rectangle,draw,pattern=north west lines,pattern color=myParula07Red!40,minimum height=.6cm,minimum width=.6cm] {\footnotesize $A_1^3$};

    \node at ($(t2)+(-1.0,0)$) [rectangle,draw,fill=white,minimum height=.6cm,minimum width=.6cm] {\footnotesize $Z_2^1$};
    \node at ($(t2)+(-0.4,0)$) [rectangle,draw,pattern=vertical lines,pattern color=myParula07Red!40,minimum height=.6cm,minimum width=.6cm] {\footnotesize $Z_2^3$};
    \node at ($(t2)+(+0.2,0)$) [rectangle,draw,pattern=north east lines,pattern color=myParula01Blue!40,minimum height=.6cm,minimum width=.6cm] {\footnotesize $A_2^1$};
    \node at ($(t2)+(+0.8,0)$) [rectangle,draw,pattern=north east lines,pattern color=myParula07Red!40,minimum height=.6cm,minimum width=.6cm] {\footnotesize $A_2^3$};

    \node at ($(r3_t)+(-1.7,0)$) [rectangle,draw,pattern=horizontal lines,pattern color=myParula07Red!40,minimum height=.6cm,minimum width=.6cm] {\footnotesize $Z_1^3$};
    \node at ($(r3_t)+(-1.1,0)$) [rectangle,draw,pattern=vertical lines,pattern color=myParula07Red!40,minimum height=.6cm,minimum width=.6cm] {\footnotesize $Z_2^3$};
    \node at ($(r3_t)+(-0.5,0)$) [rectangle,draw,pattern=north west lines,pattern color=myParula07Red!40,minimum height=.6cm,minimum width=.6cm] {\footnotesize $A_1^3$};
    \node at ($(r3_t)+(+0.1,0)$) [rectangle,draw,pattern=north east lines,pattern color=myParula07Red!40,minimum height=.6cm,minimum width=.6cm] {\footnotesize $A_2^3$};
    \node at ($(r3_t)+(+0.8,0)$) [rectangle,draw,minimum height=.6cm,minimum width=.8cm,pattern=north west lines,pattern color=myParula01Blue!40] {\footnotesize $A_1^2$};
    \node at ($(r3_t)+(+1.45,0)$) [rectangle,draw,minimum height=.6cm,minimum width=.8cm,pattern=north east lines,pattern color=myParula01Blue!40] {\footnotesize $A_2^1$};

    \node at ($(r3_nt)+(-1.1,0)$) [rectangle,draw,pattern=vertical lines,pattern color=myParula07Red!40,minimum height=.6cm,minimum width=.6cm] {\footnotesize $Z_2^3$};
    \node at ($(r3_nt)+(+0.1,0)$) [rectangle,draw,pattern=north east lines,pattern color=myParula07Red!40,minimum height=.6cm,minimum width=.6cm] {\footnotesize $A_2^3$};
    \node at ($(r3_nt)+(+1.45,0)$) [rectangle,draw,minimum height=.6cm,minimum width=.8cm,pattern=north east lines,pattern color=myParula01Blue!40] {\footnotesize $A_2^1$};
  \end{tikzpicture}
  \caption{Visualization of transmit (TX) signal spaces at nodes $1$ and $2$ and receive (RX) signal space at node $3$ depending on intermittency state of $1 \leftrightarrow 3$ where the first and second row are received with probability $\tau$ and $\ntau$, respectively ($Z_i^j$ denotes $w_{ij}^\ZF$, $A_i^j$ denotes $w_{ij}^\IA$, desired signals in red, interfering signals in blue)}
  \label{fig:achievability-example-signal-spaces}
\end{figure}

Node $i$ receives (Fig.~\ref{fig:achievability-example-signal-spaces})
\begin{IEEEeqnarray}{rCl}
  \y_{i,\ell} &=& s_{k,\ell} \H_{ji} \x_{j,\ell} + s_{j,\ell} \H_{ki} \x_{k,\ell} + \z_{i,\ell} \nonumber\\
  &=& \sum_{q \in \ZFIA} \left[ s_{k,\ell} \H_{ji} \V_{ji}^\Mq \x_{ji,\ell}^\Mq + s_{j,\ell} \H_{ki} \V_{ki}^\Mq \x_{ki,\ell}^\Mq \right] \nonumber\\
  && \quad{}+{} \G_i \begin{bmatrix}
    \x_{jk,\ell}^\IA \\
    \x_{kj,\ell}^\IA
  \end{bmatrix} + \z_{i,\ell}
  \nonumber
\end{IEEEeqnarray}
with $\G_i = \left[ \begin{smallmatrix} s_{k,\ell} \H_{ji} \V_{jk}^\IA & s_{j,\ell} \H_{ki} \V_{kj}^\IA \end{smallmatrix} \right]$. Note that the terms $\H_{ji} \V_{jk}^\ZF \x_{jk,\ell}^\ZF$ and $\H_{ki} \V_{kj}^{\ZF} \x_{kj,\ell}^\ZF$ vanished due to the zero-forcing condition \eqref{eq:zf-condition-precoder}. The first summand represents the four desired signals from $j \rightarrow i$ and $k \rightarrow i$, the second summand represents occasional interference from $j \leftrightarrow k$, and the third summand is noise.

To decode a desired signal $\x_{ji}^{\Mq n}$, node $i$ zero-forces the remaining signals by multiplying $\y_i^n$ with a suitable post-coder $\T_{ji}^\Mq \in \mathbb C^{a_{ji}^\Mq \times M_i}$ with unit-norm row vectors satisfying:
\begin{IEEEeqnarray}{rCl}
  \label{eq:zf-condition-postcoder-1}
  \T_{ji}^\Mq \T_{ji}^{\Mq\herm} &=& \I_{a_{ji}^\Mq} \IEEEeqnarraynumspace \\
  \label{eq:zf-condition-postcoder-2}
  \T_{ji}^\Mq \begin{bmatrix} \H_{ji} \V_{ji}^\Mqnot & \H_{ki} \V_{ki}^\ZF & \H_{ki} \V_{ki}^\IA & \G_i \end{bmatrix} &=& \bm{0} \nonumber\\
  \IEEEeqnarraymulticol{3}{r}{
      \text{with } \overline{q} \in \ZFIA \setminus \{ q \}
  } \\
  \label{eq:zf-condition-postcoder-3}
  \Rank(\T_{ji}^\Mq \H_{ji} \V_{ji}^\Mq) &=& a_{ji}^\Mq
\end{IEEEeqnarray}
Here, \eqref{eq:zf-condition-postcoder-2} ensures zero-forcing of the remaining three messages and the interference and \eqref{eq:zf-condition-postcoder-3} ensures post-coding without loss of meaningful signal dimensions. The existence of such post-coders $\T_{ji}^\Mq$ is guaranteed as long as the columns of
\begin{IEEEeqnarray}{rCl}
  \begin{bmatrix} \H_{ji} \V_{ji}^\ZF & \H_{ji} \V_{ji}^\IA & \H_{ki} \V_{ki}^\ZF & \H_{ki} \V_{ki}^\IA & \G_i \end{bmatrix}
  \nonumber
\end{IEEEeqnarray}
are linearly independent. Let $\gamma_i$ be the dimension of $\Span(\H_{ji} \V_{jk}^\IA) \cap \Span(\H_{ki} \V_{kj}^\IA)$. Then, the dimension of $\Span(\G_i)$ is $a_{jk}^\IA + a_{kj}^\IA - \gamma_i$, and the above linear independence is possible almost surely if we choose
\begin{IEEEeqnarray}{rCl}
  \label{eq:condition-postcoder-linindep}
  a_{ji}^\ZF + a_{ji}^\IA + a_{ki}^\ZF + a_{ki}^\IA + (a_{jk}^\IA + a_{kj}^\IA - \gamma_i) \leq M_i.
  \nonumber\\*
\end{IEEEeqnarray}
To minimize the impact of interference, we choose the pre-coders $\V_{ij}^\Mq$ such that all $\gamma_i$ are maximized, i.e., we `maximally' align the interference subspaces at the receivers. The $\gamma_i$ cannot be chosen arbitrarily large, the dimension of the intersection of the interference subspaces ($\gamma_i$) is upper bounded by the dimensions of the interference subspaces ($a_{jk}^\IA$ and $a_{kj}^\IA$). Furthermore, $\gamma_i$ needs to be smaller than the dimension of $\Span(\H_{ji}) \cap \Span(\H_{ki})$, which is $(M_j + M_k - M_i)^+$ almost surely. Therefore, we require that
\begin{IEEEeqnarray}{rCl}
  \label{eq:condition-postcoder-gamma}
  \gamma_i \leq \min\{ a_{jk}^\IA, a_{kj}^\IA, (M_j + M_k - M_i)^+ \}.
\end{IEEEeqnarray}

After post-coding, node $i$ is left with the signals
\begin{IEEEeqnarray}{rCl}
  \y_{ji,\ell}^\Mq &=& s_{k,\ell} \T_{ji}^\Mq \H_{ji} \V_{ji}^\Mq \x_{ji,\ell}^\Mq + \T_{ji}^\Mq \z_{i,\ell}.
  \nonumber
\end{IEEEeqnarray}
The resulting channel is an erasure-Gaussian MIMO channel with erasure probability $\ntau_k$, i.e., a channel whose output is
\begin{IEEEeqnarray}{rCl}
  \Y_{ji}^\Mq &=& S_{k} \T_{ji}^\Mq \H_{ji} \V_{ji}^\Mq \X_{ji}^\Mq + \T_{ji}^\Mq \Z_{i}
  \nonumber
\end{IEEEeqnarray}
with random variables $S_k \sim \Bern(\tau_k)$, $\X_{ji}^\Mq \sim \CN(\bm{0}, p_j \I_{a_{ji}^\Mq})$ and $\Z_i \sim \CN(\bm{0}, \sigma^2 \I_{M_i})$. We treat this as a channel with state known causally to the receiver,
cf.~Section~\ref{sec:prerequisites-channels-with-state}, Lemma~\ref{thm:prelim-capacity-intermittent-gaussian-mimo-channel},
such that for large $n$, the achievable rate over this channel is the mutual information between $\X_{ji}^\Mq$ and $(\Y_{ji}^\Mq, S_k)$:
\begin{IEEEeqnarray}{rCl}
  \IEEEeqnarraymulticol{3}{l}{
      \MInf(\X_{ji}^\Mq ; \Y_{ji}^\Mq S_k)
  } \nonumber\\
  \quad&=&
      \MInf(\X_{ji}^\Mq ; \Y_{ji}^\Mq \mid S_k) \nonumber\\
    &=& \tau_k \log\det\left( \I_{a_{ji}^\Mq} + \frac{p_j}{\sigma^2} \T_{ji}^\Mq \H_{ji} \V_{ji}^\Mq \V_{ji}^{\Mq\herm} \H_{ji}^\herm \T_{ji}^{\Mq\herm} \right)
    \nonumber
\end{IEEEeqnarray}
Due to \eqref{eq:tx-power-condition} and \eqref{eq:zf-condition-postcoder-3}, this leads to a DoF of $\tau_k a_{ji}^\Mq$, and the code that achieves it is an EC.

\subsubsection{Achievable DoF Region}

For $n$ large, $i \rightarrow j$ has a total of $\tau_k a_{ij}^\ZF + \tau_k a_{ij}^\IA$ DoFs per channel use,
\begin{IEEEeqnarray}{rCl}
  \label{eq:def-dofs-from-substreams}
  d_{ij} &=& \tau_k a_{ij}^\ZF + \tau_k a_{ij}^\IA.
\end{IEEEeqnarray}

Collecting \eqref{eq:zf-condition-dofs}, \eqref{eq:condition-dofs-tx-subspaces}, \eqref{eq:condition-postcoder-linindep} and \eqref{eq:condition-postcoder-gamma} as well as non-negativity of every $a_{ij}^\Mq$, we obtain:
\begin{IEEEeqnarray}{rCl}
  \label{eq:condition-dofs-tx-subspaces-REP}
  a_{ij}^\ZF + a_{ij}^\IA + a_{ik}^\ZF + a_{ik}^\IA &\leq& M_i   \nonumber\\*\\
  \label{eq:condition-postcoder-linindep-REP}
  a_{ji}^\ZF + a_{ji}^\IA + a_{ki}^\ZF + a_{ki}^\IA + (a_{jk}^\IA + a_{kj}^\IA - \gamma_i) &\leq& M_i   \nonumber\\*
\end{IEEEeqnarray}
\begin{IEEEeqnarray}{rCl}
  \label{eq:zf-condition-dofs-REP}
  a_{ij}^\ZF &\leq& (M_i - M_k)^+ \\
  \label{eq:condition-postcoder-gamma-REP}
  \gamma_i &\leq& \min\{ a_{jk}^\IA, a_{kj}^\IA, (M_j + M_k - M_i)^+ \} \\
  \label{eq:condition-nonneg-dimensions-REP}
  0 &\leq& a_{ij}^\Mq
\end{IEEEeqnarray}

Using \eqref{eq:def-dofs-from-substreams}, we obtain:
\begin{IEEEeqnarray}{rCl}
  \label{eq:dof-region-general-variables-1}
  \tau_j d_{ij} + \tau_k d_{ik} &\leq& \tau_j \tau_k M_i \\
  \label{eq:dof-region-general-variables-2}
  \tau_j \tau_i d_{ji} + \tau_k \tau_i d_{ki} + \tau_j \tau_k d_{jk} + \tau_k \tau_j d_{kj} \quad\nonumber\\ {}-{} \tau_i \tau_j \tau_k a_{jk}^\ZF - \tau_i \tau_k \tau_j a_{kj}^\ZF - \tau_i \tau_j \tau_k \gamma_i &\leq& \tau_i \tau_j \tau_k M_i
  \IEEEeqnarraynumspace
\end{IEEEeqnarray}
\begin{IEEEeqnarray}{rCl}
  \label{eq:dof-region-general-variables-3}
  \tau_i (\gamma_i + a_{jk}^\ZF) &\leq& d_{jk} \\
  \label{eq:dof-region-general-variables-4}
  \gamma_i &\leq& (M_j + M_k - M_i)^+ \\
  \label{eq:dof-region-general-variables-5}
  a_{ij}^\ZF &\leq& (M_i - M_k)^+ \\
  \label{eq:dof-region-general-variables-6}
  \min \{ a_{ij}^\ZF, d_{ij}, \gamma_i \} &\geq& 0
\end{IEEEeqnarray}

Instantiating these constraints for every possible combination of $i, j, k \in \{ 1, 2, 3 \}$ mutually distinct, resubstituting all $\tau_i$ from \eqref{eq:def-abbreviation-tau-s}, collecting the resulting bounds and eliminating redundant bounds, finally yields (for both $(i, \overline{i}) \in \{ (2,3), (3,2) \}$):
\begin{IEEEeqnarray}{rCl}
  \label{eq:achievable-scheme-dof-region-nodeint-first}
  d_{12} + d_{13} &\leq& \tau M_1 \\
  d_{i1} + \tau d_{i\overline{i}} &\leq& \tau M_i \\
  d_{21} + d_{31} + \tau d_{23} + \tau d_{32} \quad\nonumber\\ {}-{} \tau a_{23}^\ZF - \tau a_{32}^\ZF - \tau \gamma_1 &\leq& \tau M_1 \\
  d_{1i} + \tau d_{\overline{i}i} + d_{1\overline{i}} + d_{\overline{i}1} \quad\nonumber\\ {}-{} \tau a_{1\overline{i}}^\ZF - \tau a_{\overline{i}1}^\ZF - \tau \gamma_i &\leq& \tau M_i \\
  0 \leq{} (\gamma_1 + a_{i\overline{i}}^\ZF) &\leq& d_{i\overline{i}} \\
  0 \leq{} \tau (\gamma_i + a_{1\overline{i}}^\ZF) &\leq& d_{1\overline{i}} \\
  0 \leq{} \tau (\gamma_i + a_{\overline{i}1}^\ZF) &\leq& d_{\overline{i}1} \\
  0 \leq{} \gamma_1 &\leq& (M_2 + M_3 - M_1)^+ \\
  0 \leq{} \gamma_i &\leq& (M_1 + M_{\overline{i}} - M_i)^+ \\
  0 \leq{} a_{1i}^\ZF &\leq& (M_1 - M_{\overline{i}})^+ \\
  0 \leq{} a_{i1}^\ZF &\leq& (M_i - M_{\overline{i}})^+ \\
  \label{eq:achievable-scheme-dof-region-nodeint-last}
  0 \leq{} a_{i\overline{i}}^\ZF &\leq& (M_i - M_1)^+
\end{IEEEeqnarray}

In order to resolve the $(.)^+$ expressions, we do the following for each of the twelve cases of $M_i \geq M_j + M_k \geq M_j \geq M_k$ and $M_j + M_k \geq M_i \geq M_j \geq M_k$ (for all possible combinations $i, j, k \in \{ 1, 2, 3 \}$ mutually distinct):
\begin{enumerate}
  \item Instantiate the $(.)^+$ expressions under the respective assumption on the numbers of antennas, therefore some of the $a_{ij}^\ZF$ and $\gamma_i$ will be forced to zero.
  \item Perform Fourier-Motzkin's elimination to remove all remaining $a_{ij}^\ZF$ and $\gamma_i$ and obtain the achievable DoF region.
\end{enumerate}
We then combine the resulting achievable DoF regions into the following compact formulation:
\begin{IEEEeqnarray}{rCl}
  \label{eq:node-intermittency-dof-region-ib-first}
  \max\{ d_{\NA{}2} + d_{\NA{}3}, d_{2\NA{}} + d_{3\NA{}} \} &\leq& \tau M_{\NA{}} \\
  \max\{ d_{\NB{}1} + \tau d_{\NB{}3}, d_{1\NB{}} + \tau d_{3\NB{}} \} &\leq& \tau M_{\NB{}} \\
  \label{eq:node-intermittency-dof-region-ib-last-cutset}
  \max\{ d_{\NC{}1} + \tau d_{\NC{}2}, d_{1\NC{}} + \tau d_{2\NC{}} \} &\leq& \tau M_{\NC{}} \\
  \label{eq:node-intermittency-dof-region-ib-first-genie}
  \max\{ d_{\NA{}2} + d_{\NA{}\NC{}} + \tau d_{2\NC{}}, \quad\nonumber\\ d_{2\NA{}} + d_{\NC{}\NA{}} + \tau d_{\NC{}2} \} &\leq& \tau \max\left\{ M_{\NA{}}, M_{\NC{}} \right\} \IEEEeqnarraynumspace\\
  \max\{ d_{\NA{}\NB{}} + d_{\NA{}3} + \tau d_{3\NB{}}, \quad\nonumber\\ d_{\NB{}\NA{}} + d_{3\NA{}} + \tau d_{\NB{}3} \} &\leq& \tau \max\left\{ M_{\NA{}}, M_{\NB{}} \right\} \IEEEeqnarraynumspace\\
  \label{eq:node-intermittency-dof-region-ib-last-genie}
  \max\{ d_{1\NB{}} + d_{\NC{}1} + \tau d_{\NC{}\NB{}}, \quad\nonumber\\ d_{\NB{}1} + d_{1\NC{}} + \tau d_{\NB{}\NC{}} \} &\leq& \tau \max\left\{ M_{\NB{}}, M_{\NC{}} \right\} \IEEEeqnarraynumspace\\
  \label{eq:node-intermittency-dof-region-ib-last}
  \min \{ d_{12}, d_{13}, d_{21}, d_{23}, d_{31}, d_{32} \} &\geq& 0
\end{IEEEeqnarray}
This region is achievable for tuples $\bm{d}$ with non-negative integer entries. Tuples with non-negative real entries (such as the corner points of the region) are first approximated by non-negative rationals which then can be achieved using symbol extension, as in \cite{ChaabanSezgin_YC_Reg}.

The set of all DoF tuples $\bm{d}$ satisfying constraints \eqref{eq:node-intermittency-dof-region-ib-first} to \eqref{eq:node-intermittency-dof-region-ib-last} is denoted by $\mathcal D_{\mathrm{IB},\overline{\mathrm{A}}}^\mathrm{I}$. This proves Theorem~\ref{thm:node-intermittency-dof-region-ib}. What is the rationale behind the DoF region inner bound \eqref{eq:node-intermittency-dof-region-ib-first} to \eqref{eq:node-intermittency-dof-region-ib-last}?

The first three inequalities constrain the sum-DoF of outbound and inbound streams at each node, similar to cut-set bounds. The next three inequalities follow this rule: For each of the three links $1 \leftrightarrow 2$, $1 \leftrightarrow 3$ and $2 \leftrightarrow 3$ there are two DoF variables, one for each direction (i.e., $d_{12}$ and $d_{21}$, etc.). For each link pick one direction. There are eight such combinations. Whenever a combination contains both DoF variables that occur in a node's outbound or inbound sum-DoF constraint, the number of antennas at this node appears in the $\max\{.\}$ operator at the right side of the inequality. This means that whenever a node's index appears two times as left or two times as right index, this node's index appears also in the $\max\{.\}$ on the right side.

As can be seen above, there are six cases where each case applies to two nodes each, while for the third node it does not, because the third node's index appears once as left and once as right index. There are two cases missing altogether, $d_{12} + d_{31} + \tau d_{23}$ and $d_{21} + d_{13} + \tau d_{32}$, where the indices of all three nodes appear once as left and once as right index. Depending on numbers of antennas, this achievable DoF region yields four bounds for the largest and two bounds for the second largest node from the inequalities \eqref{eq:node-intermittency-dof-region-ib-first-genie} to \eqref{eq:node-intermittency-dof-region-ib-last-genie}, and two bounds for the third largest node from \eqref{eq:node-intermittency-dof-region-ib-first} to \eqref{eq:node-intermittency-dof-region-ib-last-cutset}. The remaining bounds from \eqref{eq:node-intermittency-dof-region-ib-first} to \eqref{eq:node-intermittency-dof-region-ib-last-cutset} are inactive due to the tighter bounds from \eqref{eq:node-intermittency-dof-region-ib-first-genie} to \eqref{eq:node-intermittency-dof-region-ib-last-genie}.

Although we proved that the above region is achievable, it is still useful to provide a `recipe' which describes how a specific DoF tuple can be achieved. To obtain the actual allocation of signal dimensions $a_{ij}^\Mq$ for a DoF tuple $\bm{d}$ satisfying \eqref{eq:node-intermittency-dof-region-ib-first} to \eqref{eq:node-intermittency-dof-region-ib-last}, proceed as follows: First, use as many ZF resources as possible. Only once the ZF dimensions are exhausted, assign IA resources and align as much of the resulting interference as possible. Throughout the process, account for redundancy required by EC to be able to tolerate intermittency. As an example (Fig.~\ref{fig:achievability-numeric-example-signal-spaces}), assume $(M_1, M_2, M_3, \tau) = (5, 7, 4, 0.5)$. There is one ZF dimension $1 \rightarrow 2$, three ZF dimensions $2 \rightarrow 1$ and two ZF dimensions $2 \rightarrow 3$, all other communication cannot be zero-forced. We try to achieve $\bm{d} = (0.5, 0, 0.5, 4, 0, 4)$. To transmit on average half a symbol per channel access full-duplex over the intermittent $1 \leftrightarrow 2$, we use a rate $\frac{1}{2}$ EC over one available ZF dimension in each direction ($a_{12}^\ZF = a_{21}^\ZF = 1$), the remaining two ZF dimensions $2 \rightarrow 1$ remain unused. IA is not required for $1 \leftrightarrow 2$ ($a_{12}^\IA = a_{21}^\IA = 0$). No communication $1 \leftrightarrow 3$ takes place ($a_{13}^\ZF = a_{13}^\IA = a_{31}^\ZF = a_{31}^\IA = 0$). To transmit four symbols per channel access $2 \rightarrow 3$, we use the two available ZF dimensions for two of them ($a_{23}^\ZF = 2$), and two IA dimensions ($a_{23}^\IA = 2$) that occupy a two-dimensional interference subspace at node $1$. To transmit four symbols per channel access $3 \rightarrow 2$, we use four IA dimensions ($a_{32}^\IA = 4$), since ZF is not possible ($a_{32}^\ZF = 0$). All of them create interference at node $1$, but this four-dimensional interference subspace can be aligned with the two-dimensional interference subspace caused by $2 \rightarrow 3$ ($\gamma_1 = 2$). As a result, four of the five receive dimensions at node $1$ are interference of $2 \leftrightarrow 3$ communication, while the remaining was used for zero-forced and erasure-coded communication $2 \rightarrow 1$. At nodes $2$ and $3$ no interference is caused, such that no interference alignment takes place ($\gamma_2 = \gamma_3 = 0$).

\begin{figure}[!t]
  \centering
  \begin{tikzpicture}[
      mynode/.append style={
        rectangle,
        draw,
        inner sep=2mm,
      },
      mymsg/.append style={
      },
      mychannel/.append style={
        thick,
        shorten >=\pgflinewidth*3,
        shorten <=\pgflinewidth*3,
      }
    ]%

    \node (n1) at (0,0.5) [mynode] {Node 1};
    \node (n2) at (-1.5,-1) [mynode] {Node 2};
    \node (n3) at (1.5,-1) [mynode] {Node 3};

    \node (t1) at ($(n1)+(0,2.5)$) [rectangle,draw,minimum width=3.0cm,minimum height=.6cm] {};
    \node (r1t) at ($(n1)+(0,1.6)$) [rectangle,draw,minimum width=3.0cm,minimum height=.6cm] {};
    \node (r1nt) at ($(n1)+(0,1)$) [rectangle,draw,minimum width=3.0cm,minimum height=.6cm] {};
    \node at ($(t1)-(1.85,0)$) {\footnotesize TX};
    \node at ($(r1t)-(1.85,0)$) {\footnotesize RX};
    \node at ($(r1t)+(1.75,0)$) {$\tau$};
    \node at ($(r1nt)+(1.75,0)$) {$\ntau$};

    \node (t2) at ($(n2)-(0.85,1)$) [rectangle,draw,minimum width=4.2cm,minimum height=.6cm] {};
    \node (r2t) at ($(n2)-(0.85,1.9)$) [rectangle,draw,minimum width=4.2cm,minimum height=.6cm] {};
    \node (r2nt) at ($(n2)-(0.85,2.5)$) [rectangle,draw,minimum width=4.2cm,minimum height=.6cm] {};
    \node at ($(t2)-(2.45,0)$) {\footnotesize TX};
    \node at ($(r2t)-(2.45,0)$) {\footnotesize RX};
    \node at ($(r2t)+(2.35,0)$) {$\tau$};
    \node at ($(r2nt)+(2.35,0)$) {$\ntau$};

    \node (t3) at ($(n3)-(-0.65,1)$) [rectangle,draw,minimum width=2.4cm,minimum height=.6cm] {};
    \node (r3t) at ($(n3)-(-0.65,1.9)$) [rectangle,draw,minimum width=2.4cm,minimum height=.6cm] {};
    \node (r3nt) at ($(n3)-(-0.65,2.5)$) [rectangle,draw,minimum width=2.4cm,minimum height=.6cm] {};
    \node at ($(t3)-(1.55,0)$) {\footnotesize TX};
    \node at ($(r3t)-(1.55,0)$) {\footnotesize RX};
    \node at ($(r3t)+(1.45,0)$) {$\tau$};
    \node at ($(r3nt)+(1.45,0)$) {$\ntau$};

    \draw[latex-latex,mychannel,densely dotted] (n1) to (n2);
    \draw[latex-latex,mychannel,densely dotted] (n1) to (n3);
    \draw[latex-latex,mychannel] (n2) to (n3);

    \node at ($(t1)+(-1.2,0)$) [rectangle,draw,pattern=north west lines,pattern color=myParula07Red!40,minimum height=.6cm,minimum width=.6cm] {\footnotesize $Z_1^2$};

    \node at ($(r1t)+(+0.3,0)$) [rectangle,draw,pattern=vertical lines,pattern color=myParula05Green!60,minimum height=.6cm,minimum width=2.4cm] {\hspace{0.8cm}\footnotesize $A_3^2$};
    \node at ($(r1t)+(-1.2,0)$) [rectangle,draw,pattern=north west lines,pattern color=myParula01Blue!40,minimum height=.6cm,minimum width=.6cm] {\footnotesize $Z_2^1$};
    \node at ($(r1t)+(-0.3,0)$) [rectangle,draw,pattern=horizontal lines,pattern color=myParula01Blue!40,minimum height=.6cm,minimum width=1.2cm] {\footnotesize $A_2^3$};

    \node at ($(t2)+(-1.8,0)$) [rectangle,draw,pattern=north west lines,pattern color=myParula01Blue!40,minimum height=.6cm,minimum width=.6cm] {\footnotesize $Z_2^1$};
    \node at ($(t2)+(-0.9,0)$) [rectangle,draw,pattern=north east lines,pattern color=myParula01Blue!40,minimum height=.6cm,minimum width=1.2cm] {\footnotesize $Z_2^3$};
    \node at ($(t2)+(+0.3,0)$) [rectangle,draw,pattern=horizontal lines,pattern color=myParula01Blue!40,minimum height=.6cm,minimum width=1.2cm] {\footnotesize $A_2^3$};

    \node at ($(r2t)+(-1.8,0)$) [rectangle,draw,pattern=north west lines,pattern color=myParula07Red!40,minimum height=.6cm,minimum width=.6cm] {\footnotesize $Z_1^2$};
    \node at ($(r2t)+(-0.3,0)$) [rectangle,draw,pattern=vertical lines,pattern color=myParula05Green!60,minimum height=.6cm,minimum width=2.4cm] {\footnotesize $A_3^2$};
    \node at ($(r2nt)+(-0.3,0)$) [rectangle,draw,pattern=vertical lines,pattern color=myParula05Green!60,minimum height=.6cm,minimum width=2.4cm] {\footnotesize $A_3^2$};

    \node at ($(t3)+(0,0)$) [rectangle,draw,pattern=vertical lines,pattern color=myParula05Green!40,minimum height=.6cm,minimum width=2.4cm] {\footnotesize $A_3^2$};

    \node at ($(r3t)+(-0.6,0)$) [rectangle,draw,pattern=north east lines,pattern color=myParula01Blue!40,minimum height=.6cm,minimum width=1.2cm] {\footnotesize $Z_2^3$};
    \node at ($(r3t)+(+0.6,0)$) [rectangle,draw,pattern=horizontal lines,pattern color=myParula01Blue!40,minimum height=.6cm,minimum width=1.2cm] {\footnotesize $A_2^3$};
    \node at ($(r3nt)+(-0.6,0)$) [rectangle,draw,pattern=north east lines,pattern color=myParula01Blue!40,minimum height=.6cm,minimum width=1.2cm] {\footnotesize $Z_2^3$};
    \node at ($(r3nt)+(+0.6,0)$) [rectangle,draw,pattern=horizontal lines,pattern color=myParula01Blue!40,minimum height=.6cm,minimum width=1.2cm] {\footnotesize $A_2^3$};
  \end{tikzpicture}
  \caption{Visualization of transmit (TX) and receive (RX) signal spaces achieving $\bm{d} = (0.5, 0, 0.5, 4, 0, 4)$ under the assumption of $(M_1, M_2, M_3, \tau) = (5, 7, 4, 0.5)$  (depending on intermittency state, where the first and second row are received with probability $\tau$ and $\ntau$, respectively; $Z_i^j$ denotes $w_{ij}^\ZF$, $A_i^j$ denotes $w_{ij}^\IA$)}
  \label{fig:achievability-numeric-example-signal-spaces}
\end{figure}
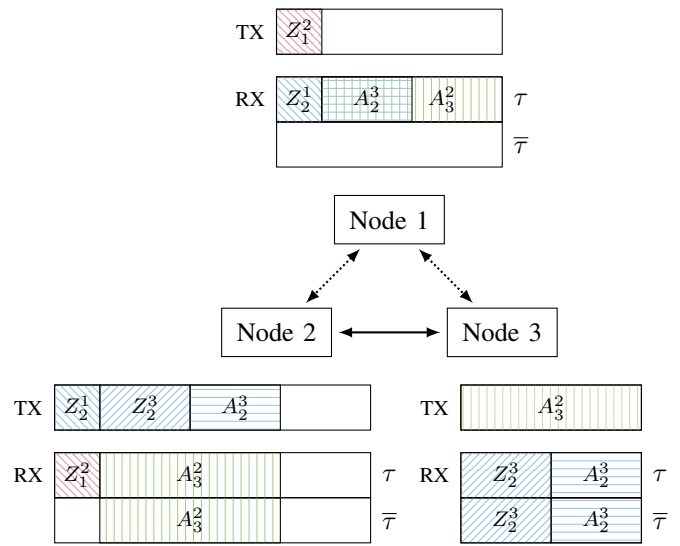

\subsection{Sum-DoF}

The devised non-adaptive encoding scheme based on ZF, IA and EC is sum-DoF optimal in the intermittent 3WC, as we will prove in this section.

\subsubsection{Lower Bounds}
\label{sec:nodeint-sum-dof-lb}

Using \eqref{eq:achievable-scheme-dof-region-nodeint-first} to \eqref{eq:achievable-scheme-dof-region-nodeint-last}, we derive a lower bound on the sum-DoF of the intermittent 3WC with intermittent node $1$. For each of the six cases of $M_i \geq M_j \geq M_k$ (for all possible combinations $i, j, k \in \{ 1, 2, 3 \}$ mutually distinct), we solve the linear program maximizing sum-DoF using, e.g., the simplex algorithm.

\begin{definition}
  We denote
  \[ d_{\mathrm{sum},\mathrm{LB},\overline{\mathrm{A}}}^\mathrm{I} \eqdef \max_{\bm{d} \in \mathcal D_{\mathrm{IB},\overline{\mathrm{A}}}^\mathrm{I}} \left[ d_{12} + d_{13} + d_{21} + d_{23} + d_{31} + d_{32} \right]. \]
\end{definition}

\begin{table}[!t]
	\caption{Achievable Sum-DoFs of Intermittent 3WC}
  \label{tab:node-intermittency-sum-dof-lb-cases}
	\centering
	\begin{tabular}{cl}
		\toprule
		Case & Sum-DoF Lower Bound \\
		\midrule
		$M_1 \geq M_2 \geq M_3$ & $d_{\mathrm{sum},\mathrm{LB},\overline{\mathrm{A}}}^\mathrm{I} = 2 M_3 + 2 \tau M_2 - 2 \tau M_3$ \\
    $M_2 \geq M_1 \geq M_3$ & $d_{\mathrm{sum},\mathrm{LB},\overline{\mathrm{A}}}^\mathrm{I} = 2 M_3 + 2 \tau M_1 - 2 \tau M_3$ \\
    $M_2 \geq M_3 \geq M_1$ & $d_{\mathrm{sum},\mathrm{LB},\overline{\mathrm{A}}}^\mathrm{I} = 2 M_3$ \\
		$M_1 \geq M_3 \geq M_2$ & $d_{\mathrm{sum},\mathrm{LB},\overline{\mathrm{A}}}^\mathrm{I} = 2 M_2 + 2 \tau M_3 - 2 \tau M_2$ \\
    $M_3 \geq M_1 \geq M_2$ & $d_{\mathrm{sum},\mathrm{LB},\overline{\mathrm{A}}}^\mathrm{I} = 2 M_2 + 2 \tau M_1 - 2 \tau M_2$ \\
    $M_3 \geq M_2 \geq M_1$ & $d_{\mathrm{sum},\mathrm{LB},\overline{\mathrm{A}}}^\mathrm{I} = 2 M_2$ \\
		\bottomrule
	\end{tabular}
\end{table}

The resulting lower bounds for each of the different cases are listed in Table~\ref{tab:node-intermittency-sum-dof-lb-cases}, and are condensed into a single expression in the following lemma:

\begin{lemma}[Sum-DoF Lower Bound for Node-Intermittent 3WC]
  \label{thm:node-intermittency-sum-dof-lb}
  The sum-DoF
  \begin{IEEEeqnarray}{rCl}
      \IEEEeqnarraymulticol{3}{l}{
          d_{\mathrm{sum},\mathrm{LB},\overline{\mathrm{A}}}^\mathrm{I}
      }\nonumber\\
      \quad&=&
          2 \ntau \min\{ M_2, M_3 \} + 2 \tau \Big( M_1+M_2+M_3 \nonumber\\
          && \quad {}-{} \min\{M_1,M_2,M_3\}-\max\{M_1,M_2,M_3\} \Big)
          \nonumber\\
        &\leq& d_{\mathrm{sum}}^\mathrm{I} \nonumber
  \end{IEEEeqnarray}
  is achievable in the node-intermittent 3WC and therefore constitutes a lower bound on the sum-DoF of the node-intermittent 3WC.
\end{lemma}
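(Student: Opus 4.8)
The plan is to observe that the quantity $d_{\mathrm{sum},\mathrm{LB},\overline{\mathrm{A}}}^\mathrm{I}$ has essentially already been evaluated case-by-case in Table~\ref{tab:node-intermittency-sum-dof-lb-cases}, so the proof reduces to two tasks: (i) folding those six case expressions into the single closed form claimed in the lemma, and (ii) arguing that this value lower-bounds the true sum-DoF $d_{\mathrm{sum}}^\mathrm{I}$.

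First I would set up, for each of the six orderings $M_i \geq M_j \geq M_k$, the linear program of maximizing $\sum_{i\neq j} d_{ij}$ subject to the achievable-region constraints \eqref{eq:achievable-scheme-dof-region-nodeint-first}--\eqref{eq:achievable-scheme-dof-region-nodeint-last}. Under each ordering the $(\cdot)^+$ terms resolve deterministically, forcing several of the auxiliary variables $a_{ij}^\ZF$ and $\gamma_i$ to zero; maximizing the remaining linear objective over the resulting polytope (e.g.\ via the simplex method) yields precisely the six entries tabulated in Table~\ref{tab:node-intermittency-sum-dof-lb-cases}.

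Second comes the algebraic unification. Writing the median antenna count as $\mathrm{med}\{M_1,M_2,M_3\} = M_1+M_2+M_3-\min\{M_1,M_2,M_3\}-\max\{M_1,M_2,M_3\}$, I would verify that in every case the coefficient of $2\ntau$ is $\min\{M_2,M_3\}$ and the coefficient of $2\tau$ is $\mathrm{med}\{M_1,M_2,M_3\}$. For instance $M_1\geq M_2\geq M_3$ gives the Table entry $2M_3+2\tau M_2-2\tau M_3 = 2\ntau M_3 + 2\tau M_2$, matching $\min\{M_2,M_3\}=M_3$ and median $M_2$. The two cases in which node $1$ has the fewest antennas collapse to $2M_2$ or $2M_3$, since there $\min\{M_2,M_3\}$ already coincides with the median and $\tau$ cancels; the remaining cases are handled identically. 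This establishes the stated closed form for $d_{\mathrm{sum},\mathrm{LB},\overline{\mathrm{A}}}^\mathrm{I}$.

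Finally, the lower-bound assertion is immediate: the maximizing DoF tuple lies in $\mathcal D_{\mathrm{IB},\overline{\mathrm{A}}}^\mathrm{I}$, which by Theorem~\ref{thm:node-intermittency-dof-region-ib} satisfies $\mathcal D_{\mathrm{IB},\overline{\mathrm{A}}}^\mathrm{I}\subseteq\mathcal D^\mathrm{I}$, so this tuple is achievable in the node-intermittent 3WC; since $d_{\mathrm{sum}}^\mathrm{I}$ is the maximum of $\sum_{i\neq j} d_{ij}$ over $\mathcal D^\mathrm{I}$, we obtain $d_{\mathrm{sum},\mathrm{LB},\overline{\mathrm{A}}}^\mathrm{I}\leq d_{\mathrm{sum}}^\mathrm{I}$. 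The main obstacle is the first step: each LP carries the auxiliary variables $\gamma_i$ and $a_{ij}^\ZF$ alongside the six DoF variables, and correctly resolving the $(\cdot)^+$ terms per case -- that is, determining which ZF dimensions survive and how much interference can be aligned before the variables are eliminated -- is where the bookkeeping is delicate, even though each individual LP is elementary.
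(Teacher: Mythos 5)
Your proposal matches the paper's own argument: the paper likewise solves the sum-DoF--maximizing linear program over \eqref{eq:achievable-scheme-dof-region-nodeint-first}--\eqref{eq:achievable-scheme-dof-region-nodeint-last} separately for each of the six antenna orderings (via the simplex algorithm), tabulates the six results in Table~\ref{tab:node-intermittency-sum-dof-lb-cases}, and condenses them into the single min/median expression, with the inequality $d_{\mathrm{sum},\mathrm{LB},\overline{\mathrm{A}}}^\mathrm{I}\leq d_{\mathrm{sum}}^\mathrm{I}$ following from $\mathcal D_{\mathrm{IB},\overline{\mathrm{A}}}^\mathrm{I}\subseteq\mathcal D^\mathrm{I}$ exactly as you say. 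Your case-by-case verification of the closed form (coefficient of $2\ntau$ equal to $\min\{M_2,M_3\}$ and of $2\tau$ equal to the median) is correct and is the same bookkeeping the paper performs implicitly.
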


\subsubsection{Upper Bounds}
\label{sec:nodeint-sum-dof-ub}

We first motivate the converse techniques used throughout this section, then summarize the resulting upper bounds in Lemma~\ref{thm:node-intermittency-sum-dof-ub}, and in the sequel provide rigorous proofs. The general approach for upper bounding the sum-DoF of the intermittent 3WC is as follows: Partition the DoF sum $d_{ij} + d_{ik} + d_{ji} + d_{jk} + d_{ki} + d_{kj}$ into two partial sums $d_{ij} + d_{kj} + d_{ki}$ and $d_{ik} + d_{jk} + d_{ji}$ (Fig.~\ref{fig:nodeint-converse-general-nodes}), where $w_{ij}, w_{kj}, w_{ki}$ are to be decoded by node $j$ and $w_{ik}, w_{jk}, w_{ji}$ are to be decoded by node $k$. There are three such partitions, and the partition is fully determined by choosing which node takes the role of node $i$. While nodes $j$ and $k$ function exclusively as source and sink in any one of the partial sums, node $i$ is an intermediary node in both.

\begin{figure}[t]
  \centering
  \subfloat[]{%
    \begin{tikzpicture}%
      [msgarrow/.append style={
        thick,
        shorten >=\pgflinewidth*3,
        shorten <=\pgflinewidth*3,
      }]

      \node [rectangle,draw,inner sep=2mm] (node2) at (-1,0) {$j$};
      \node [rectangle,draw,inner sep=2mm] (node3) at (+1,0) {$k$};
      \node [rectangle,draw,inner sep=2mm] (node1) at (0,+1.4) {$i$};

      \draw[-latex,msgarrow] (node1) -- (node2);
      \draw[latex-,msgarrow] (node1) -- (node3);
      \draw[latex-,msgarrow] (node2) -- (node3);

    \end{tikzpicture}%
    \label{fig:nodeint-converse-general-nodes-around-j}%
  }
  \hspace{1cm}
  \subfloat[]{%
    \begin{tikzpicture}%
      [msgarrow/.append style={
        thick,
        shorten >=\pgflinewidth*3,
        shorten <=\pgflinewidth*3,
      }]

      \node [rectangle,draw,inner sep=2mm] (node2) at (-1,0) {$j$};
      \node [rectangle,draw,inner sep=2mm] (node3) at (+1,0) {$k$};
      \node [rectangle,draw,inner sep=2mm] (node1) at (0,+1.4) {$i$};

      \draw[latex-,msgarrow] (node1) -- (node2);
      \draw[-latex,msgarrow] (node1) -- (node3);
      \draw[-latex,msgarrow] (node2) -- (node3);

    \end{tikzpicture}%
    \label{fig:nodeint-converse-general-nodes-around-k}%
  }
  \caption{Partition the DoF sum $d_{ij} + d_{ik} + d_{ji} + d_{jk} + d_{ki} + d_{kj}$ into two partial sums $d_{ij} + d_{kj} + d_{ki}$ to be decoded at node $j$ \protect\subref{fig:nodeint-converse-general-nodes-around-j} and $d_{ik} + d_{jk} + d_{ji}$ to be decoded at node $k$ \protect\subref{fig:nodeint-converse-general-nodes-around-k}, where nodes $j$ and $k$ are provided enough side information such that they can recover the receive signal $\y_i^n$ of node $i$ and from it decode the one message not originally intended for them (additional $d_{ki}$ and $d_{ji}$ DoFs, respectively).}
  \label{fig:nodeint-converse-general-nodes}
\end{figure}

Have an imaginary genie provide just enough side information to node $j$ and $k$ (hence the name `genie-aided' bound), respectively, such that they can recover the receive signal $\y_i^n$ of node $i$, then (assuming existence of a suitable coding scheme) nodes $j$ and $k$ can decode the additional messages $w_{ki}$ and $w_{ji}$, respectively. The details of this decoding process and the required side information will be presented in due course. Here we only remark, that the side information can serve for the following four purposes (some cases might not require some of the types of side information):
\begin{enumerate}
  \item An additional message is required such that node $j$ (or $k$) can decode the additional message $w_{ki}$ (or $w_{ji}$) from the recovered $\y_i^n$, because decoding requires knowledge of $\w_i$, and node $j$ (or $k$) only knows $w_{ij}$ (or $w_{ik}$) from decoding its own receive signal $\y_j^n$ (or $\y_k^n$).
  \item Node $j$ (or $k$) might be incapable of capturing enough information about $\y_i^n$ because $M_j$ (or $M_k$) is small. In this case, additional measurements about $\y_i^n$ (or alternatively about the unknown `ingredient' of interest, $\x_k^n$ or $\x_j^n$, respectively) need to be provided by side information.
  \item Node $j$ (or $k$) might be incapable of capturing enough information about $\y_i^n$ because of intermittency. In this case, additional measurements about $\y_i^n$ for those time instances $\ell$ where $s_\ell = 0$ need to be provided by side information.
  \item For rather technical reasons a noise correction signal is required to accurately recover $\y_i^n$. However, the information contained in this signal about the three desired messages in question scales only as $\ologp$.
\end{enumerate}
Assuming reliable communication, the partial DoF sum of each three messages is necessarily upper bounded by a mutual information expression, using Fano's inequality.
Adding the two resulting bounds yields an upper bound on the sum-DoF. The challenge with this approach is two-fold: a) Provide as little side information as possible to the respective nodes. Otherwise the genie can be `mis-used' for information exchange, resulting in a larger DoF for the genie-enhanced system and thus loose bounds for the non-enhanced system. b) Use tight bounding when expanding the mutual information expression from Fano's inequality.

For the intermittent 3WC the partition that yields the tightest upper bound on the sum-DoF depends on the numbers of antennas. The node with largest number of antennas should take the role of the intermediary node $i$ (Fig.~\ref{fig:nodeint-converse-general-nodes}). The order among the remaining two nodes decides about which side information to give to which node, to compensate for insufficient number of antennas or intermittency. We prove upper bounds for the three cases $M_1 \geq M_2 \geq M_3$, $M_2 \geq M_1 \geq M_3$ and $M_2 \geq M_3 \geq M_1$, the remaining three cases go by renaming node $2$ and $3$. We obtain:

\begin{lemma}[Sum-DoF Upper Bound for Node-Intermittent 3WC]
  \label{thm:node-intermittency-sum-dof-ub}
  \begin{multline}
    d_{\mathrm{sum}}^\mathrm{I} \leq 2 \ntau \min\{ M_2, M_3 \} + 2 \tau \Big( M_1+M_2+M_3 \\ \quad {}-{} \min\{M_1,M_2,M_3\}-\max\{M_1,M_2,M_3\} \Big)
    \nonumber
  \end{multline}
\end{lemma}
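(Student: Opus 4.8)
The plan is to establish the bound by a genie-aided converse following the blueprint laid out before the lemma: fix a partition of the six messages determined by the choice of an intermediary node $i$, bound each of the two resulting partial DoF sums by a mutual-information term via Fano's inequality, and add the two bounds. Guided by the preceding discussion, I would always let $i$ be the node with the largest number of antennas, since that node then plays the role of the node whose received signal $\y_i^n$ is to be reconstructed at both of its neighbours. Concretely, for the ordering $M_1 \ge M_2 \ge M_3$ I take $i=1$, $(j,k)=(2,3)$; for $M_2 \ge M_1 \ge M_3$ and $M_2 \ge M_3 \ge M_1$ I take $i=2$; the three remaining orderings follow by swapping the names of nodes $2$ and $3$. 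With $i,j,k$ fixed, the total DoF sum splits as $(d_{ij}+d_{kj}+d_{ki}) + (d_{ik}+d_{jk}+d_{ji})$, where the first triple is to be decoded at node $j$ and the second at node $k$, each node decoding two messages natively and one additional cross message ($w_{ki}$ at $j$, $w_{ji}$ at $k$) after genie assistance.

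For the first partial sum I would provide node $j$ with a minimal genie bundle $G_j$ so that node $j$ can reliably decode $(W_{ij},W_{kj},W_{ki})$ from $(\y_j^n, S^n, G_j)$, and then invoke Fano to write
\[ n(R_{ij}+R_{kj}+R_{ki}) \le \MInf(W_{ij}, W_{kj}, W_{ki}; \y_j^n, G_j \mid S^n) + n\epsilon_n, \]
and symmetrically at node $k$. The bundle is assembled from the four ingredients itemised earlier: (a) node $i$'s second message (e.g.\ $w_{ik}$ handed to node $j$) so that $j$ can emulate node $i$'s decoder; (b) when $M_j < M_i$, extra linear measurements of $\x_k^n$ (equivalently of $\y_i^n$) to compensate for the antenna deficit; (c) for the $\ntau$-fraction of time instants with $s_\ell=0$, the measurements of the intermittent partner that node $i$ would have seen but node $j$ cannot infer; and (d) a noise-matching signal $\zcorrS$ whose sole role is to convert node $j$'s noise into node $i$'s and whose mutual-information contribution is $\ologp$, hence DoF-irrelevant. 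Only ingredients (b) and (c) carry DoF: (c) contributes the $\ntau\,\min\{M_2,M_3\}$ term, with the $\ntau$ weighting coming directly from $\Prob[S_\ell=0]$ and the dimension count from the smaller of the two non-intermediary nodes, while the on-state reconstruction together with (a) and (b) yields the $\tau$-weighted median term.

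I would then expand the two mutual-information expressions using the chain rule and standard Gaussian bounds, replacing differential entropies by $\log\det$ of covariance matrices. Conditioning on $S^n$ separates each expression into an $s_\ell=1$ part (probability $\tau$) and an $s_\ell=0$ part (probability $\ntau$) exactly as in Lemma~\ref{thm:prelim-capacity-intermittent-gaussian-mimo-channel}, so that every $\log\det$ over a $d$-dimensional observation scales as $d\logp + \ologp$. Summing the two partial bounds, dividing by $n\logp$, and taking $\limsup_{\rho\to\infty}$ (then $n\to\infty$, so that $\epsilon_n\to0$ and all $\ologp$ contributions vanish) leaves, in each antenna ordering, the constant $2\ntau\min\{M_2,M_3\} + 2\tau\,\mathrm{med}\{M_1,M_2,M_3\}$ with $\mathrm{med}\{M_1,M_2,M_3\} = M_1+M_2+M_3-\min\{M_1,M_2,M_3\}-\max\{M_1,M_2,M_3\}$. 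Checking the three orderings against Table~\ref{tab:node-intermittency-sum-dof-lb-cases} (and using the node-$2$/node-$3$ symmetry for the remaining three) assembles the single closed-form expression claimed.

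The main obstacle is the tight design of the genie in ingredients (b) and (c): it must be simultaneously sufficient for decoding, so that Fano applies, yet parsimonious enough that the genie cannot be mis-used as an extra communication resource, which would inflate the bound. The delicate bookkeeping is to supply exactly the $\min\{M_2,M_3\}$ missing dimensions during off-states and no more, and to keep the noise-correction signal (d) within the $\ologp$ budget so that it does not contaminate the DoF count. A secondary subtlety is that the argument must remain valid under adaptive encoding, where $\x_j^n$ and $\x_k^n$ depend causally on past received signals and hence on $\w_1$; the conditioning on $S^n$ and on the genie-supplied messages has to be arranged so that the reconstruction of $\y_i^n$ stays causally consistent and the mutual-information chain does not secretly rely on non-causal information.
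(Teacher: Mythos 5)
Your proposal follows essentially the same route as the paper: the same choice of the largest-antenna node as intermediary, the same partition into two partial sums decoded at its two neighbours, the same four-ingredient genie (an extra message, antenna augmentation, off-state measurements, and an $\ologp$ noise-correction term), Fano's inequality followed by a per-letter mutual-information bound split over $s_\ell\in\{0,1\}$, and summation of the two bounds in each antenna ordering. One imprecision worth fixing in your accounting: the $\ntau\min\{M_2,M_3\}$ term is not ``contributed by'' the off-state genie measurements nor given by the smaller of the two non-intermediary antenna counts (for $M_2\ge M_3\ge M_1$ that reading would give $\ntau M_1$, which is too small); it is the off-state DoF of the always-on link $2\leftrightarrow 3$ surviving in the MIMO channel count when $s_\ell=0$, and indeed in the case $M_1\ge M_2\ge M_3$ the paper needs no intermittency-compensating side information at all.
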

\begin{proof}
  The lemma follows from \eqref{eq:nodeint-converse-M1geqM2geqM3}, \eqref{eq:nodeint-converse-M2geqM1geqM3}, \eqref{eq:nodeint-converse-M2geqM3geqM1} below, and symmetry of node $2$ and $3$.
\end{proof}

Using the achievability and converse results developed in Sections~\ref{sec:nodeint-sum-dof-lb} and \ref{sec:nodeint-sum-dof-ub}, we establish the sum-DoF of the intermittent 3WC for which non-adaptive encoding is sufficient, i.e.
\begin{IEEEeqnarray}{rCl}
    \IEEEeqnarraymulticol{3}{l}{
        d_{\mathrm{sum},\mathrm{LB},\overline{\mathrm{A}}}^\mathrm{I}
    }\nonumber\\
    \quad&=&
        2 \ntau \min\{ M_2, M_3 \} + 2 \tau \Big( M_1+M_2+M_3 \nonumber\\
      &&\quad {}-{} \min\{M_1,M_2,M_3\}-\max\{M_1,M_2,M_3\} \Big)
        \nonumber\\
      &=& d_{\mathrm{sum}}^\mathrm{I}.
        \nonumber
\end{IEEEeqnarray}

This proves Theorem~\ref{thm:node-intermittency-sum-dof}.
We proceed to present rigorous proofs for the aforementioned three cases.

\paragraph{Case 1: $M_1 \geq M_2 \geq M_3$}

For the case where $M_1$ is the largest number of antennas, we develop two partial sums around nodes $2$ and $3$ (Fig.~\ref{fig:nodeint-converse-general-nodes-around-j} and \ref{fig:nodeint-converse-general-nodes-around-k}, with $(i,j,k) = (1,2,3)$), respectively.
We start with the bound around node $2$ (Fig.~\ref{fig:nodeint-converse-general-nodes-around-j}, with $(i,j,k) = (1,2,3)$), as it requires less side information and is therefore simpler to argue, and in the sequel extend the basic technique to develop the bound around node $3$ (Fig.~\ref{fig:nodeint-converse-general-nodes-around-k}, with $(i,j,k) = (1,2,3)$), which requires more side information and is therefore slightly more involved.

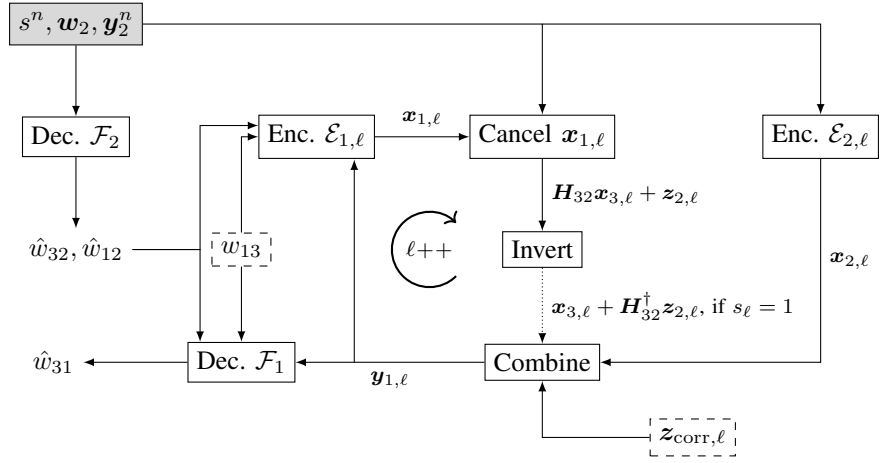
\begin{figure*}[!t]
  \centering
  \begin{tikzpicture}
    \node (in) at (0,0) [draw,fill=gray!30] {$s^n,\w_2,\y_2^n$};
    \node (d3) at ($(in)-(0,1.5)$) [rectangle,draw] {Dec. $\mathcal{F}_2$};
    \draw[-latex] (in) to (d3);
    \node (Wh3) at ($(d3)-(0,1.5)$) {$\hat{w}_{32},\hat{w}_{12}$};
    \draw[-latex] (d3) to (Wh3);
    \node (e1) at ($(d3)+(3.2,0)$) [rectangle,draw] {Enc. $\mathcal{E}_{1,\ell}$};
    \draw[-latex] (Wh3) to ($(e1)-(1.55,1.5)$) to ($(e1)-(1.55,-0.15)$) to ($(e1)-(0.775,-0.15)$);
    \draw[-latex] (Wh3) to ($(e1)-(1.55,1.5)$) to ($(e1)-(1.55,2.73)$);
    \node (si1) at ($(e1)-(1.0,1.5)$) [rectangle,draw,dashed] {$w_{13}$};
    \draw[-latex] (si1) to ($(si1)+(0,1.5)$) to (e1);
    \node (b) at ($(e1)+(3,0)$) [rectangle,draw] {Cancel $\x_{1,\ell}$};
    \draw[-latex] (e1) to node[above] {\footnotesize $\x_{1,\ell}$} (b);
    \draw[-latex] (in) to ($(b)+(0,1.5)$) to (b);
    \node (bb) at ($(b)-(0,1.5)$) [rectangle,draw] {Invert};
    \draw[<-,thick] ($(bb)-(1.5,0)$) ++(45:0.5) arc(45:315:0.5);
    \node (iteratelabel) at ($(bb)-(1.5,0)$) {\footnotesize $\mathbf{\ell\!+\!+}$};
    \draw[-latex] (b) to node[right] {\footnotesize $\H_{32}\x_{3,\ell}+\z_{2,\ell}$} (bb);
    \node (bbb) at ($(bb)-(0,1.5)$) [rectangle,draw] {Combine};
    \draw[-latex,densely dotted] (bb) to node[right] {\footnotesize $\x_{3,\ell} + \H_{32}^\pinv \z_{2,\ell}$, if $s_{\ell}=1$} (bbb);
    \node (e3) at ($(b)+(3.7,0)$) [rectangle,draw] {Enc. $\mathcal{E}_{2,\ell}$};
    \draw[-latex] ($(b)+(0,1.5)$) to ($(e3)+(0,1.5)$) to (e3);
    \draw[-latex] (e3) to node[right] {\footnotesize $\x_{2,\ell}$}  ($(e3)-(0,3)$) to (bbb);
    \node (d1) at ($(bbb)-(4,0)$) [rectangle,draw] {Dec. $\mathcal{F}_{1}$};
    \draw[-latex] (si1) to (d1);
    \draw[-latex] (bbb) to node[below] {\footnotesize $\y_{1,\ell}$} (d1);
    \draw[-latex] ($(bbb)-(2.5,0)$) to ($(bbb)+(-2.5,2.7)$) ;
    \node (Wh1) at ($(d1)-(2.5,0)$) {$\hat{w}_{31}$};
    \draw[-latex] (d1) to (Wh1);
    \node (zc) at ($(bbb)+(2,-1)$) [rectangle,draw,dashed] {$\zcorr{\ell}$};
    \draw[-latex] (zc) to ($(bbb)-(0,1)$) to (bbb);
  \end{tikzpicture}
  \caption{Decoding of $w_{31}$ at node $2$ by iterative reconstruction of $\y_1^n$ from \DAP{a-priori knowledge} $\w_2$ and \DO{observations} $(\y_2^n, s^n)$ using \DSI{\text{side information}} $(w_{13}, \z_{\mathrm{corr}}^n)$ (grey box: a-priori knowledge and channel output, dashed boxes: side information as introduced in the running text): successively obtain $\x_{1,\ell}$, cancel its effect from $\y_{2,\ell}$ to obtain a noisy version of $\x_{3,\ell}$, and combine this with $\x_{2,\ell}$ and $\zcorr{\ell}$ to finally obtain $\y_{1,\ell}$; repeat for next $\ell$.}
  \label{fig:converse-nodeint-M1geqM2geqM3-around2}
\end{figure*}

Which side information does node $2$ need to be able to recover $\y_1^n$ and decode $w_{31}$ from it, assuming a scheme allowing every node to decode its desired messages with high probability? We present a suitable iterative multi-step process depicted in Fig.~\ref{fig:converse-nodeint-M1geqM2geqM3-around2}.
Along the way, side information is introduced (\emph{highlighted in italic}) as found necessary for the reconstruction process.
At the end of the transmission, node $2$ has $\w_2$, $\y_2^n$, $s^n$ and $\x_2^n$, as shown on the top left of the figure. Using the decoder $\mathcal F_2$ it can decode $(\hat{w}_{12}, \hat{w}_{32}) = (w_{12}, w_{32})$ with high probability. \emph{Assume we provide $w_{13}$ as side information, so that node $2$ can decode messages intended for node $1$ using $\mathcal F_1$ as soon as it obtains $\y_1^n$, as shown on the bottom left of the figure.} Node $2$ can now obtain $\x_{1,1}$ from $\w_1$ using $\mathcal E_{1,1}$. From $(s_1, \y_{2,1})$ it can obtain $\H_{32} \x_{3,1} + \z_{2,1}$ using $\x_{1,1}$. Since $\H_{32}$ is a tall matrix, a noisy version of $\x_{3,1}$ can be obtained from $\H_{32} \x_{3,1} + \z_{2,1}$ using the pseudo-inverse $\H_{32}^\pinv$, i.e.,
\[ \H_{32}^\pinv \left( \H_{32} \x_{3,1} + \z_{2,1} \right) = \x_{3,1} + \H_{32}^\pinv \z_{2,1}. \]
Using the noisy version of $\x_{3,1}$, node $2$ can obtain a noisy version of $\y_{1,1}$, i.e.,
\begin{IEEEeqnarray}{rCl}
    \IEEEeqnarraymulticol{3}{l}{
        s_1 \left( \H_{21} \x_{2,1} + \H_{31} \left( \x_{3,1} + \H_{32}^\pinv \z_{2,1} \right) \right)
    }\nonumber\\
    \quad&=&
        s_1 \H_{21} \x_{2,1} + s_1 \H_{31} \x_{3,1} + s_1 \H_{31} \H_{32}^\pinv \z_{2,1}.
        \nonumber
\end{IEEEeqnarray}
\emph{Given a suitably formed noise correction term $\zcorr{1} \eqdef \z_{1,1} - s_1 \H_{31} \H_{32}^\pinv \z_{2,1}$ as side information, node $2$ can finally obtain $\y_{1,1}$, i.e.}
\[ \y_{1,1} = s_1 \H_{21} \x_{2,1} + s_1 \H_{31} \x_{3,1} + s_1 \H_{31} \H_{32}^\pinv \z_{2,1} + \zcorr{1}. \]
Then, this reconstruction cycle repeats for the next $\ell = 2, ..., n$, where all previous
$(\y_1^{\ell-1}, s^{\ell-1})$
are used to obtain $\x_{1,\ell}$ using $\mathcal E_{1,\ell}$.
After completing reconstruction of $\y_1^n$, node $2$ uses $\mathcal F_1$ and $(\hat{w}_{12}, w_{13})$ to decode $\hat{w}_{31} = w_{31}$ with high probability.
From the four abstract types of side information introduced before, only two are required for the reconstruction and subsequently for this bound: a message and a noise correction term. \emph{No side information to compensate for insufficient number of antennas or intermittency is required.}

In a nutshell, side information $w_{13}$ and $\zcorrS^n$ is provided to node $2$ by a genie, defined as
\begin{IEEEeqnarray}{rCl}
  \Z_{\mathrm{corr},\ell} &\eqdef& \Z_{1,\ell} - S_{\ell} \H_{31} \H_{32}^\pinv \Z_{2,\ell}. \nonumber
\end{IEEEeqnarray}
Since the scheme ought to be reliable, we bound the sum rate of $w_{12}$, $w_{32}$ and $w_{31}$ using Fano's inequality:
\begin{IEEEeqnarray}{rCl}
  \IEEEeqnarraymulticol{3}{l}{
      n(R_{12} + R_{32} + R_{31} - \varepsilon_n^{(1)})
  }\nonumber\\
  \quad
    &\leq& \MInf(W_{12} W_{32} W_{31} ; \W_2 \Y_2^n S^n \overbrace{W_{13} \ZcorrS^n}^{\text{side information}} ) \nonumber\\
    &\eqA& \MInf(W_{12} W_{32} W_{31} ; \Y_2^n \mid W_{13} \W_2 S^n \ZcorrS^n) \nonumber\\
    &\eqB& \sum_{\ell=1}^n \MInf(W_{12} \W_3; \Y_{2,\ell} \mid \Y_2^{\ell-1} S^n W_{13} \W_2 \ZcorrS^n) \nonumber\\
    &=& \sum_{\ell=1}^n \Big[ \hEntr(\Y_{2,\ell} \mid \Y_2^{\ell-1} S^n W_{13} \W_2 \ZcorrS^n) \nonumber\\&& \quad {}-{} \hEntr(\Y_{2,\ell} \mid \Y_2^{\ell-1} S^n \W_1 \W_2 \W_3 \ZcorrS^n) \Big] \nonumber\\
    &\leqC& \sum_{\ell=1}^n \Big[ \hEntr(\Y_{2,\ell} \mid S_{\ell}) \nonumber\\ && \quad {}-{} \hEntr(\Y_{2,\ell} \mid \Y_2^{\ell-1} S^n \W_1 \W_2 \W_3 \ZcorrS^n \X_{1,\ell} \X_{3,\ell}) \Big] \nonumber\\
    &\eqD& \sum_{\ell=1}^n \Big[ \hEntr(\Y_{2,\ell} \mid S_{\ell}) - \hEntr(\Y_{2,\ell} \mid S_{\ell} \Zcorr{\ell} \X_{1,\ell} \X_{3,\ell}) \Big] \nonumber\\
    &=& \sum_{\ell=1}^n \MInf(\Zcorr{\ell} \X_{1,\ell} \X_{3,\ell} ; \Y_{2,\ell} \mid S_{\ell}) \nonumber\\
    &\eqB& \sum_{\ell=1}^n \Big[ \MInf(\X_{1,\ell} \X_{3,\ell} ; \Y_{2,\ell} \mid S_{\ell}) \nonumber\\&&\quad {}+{} \MInf(\Zcorr{\ell} ; \Y_{2,\ell} \mid S_{\ell} \X_{1,\ell} \X_{3,\ell}) \Big] \nonumber\\
    &\eqE& \sum_{\ell=1}^n \Big[ \MInf(\X_{1,\ell} \X_{3,\ell} ; \Y_{2,\ell} \mid S_{\ell}) + \overbrace{\MInf(\Zcorr{\ell} ; \Z_{2,\ell} \mid S_{\ell})}^{=\ologp} \Big] \nonumber\\
    &\leqF& n \left[ \tau M_2 + \ntau M_3 \right] \logp + n \ologp \nonumber
\end{IEEEeqnarray}
These steps are justified as follows:
\begin{itemize}
  \item[\stepA] $(W_{12}, \W_3)$ is independent of $(W_{13}, \W_2, S^n, \ZcorrS^n)$ %
  \item[\stepB] Chain rule for mutual information
  \item[\stepC] Conditioning reduces entropy %
  \item[\stepD] $\Y_{2,\ell}$ is independent of $(\Y_2^{\ell-1}, S^{\ell-1}, S_{\ell+1}^n,\allowbreak \ZcorrS^{\ell-1},\allowbreak \Zcorr{\ell+1}^n,\allowbreak \W_1,\allowbreak \W_2, \W_3)$ given $(S_{\ell}, \Zcorr{\ell}, \X_{1,\ell}, \X_{3,\ell})$
  \item[\stepE] %
  $\MInf(\Zcorr{\ell} ; \Y_{2,\ell} \mid S_{\ell} \X_{1,\ell} \X_{3,\ell}) = \MInf(\Zcorr{\ell} ; \Z_{2,\ell} \mid S_{\ell} \X_{1,\ell}\allowbreak \X_{3,\ell})$, and $(\Zcorr{\ell},\Z_{2,\ell})$ is independent of $(\X_{1,\ell}, \X_{3,\ell})$ given $S_{\ell}$
  \item[\stepF] $(\X_{1,\ell}, \X_{3,\ell}) \leadsto \Y_{2,\ell}$ is a MIMO channel with $\min\{M_1+M_3,M_2\}=M_2$ DoFs if $s_\ell=1$, and $\min\{0+M_3,M_2\}=M_3$ DoFs if $s_\ell=0$
\end{itemize}
Dividing both sides by $n \logp$ and letting $\rho, n \to \infty$ we obtain
\begin{IEEEeqnarray}{rCl}
  \label{eq:nodeint-converse-M1geqM2geqM3-around2}
  d_{12} + d_{32} + d_{31} &\leq& \tau M_2 + \ntau M_3.
\end{IEEEeqnarray}

We turn to the second partial sum, developed around node $3$ (Fig.~\ref{fig:nodeint-converse-general-nodes-around-k}, with $(i,j,k) = (1,2,3)$). This bound is slightly more involved, as an additional type of side information is required which compensates for the small number of antennas $M_3$.
Which side information does node $3$ need to be able to recover $\y_1^n$ and decode $w_{21}$ from it, assuming a scheme allowing every node to decode its desired messages with high probability? A suitable process is depicted in Fig.~\ref{fig:converse-nodeint-M1geqM2geqM3-around3}. At the end of the transmission, node $3$ has $\w_3$, $\y_3^n$, $s^n$ and $\x_3^n$, as shown on the top left of the figure. Using the decoder $\mathcal F_3$ it can decode $(\hat{w}_{13}, \hat{w}_{23}) = (w_{13}, w_{23})$ with high probability. \emph{Assume we provide $w_{12}$ as side information, so that node $3$ can decode messages intended for node $1$ using $\mathcal F_1$ as soon as it obtains $\y_1^n$, as shown on the bottom left of the figure.} Node $3$ can now obtain $\x_{1,1}$ from $\w_1$ using $\mathcal E_{1,1}$. From $(s_1, \y_{3,1})$ it can obtain $\H_{23} \x_{2,1} + \z_{3,1}$ using $\x_{1,1}$. \emph{Assume we `virtually' increase the number of antennas at node~$3$ so that it can fully observe $\x_{2,1}$ whenever node $1$ can, by providing $\tilde{\y}_{3,1} = \tilde{\H}_{23} \x_{2,1} + \tilde{\z}_{3,1}$ as side information if $s_1 = 1$, with $\tilde{\H}_{23} \in \mathbb C^{(M_2 - M_3) \times M_2}$ such that $\Rank(\left[\begin{smallmatrix} \H_{23} \\ \tilde{\H}_{23} \end{smallmatrix}\right]) = M_2$.} Note that if $s_1 = 0$ then $\x_{2,1}$ does not contribute to $\y_{1,1}$. In this case, $\x_{2,1}$ does not need to be reconstructed, and therefore no side information to compensate for insufficient number of antennas is required.
We define shortcuts to group receive signal $\y_3^n$ and side information $\tilde{\y}_3^n$ into a joint signal $\hat{\y}_3^n$, i.e.,
\[ \hat{\y}_{3,1} \eqdef \left[\begin{smallmatrix} \y_{3,1} \\ \tilde{\y}_{3,1} \end{smallmatrix}\right], \qquad{} \hat{\H}_{23} \eqdef \left[\begin{smallmatrix} \H_{23} \\ \tilde{\H}_{23} \end{smallmatrix}\right], \qquad{} \hat{\z}_{3,1} \eqdef \left[\begin{smallmatrix} \z_{3,1} \\ \tilde{\z}_{3,1} \end{smallmatrix}\right]. \]
A matrix $\tilde{\H}_{23}$ satisfying $\Rank(\hat{\H}_{23}) = M_2$ exists almost surely and it allows to obtain $\x_{2,1} + \hat{\H}_{23}^{-1} \hat{\z}_{3,1}$ if $s_1 = 1$. \emph{Assume we provide $\z_{\mathrm{corr},1} = \z_{1,1} - s_1 \H_{21} \hat{\H}_{23}^{-1} \hat{\z}_{3,1}$ as side information.} Then node $3$ can obtain $\y_{1,1}$ from $\z_{\mathrm{corr},1}$ if $s_1 = 0$, and from $\x_{3,1}$, $\x_{2,1} + \hat{\H}_{23}^{-1} \hat{\z}_{3,1}$ and $\z_{\mathrm{corr},1}$ if $s_1 = 1$, i.e.,
\begin{multline}
  \y_{1,1} = s_1 \H_{21} (\x_{2,1} + \hat{\H}_{23}^{-1} \hat{\z}_{3,1}) + s_1 \H_{31} \x_{3,1} + \zcorr{1} \nonumber\\ = s_1 \H_{21} \x_{2,1} + s_1 \H_{31} \x_{3,1} + \z_{1,1}. \nonumber
\end{multline}
Using $(\y_{1,1}, s_1, \w_1)$ and the encoder $\mathcal E_{1,2}$ node $3$ can obtain $\x_{1,2}$ and the cycle repeats, for $\ell=2,...,n$. Finally, node $3$ obtains $\y_1^n$, and decodes $w_{21}$ from $(\y_1^n, \w_1, s^n)$. \emph{Side information to compensate for intermittency is not required.}

\begin{figure*}[!t]
  \centering
  \begin{tikzpicture}
    \node (in) at (0,0) [draw,fill=gray!30] {$s^n,\w_3,\y_3^n$};
    \node (d3) at ($(in)-(0,1.5)$) [rectangle,draw] {Dec. $\mathcal{F}_3$};
    \draw[-latex] (in) to (d3);
    \node (Wh3) at ($(d3)-(0,1.5)$) {$\hat{w}_{23},\hat{w}_{13}$};
    \draw[-latex] (d3) to (Wh3);
    \node (e1) at ($(d3)+(3.2,0)$) [rectangle,draw] {Enc. $\mathcal{E}_{1,\ell}$};
    \draw[-latex] (Wh3) to ($(e1)-(1.55,1.5)$) to ($(e1)-(1.55,-0.15)$) to ($(e1)-(0.775,-0.15)$);
    \draw[-latex] (Wh3) to ($(e1)-(1.55,1.5)$) to ($(e1)-(1.55,2.73)$);
    \node (si1) at ($(e1)-(1.0,1.5)$) [rectangle,draw,dashed] {$w_{12}$};
    \draw[-latex] (si1) to ($(si1)+(0,1.5)$) to (e1);
    \node (b) at ($(e1)+(3,0)$) [rectangle,draw] {Cancel $\x_{1,\ell}$};
    \draw[-latex] (e1) to node[above] {\footnotesize $\x_{1,\ell}$} (b);
    \draw[-latex] (in) to ($(b)+(0,1.5)$) to (b);
    \node (bb) at ($(b)-(0,1.5)$) [rectangle,draw] {Invert};
    \draw[<-,thick] ($(bb)-(1.5,0)$) ++(45:0.5) arc(45:315:0.5);
    \node (iteratelabel) at ($(bb)-(1.5,0)$) {\footnotesize $\mathbf{\ell\!+\!+}$};
    \draw[-latex] (b) to node[right] {\footnotesize $\H_{23}\x_{2,\ell}+\z_{3,\ell}$} (bb);
    \node (s2) at ($(bb)+(2,0)$) [rectangle,draw,dashed] {$\tilde{\y}_{3,\ell}$};
    \draw[-latex] (s2) to (bb);
    \node (bbb) at ($(bb)-(0,1.5)$) [rectangle,draw] {Combine};
    \draw[-latex,densely dotted] (bb) to node[right] {\footnotesize $\x_{2,\ell} + \hat{\H}_{23}^{-1}\hat{\z}_{3,\ell}$, if $s_{\ell}=1$} (bbb);
    \node (e3) at ($(b)+(3.7,0)$) [rectangle,draw] {Enc. $\mathcal{E}_{3,\ell}$};
    \draw[-latex] ($(b)+(0,1.5)$) to ($(e3)+(0,1.5)$) to (e3);
    \draw[-latex] (e3) to node[right] {\footnotesize $\x_{3,\ell}$}  ($(e3)-(0,3)$) to (bbb);
    \node (d1) at ($(bbb)-(4,0)$) [rectangle,draw] {Dec. $\mathcal{F}_{1}$};
    \draw[-latex] (si1) to (d1);
    \draw[-latex] (bbb) to node[below] {\footnotesize $\y_{1,\ell}$} (d1);
    \draw[-latex] ($(bbb)-(2.5,0)$) to ($(bbb)+(-2.5,2.7)$) ;
    \node (Wh1) at ($(d1)-(2.5,0)$) {$\hat{w}_{21}$};
    \draw[-latex] (d1) to (Wh1);
    \node (zc) at ($(bbb)+(2,-1)$) [rectangle,draw,dashed] {$\zcorr{\ell}$};
    \draw[-latex] (zc) to ($(bbb)-(0,1)$) to (bbb);
  \end{tikzpicture}
  \caption{Decoding of $w_{21}$ at node $3$ by iterative reconstruction of $\y_1^n$ from \DAP{a-priori knowledge} $\w_3$ and \DO{observations} $(\y_3^n, s^n)$ using \DSI{\text{side information}} $(w_{12}, \tilde{\y}_3^n, \z_{\mathrm{corr}}^n)$ (grey box: a-priori knowledge and channel output, dashed boxes: side information): successively obtain $\x_{1,\ell}$, cancel its effect from $\y_{3,\ell}$ (using side information $\tilde{\y}_{3,\ell}$) to obtain a noisy version of $\x_{2,\ell}$, and combine this with $\x_{3,\ell}$ and $\zcorr{\ell}$ to finally obtain $\y_{1,\ell}$; repeat for next $\ell$.}
  \label{fig:converse-nodeint-M1geqM2geqM3-around3}
\end{figure*}
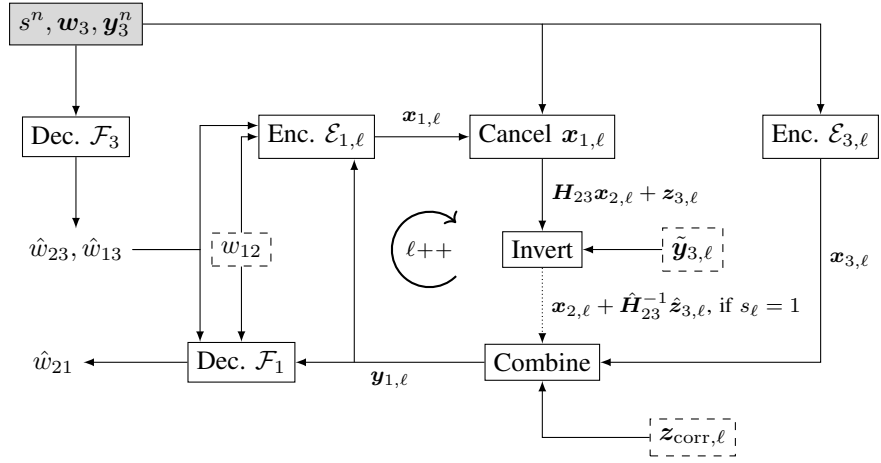

In a nutshell, side information $w_{12}$, $\tilde{\y}_3^n$ and $\z_{\mathrm{corr}}^n$ is provided to node $3$ by a genie, following the definitions
\begin{IEEEeqnarray}{rCl}
    \tilde{\Y}_{3,\ell} &\eqdef& S_{\ell} ( \tilde{\H}_{23} \X_{2,\ell} + \tilde{\Z}_{3,\ell} ), \nonumber\\
    \Z_{\mathrm{corr},\ell} &\eqdef& \Z_{1,\ell} - S_{\ell} ( \H_{21} \hat{\H}_{23}^{-1} \hat{\Z}_{3,\ell} ), \nonumber
\end{IEEEeqnarray}
where
\begin{IEEEeqnarray}{rCl}
  \hat{\H}_{23} &\eqdef& \left[\begin{smallmatrix} \H_{23} \nonumber\\ \tilde{\H}_{23} \end{smallmatrix}\right], \nonumber\\
  \tilde{\H}_{23} &\in& \mathbb C^{(M_2-M_3) \times M_2} \quad{} \text{such that } \Rank(\hat{\H}_{23}) = M_2, \nonumber\\
  \tilde{\Z}_{3,\ell} &\sim& \CN(\bm{0}, \sigma^2 \I_{M_2-M_3}), \nonumber\\
  \hat{\Z}_{3,\ell} &\eqdef& \left[\begin{smallmatrix} \Z_{3,\ell} \nonumber\\ \tilde{\Z}_{3,\ell} \end{smallmatrix}\right],
  \qquad{}
  \hat{\Y}_{3,\ell} \eqdef \left[\begin{smallmatrix} \Y_{3,\ell} \nonumber\\ \tilde{\Y}_{3,\ell} \end{smallmatrix}\right]. \nonumber
\end{IEEEeqnarray}

Since the scheme ought to be reliable, we again bound the sum rate of $w_{13}$, $w_{23}$ and $w_{21}$ using Fano's inequality, and following similar steps as before (for details see Appendix~\ref{sec:node-intermittency-sum-dof-case1-part2}) we obtain
\begin{IEEEeqnarray}{rCl}
  \label{eq:nodeint-converse-M1geqM2geqM3-around3}
  d_{13} + d_{23} + d_{21} &\leq& \tau M_2 + \ntau M_3.
\end{IEEEeqnarray}

Adding \eqref{eq:nodeint-converse-M1geqM2geqM3-around2} and \eqref{eq:nodeint-converse-M1geqM2geqM3-around3} yields a sum-DoF upper bound for the case $M_1 \geq M_2 \geq M_3$,
\begin{IEEEeqnarray}{rCl}
  \label{eq:nodeint-converse-M1geqM2geqM3}
  d_{\mathrm{sum}}^\mathrm{I} \leq 2 \tau M_2 + 2 \ntau M_3.
\end{IEEEeqnarray}

\paragraph{Case 2: $M_2 \geq M_1 \geq M_3$}

Since $M_2$ is the largest number of antennas, we develop two partial sums around nodes $3$ and $1$ (Fig.~\ref{fig:nodeint-converse-general-nodes-around-j} and \ref{fig:nodeint-converse-general-nodes-around-k}, with $(i,j,k)=(2,3,1)$), respectively.

The reasoning around node $3$ in this case proceeds in close analogy to the bound around node $3$ in the previous case, just with $1$ and $2$ interchanged. We provide as side information $w_{21}$ (to allow for decoding using $\mathcal F_2$), $\tilde{\y}_3^n$ (to compensate for small number of antennas $M_3$) and $\zcorrS^n$ (a noise correction), defined as
\begin{IEEEeqnarray}{rCl}
    \tilde{\Y}_{3,\ell} &\eqdef& S_{\ell} ( \tilde{\H}_{13} \X_{1,\ell} + \tilde{\Z}_{3,\ell} ), \nonumber\\
    \Z_{\mathrm{corr},\ell} &\eqdef& \Z_{2,\ell} - S_{\ell} ( \H_{12} \hat{\H}_{13}^{-1} \hat{\Z}_{3,\ell} ), \nonumber
\end{IEEEeqnarray}
where
\begin{IEEEeqnarray}{rCl}
  \hat{\H}_{13} &\eqdef& \left[\begin{smallmatrix} \H_{13} \nonumber\\ \tilde{\H}_{13} \end{smallmatrix}\right], \nonumber\\
  \tilde{\H}_{13} &\in& \mathbb C^{(M_1-M_3) \times M_1} \quad{} \text{such that } \Rank(\hat{\H}_{13}) = M_1, \nonumber\\
  \tilde{\Z}_{3,\ell} &\sim& \CN(\bm{0}, \sigma^2 \I_{M_1-M_3}), \nonumber\\
  \hat{\Z}_{3,\ell} &\eqdef& \left[\begin{smallmatrix} \Z_{3,\ell} \nonumber\\ \tilde{\Z}_{3,\ell} \end{smallmatrix}\right],
  \qquad{}
  \hat{\Y}_{3,\ell} \eqdef \left[\begin{smallmatrix} \Y_{3,\ell} \\ \tilde{\Y}_{3,\ell} \end{smallmatrix}\right]. \nonumber
\end{IEEEeqnarray}
With this information, node $3$ can construct $\x_{2,1}$ from $\w_2$ using $\mathcal E_{2,1}$ (since $w_{23}$ is a message intended for node $3$ and assumed to have been decoded from $\y_3^n$ using $\mathcal F_3$, and $w_{21}$ is side information), obtain a noisy version of $\x_{1,1}$ from channel output $\y_{3,1}$ and side information $\tilde{\y}_{3,1}$ as necessary for $\y_{2,1}$ (i.e., if $s_1=1$), combine all relevant signals into $\y_{2,1}$, encode $\w_2$ using $\mathcal E_{2,2}$ and
$(\y_{2,1}, s_1)$
to obtain $\x_{2,2}$, and continue this cycle for the next $\ell=2,...,n$ until $\y_2^n$ is complete, from which $w_{12}$ can be decoded with the help of $\w_2$ using $\mathcal F_2$.

After similar steps as before (see Appendix~\ref{sec:node-intermittency-sum-dof-case2-part1} for details) we obtain
\begin{IEEEeqnarray}{rCl}
  \label{eq:nodeint-converse-M2geqM1geqM3-around3}
  d_{13} + d_{23} + d_{12} &\leq& \tau M_1 + \ntau M_3.
\end{IEEEeqnarray}

We turn to the second partial sum, developed around node $1$ (Fig.~\ref{fig:nodeint-converse-general-nodes-around-k}, with $(i,j,k)=(2,3,1)$), where node $1$ should be enabled to decode $w_{32}$. The main difference to the previous cases is that the link $3 \leftrightarrow 2$ is always available, while the link $3 \leftrightarrow 1$ is intermittent. Therefore, $\y_{1,\ell}$ does not contain information about $\x_{3,\ell}$ if $s_\ell = 0$; this needs to be compensated for by side information, here $\tilde{\y}_1^n$ defined as
\[ \tilde{\y}_{1,\ell} \eqdef \overline{s}_{\ell} ( \H_{31} \x_{3,\ell} + \tilde{\z}_{1,\ell} ), \]
which provides measurements of $\x_3^n$ for those time instances where node $1$ is intermittent, i.e., $s_\ell = 0$. This is an instance of the fourth type of side information, that for previous bounds was not necessary, namely side information that compensates for intermittency. Given this side information and the customary side information (a message, to allow for decoding, and a noise correction signal), the reconstruction proceeds in analogy to the previous cases. In a nutshell, the genie provides $w_{23}$, $\tilde{\y}_1^n$ and $\zcorrS^n$ to node $1$, defined as
\begin{IEEEeqnarray}{rCl}
    \tilde{\Y}_{1,\ell} &\eqdef& \overline{S}_{\ell} ( \H_{31} \X_{3,\ell} + \tilde{\Z}_{1,\ell} ), \nonumber\\
    \Z_{\mathrm{corr},\ell} &\eqdef& \Z_{2,\ell} - \H_{32} \H_{31}^\pinv (S_{\ell} \Z_{1,\ell} + \overline{S}_{\ell} \tilde{\Z}_{1,\ell}), \nonumber
\end{IEEEeqnarray}
with
\begin{IEEEeqnarray}{rCl}
    \tilde{\Z}_{1,\ell} &\sim& \CN(\bm{0}, \sigma^2 \I_{M_1}), \nonumber\\
    \hat{\Z}_{1,\ell} &\eqdef& (\Z_{1,\ell}, \overline{S}_{\ell} \tilde{\Z}_{1,\ell}), \nonumber\\
    \hat{\Y}_{1,\ell} &\eqdef& (\Y_{1,\ell}, \tilde{\Y}_{1,\ell}). \nonumber
\end{IEEEeqnarray}
At the end of the transmission, node $1$ has $\w_1$, $\y_1^n$, $s^n$ and $\x_1^n$. It decodes $(w_{21}, w_{31})$ from $\y_1^n$ using its decoder $\mathcal F_1$, and gets $w_{23}$ from side information. It generates $\x_{2,1}$, then uses its channel output $\y_{1,1}$ (if $s_1 = 1$) or side information $\tilde{\y}_{1,1}$ (if $s_1 = 0$) to obtain a noisy version of $\x_{3,1}$, and with it $\y_{2,1}$. From there the cycle repeats, until $\y_2^n$ is obtained and $w_{32}$ can be decoded.

After similar steps as before (see Appendix~\ref{sec:node-intermittency-sum-dof-case2-part2} for details) we obtain
\begin{IEEEeqnarray}{rCl}
  \label{eq:nodeint-converse-M2geqM1geqM3-around1}
  d_{21} + d_{31} + d_{32} &\leq& \tau M_1 + \ntau M_3.
\end{IEEEeqnarray}

Adding \eqref{eq:nodeint-converse-M2geqM1geqM3-around3} and \eqref{eq:nodeint-converse-M2geqM1geqM3-around1} yields a sum-DoF upper bound for the case $M_2 \geq M_1 \geq M_3$,
\begin{IEEEeqnarray}{rCl}
  \label{eq:nodeint-converse-M2geqM1geqM3}
  d_{\mathrm{sum}}^\mathrm{I} \leq 2 \tau M_1 + 2 \ntau M_3.
\end{IEEEeqnarray}

\paragraph{Case 3: $M_2 \geq M_3 \geq M_1$}

Since $M_2$ is still the largest number of antennas, we again develop two partial sums around nodes $3$ and $1$ (Fig.~\ref{fig:nodeint-converse-general-nodes-around-j} and \ref{fig:nodeint-converse-general-nodes-around-k}, with $(i,j,k)=(2,3,1)$), respectively. The only difference to the previous case is that this time $M_3 \geq M_1$, therefore the number of antennas at node $1$ needs to be augmented `virtually' to obtain sufficient measurements of $\x_3^n$, while node $3$ remains unchanged.

We turn to the partial sum around node $3$ and provide as side information $w_{21}$ and $\zcorrS^n$ with
\begin{IEEEeqnarray}{rCl}
  \Z_{\mathrm{corr},\ell} &\eqdef& \Z_{2,\ell} - S_{\ell} ( \H_{12} \H_{13}^\pinv \Z_{3,\ell} ). \nonumber
\end{IEEEeqnarray}
Note that $\x_{1,\ell}$ contributes to $\y_{2,\ell}$ only if $s_{\ell} = 1$. In these instances, node $3$ has sufficient information about $\x_{1,\ell}$ from $\y_{3,\ell}$. If $s_{\ell}=0$, node $3$ does not have information about $\x_{1,\ell}$, but $\x_{1,\ell}$ does not contribute to $\y_{2,\ell}$ anyhow, so node $3$ does not need additional side information in these cases. Therefore, with the given side information, node $3$ can construct $\x_{2,1}$ from $\w_2$, obtain a noisy version of $\x_{1,1}$ from channel output $\y_{3,1}$ as necessary for $\y_{2,1}$ (i.e., if $s_1=1$), generate $\y_{2,1}$, and continue this cycle for the next $\ell=2,...,n$ until $\y_2^n$ is complete, from which $w_{12}$ can be decoded with the help of $\w_2$.

After similar steps as before (see Appendix~\ref{sec:node-intermittency-sum-dof-case3-part1} for details) we obtain
\begin{IEEEeqnarray}{rCl}
  \label{eq:nodeint-converse-M2geqM3geqM1-around3}
  d_{13} + d_{23} + d_{12} &\leq& M_3.
\end{IEEEeqnarray}

We turn to the second partial sum, developed around node $1$ (Fig.~\ref{fig:nodeint-converse-general-nodes-around-k}, with $(i,j,k)=(2,3,1)$), where node $1$ should be enabled to decode $w_{32}$. Again the main difference to the previous cases is that the link $3 \leftrightarrow 2$ is always available, while the link $3 \leftrightarrow 1$ is intermittent. Therefore, $\y_{1,\ell}$ does not contain information about $\x_{3,\ell}$ if $s_\ell = 0$; this effect of intermittency needs to be compensated for by side information, here $\tilde{\y}_1^n$ (a formal definition follows). Furthermore, the number of antennas at node $1$ needs to be increased to fully capture $\x_{3,\ell}$, here accomplished by side information $\breve{\y}_1^n$ (a formal definition follows). In addition, the genie provides the message $w_{23}$ and noise correction $\zcorrS^n$ to node $1$. In a nutshell, side information variables are defined as
\begin{IEEEeqnarray}{rCl}
    \tilde{\Y}_{1,\ell} &\eqdef& \overline{S}_{\ell} ( \H_{31} \X_{3,\ell} + \tilde{\Z}_{1,\ell} ), \nonumber\\
    \breve{\Y}_{1,\ell} &\eqdef& \breve{\H}_{31} \X_{3,\ell} + \breve{\Z}_{1,\ell}, \nonumber\\
    \Z_{\mathrm{corr},\ell} &\eqdef& \Z_{2,\ell} - \H_{32} \hat{\H}_{31}^{-1} \left[\begin{smallmatrix} S_{\ell} \Z_{1,\ell} + \overline{S}_{\ell} \tilde{\Z}_{1,\ell} \\ \breve{\Z}_{1,\ell} \end{smallmatrix} \right], \nonumber
\end{IEEEeqnarray}
with auxiliary variables
\begin{IEEEeqnarray}{rCl}
  \tilde{\Z}_{1,\ell} &\sim& \CN(\bm{0}, \sigma^2 \I_{M_1}), \nonumber\\
  \breve{\Z}_{1,\ell} &\sim& \CN(\bm{0}, \sigma^2 \I_{M_3-M_1}), \nonumber\\
  \hat{\H}_{31} &\eqdef& \left[\begin{smallmatrix} \H_{31} \\ \breve{\H}_{31} \end{smallmatrix}\right], \nonumber\\
  \breve{\H}_{31} &\in& \mathbb C^{(M_3-M_1) \times M_3} \quad{} \text{such that } \Rank(\hat{\H}_{31}) = M_3, \nonumber\\
  \hat{\Y}_{1,\ell} &\eqdef& (\Y_{1,\ell}, \tilde{\Y}_{1,\ell}, \breve{\Y}_{1,\ell}), \nonumber\\
  \hat{\Z}_{1,\ell} &\eqdef& (\Z_{1,\ell}, \overline{S}_{\ell} \tilde{\Z}_{1,\ell}, \breve{\Z}_{1,\ell}). \nonumber
\end{IEEEeqnarray}
At the end of the transmission, node $1$ has $\w_1$, $\y_1^n$, $s^n$ and $\x_1^n$. It decodes $(w_{21}, w_{31})$ using $\mathcal F_1$, and gets $w_{23}$ from side information. It generates $\x_{2,1}$ using $\mathcal E_{2,1}$, then uses side information $\breve{\y}_{1,1}$ and channel output $\y_{1,1}$ (if $s_1 = 1$) or side information $\breve{\y}_{1,1}$ and $\tilde{\y}_{1,1}$ (if $s_1 = 0$) to obtain a noisy version of $\x_{3,1}$, and with it $\y_{2,1}$ using $\zcorr{1}$. From there the cycle repeats, until $\y_2^n$ is obtained and $w_{32}$ can be decoded using $\mathcal F_2$.

After similar steps as before (see Appendix~\ref{sec:node-intermittency-sum-dof-case3-part2} for details) we obtain
\begin{IEEEeqnarray}{rCl}
  \label{eq:nodeint-converse-M2geqM3geqM1-around1}
  d_{21} + d_{31} + d_{32} &\leq& M_3.
\end{IEEEeqnarray}

Adding \eqref{eq:nodeint-converse-M2geqM3geqM1-around3} and \eqref{eq:nodeint-converse-M2geqM3geqM1-around1} yields a sum-DoF upper bound for the case $M_2 \geq M_3 \geq M_1$,
\begin{IEEEeqnarray}{rCl}
  \label{eq:nodeint-converse-M2geqM3geqM1}
  d_{\mathrm{sum}}^\mathrm{I} \leq 2 M_3.
\end{IEEEeqnarray}

Lemma~\ref{thm:node-intermittency-sum-dof-ub} follows from \eqref{eq:nodeint-converse-M1geqM2geqM3}, \eqref{eq:nodeint-converse-M2geqM1geqM3}, \eqref{eq:nodeint-converse-M2geqM3geqM1}, and symmetry of node $2$ and $3$. The achievability and converse results developed in Sections~\ref{sec:nodeint-sum-dof-lb} and \ref{sec:nodeint-sum-dof-ub} establish the sum-DoF of the intermittent 3WC and prove Theorem~\ref{thm:node-intermittency-sum-dof}.

Note that neither our achievability nor our converse result requires the noise at different receivers to be uncorrelated.
This complies with the intuition that the DoF perspective captures the impact of interference rather than noise.
For $\rho\to\infty$, usable DoFs become practically noise-free, hence it is also insignificant whether the noise is correlated or not.

\subsection{Necessity of Genie-Aided Upper Bounds}
\label{sec:necessity-of-genieaided-upper-bounds}

To underline the necessity of the genie-aided upper bounds devised in Section~\ref{sec:nodeint-sum-dof-ub},
we show that `classic' cut-set type bounds \cite{CoverThomas} admit DoF tuples $\bm{d} = (d_{12}, d_{13}, d_{21}, d_{23}, d_{31}, d_{32})$ that strictly exceed the sum-DoF of the intermittent 3WC stated in Theorem~\ref{thm:node-intermittency-sum-dof}.
For the intermittent 3WC with $M_1 \geq M_2 + M_3 \geq M_2 \geq M_3$, the cut-set bounds read
\begin{IEEEeqnarray}{rCll}
  \max\{ d_{12} + d_{13}, d_{21} + d_{31} \} &\leq& \tau (M_2+M_3),        &\quad \text{(Cut node $1$)}   \nonumber\\
  \max\{ d_{21} + d_{23}, d_{12} + d_{32} \} &\leq& \tau M_2 + \ntau M_3,  &\quad \text{(Cut node $2$)}   \nonumber\\
  \max\{ d_{31} + d_{32}, d_{13} + d_{23} \} &\leq& M_3.                   &\quad \text{(Cut node $3$)}   \nonumber
\end{IEEEeqnarray}
Let $M_2 = M_3 = 2$, $\tau = \ntau = \frac{1}{2}$, $M_1 = M_2 + M_3 = 4$.
Then, the cut-set bounds admit $\bm{d} = (1, 1, 1, 1, 1, 1)$, with a sum-DoF of $6$.
However, according to Theorem~\ref{thm:node-intermittency-sum-dof}, for $M_1 \geq M_2 + M_3 \geq M_2 \geq M_3$,
\[ d_{\mathrm{sum}}^\mathrm{I} = 2 \tau M_2 + 2 \ntau M_3, \]
in the example at hand, $d_{\mathrm{sum}}^\mathrm{I} = 4$.
Hence, cut-set type bounds are too loose to characterize the sum-DoF of the intermittent 3WC.
The genie-aided upper bounds devised in Section~\ref{sec:nodeint-sum-dof-ub} fill this gap.

\subsection{DoF Region}

In this section we first derive an upper bound on $d_{31}$ under the assumption of non-adaptive encoding.
We then show that adaptive schemes can exceed this bound, e.g., decode-forward relaying.
This proves that some DoF region points are only achievable by adaptive encoding schemes, hence adaptive encoding is in general required to achieve the DoF region of the intermittent 3WC.
Note that
we leave design and analysis of comprehensive adaptive encoding schemes for future work
and prove our claim by means of minimal counterexamples.
For the counterexample we may assume $M_1 \geq M_2 \geq M_3$.

\subsubsection{Upper Bound on $d_{31}$ under Non-Adaptive Encoding}
\label{sec:node-intermittency-dof-region-counterexample-converse}

Node $1$ is able to decode $w_{31}$ from its channel output $(\y_1^n, s^n)$ and a-priori knowledge $\w_1$ with high probability. We further provide $\x_2^n$ as side information and bound the rate of $w_{31}$ using Fano's inequality:
\begin{IEEEeqnarray}{rCl}
  \IEEEeqnarraymulticol{3}{l}{
      n(R_{31} - \varepsilon_n)
  }\nonumber\\
  \quad
    &\leq& \MInf(W_{31} ; \W_1 \Y_1^n S^n \overbrace{\X_2^n}^{\text{side information}} ) \nonumber\\
    &\eqA& \MInf(W_{31} ; \Y_1^n \mid \W_1 \X_2^n S^n) \nonumber\\
    &\eqB& \sum_{\ell=1}^n \MInf(W_{31} ; \Y_{1,\ell} \mid \Y_1^{\ell-1} \W_1 \X_2^n S^n) \nonumber\\
    &=& \sum_{\ell=1}^n \Big[ \hEntr(\Y_{1,\ell} \mid \Y_1^{\ell-1} \W_1 \X_2^n S^n) \nonumber\\&&\quad {}-{} \hEntr(\Y_{1,\ell} \mid \Y_1^{\ell-1} \W_1 \X_2^n S^n W_{31}) \Big] \nonumber\\
    &\leqC& \sum_{\ell=1}^n \Big[ \hEntr(\Y_{1,\ell} \mid \X_{2,\ell} S_{\ell}) \nonumber\\&&\quad {}-{} \hEntr(\Y_{1,\ell} \mid \Y_1^{\ell-1} \W_1 \X_2^n S^n W_{31} \X_{3,\ell}) \Big] \nonumber\\
    &\eqD& \sum_{\ell=1}^n \Big[ \hEntr(\Y_{1,\ell} \mid \X_{2,\ell} S_{\ell}) - \hEntr(\Y_{1,\ell} \mid \X_{2,\ell} S_{\ell} \X_{3,\ell}) \Big] \nonumber\\
    &=& \sum_{\ell=1}^n \MInf(\X_{3,\ell} ; \Y_{1,\ell} \mid \X_{2,\ell} S_{\ell}) \nonumber\\
    \label{eq:node-intermittency-dof-region-ob-d31}
    &\leqE& n \tau M_3 \logp + n \tau \ologp
\end{IEEEeqnarray}
These steps are justified as follows:
\begin{itemize}
  \item[\stepA] $W_{31}$ is independent of $(\W_1, \X_2^n, S^n)$ %
    due to non-adaptive encoding $\X_2^n = \mathcal E_{2}(\W_2)$
  \item[\stepB] Chain rule for mutual information
  \item[\stepC] Conditioning reduces entropy %
  \item[\stepD] $\Y_{1,\ell}$ is independent of $(\Y_1^{\ell-1}, \W_1, \X_{2,1}^{\ell-1}, \X_{2,\ell+1}^n,\allowbreak S^{\ell-1},\allowbreak S_{\ell+1}^n,\allowbreak W_{31})$ given $(S_{\ell}, \X_{2,\ell}, \X_{3,\ell})$
  \item[\stepE] $\X_{3,\ell} \leadsto \Y_{1,\ell}$ given $\X_{2,\ell}$ is a MIMO channel with maximum DoF $M_3$ and $0$ for $s_\ell=1$ and $s_\ell=0$, respectively
\end{itemize}
Dividing both sides of \eqref{eq:node-intermittency-dof-region-ob-d31} by $n \logp$ and taking $\rho, n \to \infty$, we see that the achievable DoF tuples of non-adaptive encoding schemes are constrained by
\begin{IEEEeqnarray}{rCl}
  \label{eq:nodeint-dof-region-restricted-ub1}
  d_{31} &\leq& \tau M_3.
\end{IEEEeqnarray}

\subsubsection{Adaptive Schemes Achieving $d_{31} > \tau M_3$}
\label{sec:node-intermittency-dof-region-counterexample-achievability}

We assume all messages are fixed to $0$, except for $w_{31}$, which node $3$ wants to convey to node $1$, potentially with the help of node $2$.
Consider using decode-forward relaying at node $2$. This can be used to achieve $\min\{ \tau M_1, \tau (M_2 + M_3), M_3 \}$ DoFs and outperforms any non-adaptive scheme as soon as $M_1 > M_3$. We derive the achievable DoF for $d_{31}$ based on the well-known lower bound for decode-forward relaying \cite{CoverElgamal}:
\begin{IEEEeqnarray}{rCl}
  C
    &\geq& \max_{p_{\X_3\X_2}} \min\{ \MInf( \X_3 \X_2 ; \Y_1 S ), \MInf( \X_3 ; \Y_2 S \mid \X_2 ) \} \nonumber\\
    &\geq& \min\{ \MInf( \X_3 \X_2 ; \Y_1 \mid S), \MInf( \X_3 ; \Y_2 \mid \X_2 ) \} \nonumber\\
    \IEEEeqnarraymulticol{3}{r}{
        \text{with } \X_2, \X_3 \text{ Gaussian}
    }\nonumber\\
    &=& \min\{ \tau \min\{ M_1, (M_2 + M_3) \}, M_3 \} \logp + \ologp \nonumber\\
    \label{eq:node-intermittency-dof-region-lb-d31}
    &=& \min\{ \tau M_1, \tau (M_2 + M_3), M_3 \} \logp + \ologp
\end{IEEEeqnarray}

Dividing both sides of \eqref{eq:node-intermittency-dof-region-lb-d31} by $\logp$ and taking $\rho \to \infty$, we see that the decode-forward relaying achieves
\begin{IEEEeqnarray}{rCl}
  \label{eq:nodeint-dof-region-adaptive-lb1}
  d_{31} \geq \min\{ \tau M_1, \tau (M_2 + M_3), M_3 \}.
\end{IEEEeqnarray}

Note that if $\tau M_2 > \ntau M_3$ and $\tau M_1 > M_3$, then we can transmit at $M_3$ DoF from node $3$ to node $1$ using this adaptive scheme, compensating all the the loss due to intermittency.

We proved in
\eqref{eq:nodeint-dof-region-restricted-ub1}
that the DoF region point
\begin{IEEEeqnarray}{rCl}
    \bm{d} &=& (0,0,0,0,d_{31,\mathrm{A}},0) \nonumber\\
    d_{31,\mathrm{A}} &\eqdef& \min\{ \tau M_1, \tau (M_2 + M_3), M_3 \} \nonumber
\end{IEEEeqnarray}
is not achievable for any non-adaptive encoding scheme if $M_1 > M_3$, while we showed in
\eqref{eq:nodeint-dof-region-adaptive-lb1}
that there exist adaptive schemes that achieve it. This proves Theorem~\ref{thm:node-intermittency-dof-region}, which states that adaptive encoding is in general required to achieve the DoF region of the intermittent 3WC.

Theorems~\ref{thm:node-intermittency-sum-dof} and \ref{thm:node-intermittency-dof-region} show that non-adaptive encoding is sufficient to achieve sum-DoF, but not sufficient to achieve the DoF region of the intermittent 3WC. This is particularly interesting in light of the next section, where we show that adaptive encoding is not beneficial in the non-intermittent 3WC even from a DoF region perspective.

\section{No Intermittency}
\label{sec:no-intermittency}

The sum-DoF of the non-intermittent 3WC was investigated in \cite{MaierChaabanMathar}.
We present the DoF region of the non-intermittent 3WC and show that the non-adaptive encoding scheme presented in Section~\ref{sec:node-intermittency-achievability} is sufficient to achieve it; therefore, adaptive encoding is neither required from a sum-DoF nor from a DoF region perspective in the non-intermittent 3WC.
The non-intermittent 3WC is a special case of the intermittent 3WC with $\tau=1$.
We may assume without loss of generality $M_1 \geq M_2 \geq M_3$.

\subsection{Achievability}
\label{sec:no-intermittency-achievability}

From \eqref{eq:node-intermittency-dof-region-ib-first} to \eqref{eq:node-intermittency-dof-region-ib-last} we obtain with $\tau=1$:
\begin{IEEEeqnarray}{rCl}
  \label{eq:no-intermittency-dof-region-1}
  d_{12} + d_{13} + d_{23} &\leq& M_1 \\
  \label{eq:no-intermittency-dof-region-2}
  d_{12} + d_{13} + d_{32} &\leq& M_1 \\
  \label{eq:no-intermittency-dof-region-3}
  d_{21} + d_{31} + d_{32} &\leq& M_1 \\
  \label{eq:no-intermittency-dof-region-4}
  d_{21} + d_{31} + d_{23} &\leq& M_1 \\
  \label{eq:no-intermittency-dof-region-5}
  d_{21} + d_{13} + d_{23} &\leq& M_2 \\
  \label{eq:no-intermittency-dof-region-6}
  d_{12} + d_{31} + d_{32} &\leq& M_2 \\
  \label{eq:no-intermittency-dof-region-7}
  d_{31} + d_{32} &\leq& M_3 \\
  \label{eq:no-intermittency-dof-region-8}
  d_{13} + d_{23} &\leq& M_3 \\
  \label{eq:no-intermittency-dof-region-9}
  \min \{ d_{12}, d_{13}, d_{21}, d_{23}, d_{31}, d_{32} \} &\geq& 0
\end{IEEEeqnarray}

All DoF tuples $\bm{d}$ satisfying constraints \eqref{eq:no-intermittency-dof-region-1} to \eqref{eq:no-intermittency-dof-region-9} are achievable in the non-intermittent 3WC. Therefore, by construction in Section~\ref{sec:node-intermittency-achievability}, said set of inequalities constitutes an inner bound on the DoF region of the non-intermittent 3WC.
We show in the following that the parametrization of the ZF/IA/EC-based scheme presented in Section~\ref{sec:node-intermittency-achievability} is sufficiently general to capture the whole DoF region of the non-intermittent 3WC.
Note that only in the generality elaborated in this paper
does the ZF/IA-based scheme achieve the DoF region of the non-intermittent 3WC
(instead of just its sum-DoF as in \cite{MaierChaabanMathar}).

\subsection{Converses}
\label{sec:no-intermittency-converses}

Previous works studied the sum-DoF of different variants of the non-intermittent 3WC. To this end, several bounds are reported in the literature, among them cut-set outer bounds
\cite{CoverThomas}
on pairs of DoFs (involving two messages either intended for or originating at a certain node), and tighter genie-aided outer bounds
on triplets of DoFs (involving two messages either intended for or originating at a certain node, and one message exchanged between the remaining two nodes).
The latter bounding technique is due to \cite{MaierChaabanMathar,MaierChaabanMathar_ITW} and was later taken up in \cite{ElmahdyKeyiMohassebElBatt,ElmahdyKeyiMohassebElBatt_journal}.
\begin{IEEEeqnarray}{rCl}
  \label{eq:no-intermittency-rate-region-ob-1}
  R_{13} + R_{23} &\leq& M_3 \logp + \ologp   \nonumber\\
    \IEEEeqnarraymulticol{3}{r}{\text{(see \cite[(7)]{MaierChaabanMathar}, \cite{CoverThomas})}} \quad \IEEEeqnarraynumspace\\
  \label{eq:no-intermittency-rate-region-ob-2}
  R_{31} + R_{32} &\leq& M_3 \logp + \ologp   \nonumber\\
    \IEEEeqnarraymulticol{3}{r}{\text{(see \cite[(8)]{MaierChaabanMathar}, \cite{CoverThomas})}} \quad \IEEEeqnarraynumspace\\
  \label{eq:no-intermittency-rate-region-ob-3}
  R_{21} + R_{31} + R_{32} &\leq& \min\{ M_1, M_2 + M_3 \} \logp + \ologp   \nonumber\\
    \IEEEeqnarraymulticol{3}{r}{\text{(see \cite[(23)]{ElmahdyKeyiMohassebElBatt}, \cite[(25)]{ElmahdyKeyiMohassebElBatt_journal})}} \quad \IEEEeqnarraynumspace\\
  \label{eq:no-intermittency-rate-region-ob-4}
  R_{21} + R_{31} + R_{23} &\leq& \min\{ M_1, M_2 + M_3 \} \logp + \ologp   \nonumber\\
    \IEEEeqnarraymulticol{3}{r}{\text{(see \cite[(25)]{ElmahdyKeyiMohassebElBatt}, \cite[(27)]{ElmahdyKeyiMohassebElBatt_journal})}} \quad \IEEEeqnarraynumspace\\
  \label{eq:no-intermittency-rate-region-ob-5}
  R_{12} + R_{32} + R_{13} &\leq& M_1 \logp + \ologp   \nonumber\\
    \IEEEeqnarraymulticol{3}{r}{\text{(see \cite[(29)]{ElmahdyKeyiMohassebElBatt_journal})}} \quad \IEEEeqnarraynumspace\\
  \label{eq:no-intermittency-rate-region-ob-6}
  R_{13} + R_{23} + R_{12} &\leq& M_1 \logp + \ologp   \nonumber\\
    \IEEEeqnarraymulticol{3}{r}{\text{(see \cite[(31)]{ElmahdyKeyiMohassebElBatt_journal})}} \quad \IEEEeqnarraynumspace\\
  \label{eq:no-intermittency-rate-region-ob-7}
  R_{12} + R_{32} + R_{31} &\leq& M_2 \logp + \ologp   \nonumber\\
    \IEEEeqnarraymulticol{3}{r}{\text{(see \eqref{eq:nodeint-converse-M1geqM2geqM3-around2}, \cite[(15)]{MaierChaabanMathar}, \cite[(28)]{ElmahdyKeyiMohassebElBatt_journal})}} \quad \IEEEeqnarraynumspace\\
  \label{eq:no-intermittency-rate-region-ob-8}
  R_{13} + R_{23} + R_{21} &\leq& M_2 \logp + \ologp   \nonumber\\
    \IEEEeqnarraymulticol{3}{r}{\text{(see \eqref{eq:nodeint-converse-M1geqM2geqM3-around3}, \cite[(11)]{MaierChaabanMathar}, \cite[(30)]{ElmahdyKeyiMohassebElBatt_journal})}} \quad \IEEEeqnarraynumspace
\end{IEEEeqnarray}

Note that if $M_2 + M_3 \leq M_1$, then \eqref{eq:no-intermittency-rate-region-ob-3} and \eqref{eq:no-intermittency-rate-region-ob-4} are redundant given \eqref{eq:no-intermittency-rate-region-ob-2} and \eqref{eq:no-intermittency-rate-region-ob-8}, therefore $\min\{ M_1, M_2 + M_3 \}$ can be replaced with $M_1$ in \eqref{eq:no-intermittency-rate-region-ob-3} and \eqref{eq:no-intermittency-rate-region-ob-4}. Dividing these bounds by $\logp$ and taking $\rho \to \infty$ yields to the following DoF region outer bounds:
\begin{IEEEeqnarray}{rCl}
  \label{eq:no-intermittency-dof-region-ob-3}
  d_{21} + d_{31} + d_{32} &\leq& M_1 \\
  \label{eq:no-intermittency-dof-region-ob-4}
  d_{21} + d_{31} + d_{23} &\leq& M_1 \\
  \label{eq:no-intermittency-dof-region-ob-5}
  d_{12} + d_{32} + d_{13} &\leq& M_1 \\
  \label{eq:no-intermittency-dof-region-ob-6}
  d_{13} + d_{23} + d_{12} &\leq& M_1 \\
  \label{eq:no-intermittency-dof-region-ob-7}
  d_{12} + d_{32} + d_{31} &\leq& M_2 \\
  \label{eq:no-intermittency-dof-region-ob-8}
  d_{13} + d_{23} + d_{21} &\leq& M_2 \\
  \label{eq:no-intermittency-dof-region-ob-1}
  d_{13} + d_{23} &\leq& M_3 \\
  \label{eq:no-intermittency-dof-region-ob-2}
  d_{31} + d_{32} &\leq& M_3
\end{IEEEeqnarray}
No DoF tuple $\bm{d}$ violating any of the constraints \eqref{eq:no-intermittency-dof-region-ob-3} to \eqref{eq:no-intermittency-dof-region-ob-2} can be achievable in the non-intermittent 3WC. Therefore, said set of inequalities constitutes an outer bound on the DoF region of the non-intermittent 3WC.

\subsection{DoF Region and Sum-DoF of Non-Intermittent 3WC}
\label{sec:sum-dof-nonintermittent}

The previous achievability and converse results establish the DoF region (and thus sum-DoF) optimality of non-adaptive schemes.
In particular, the scheme introduced in Section~\ref{sec:node-intermittency-achievability} is DoF region and sum-DoF optimal. This renders adaptive encoding dispensable for the non-intermittent 3WC and
proves Theorem~\ref{thm:no-intermittency-dof-region}. From the DoF region of the 3WC and using the sum-DoF of the intermittent 3WC, we reproduce the sum-DoF of the 3WC given in \cite{MaierChaabanMathar}:

\begin{corollary}[Sum-DoF of Non-Intermittent 3WC]
  \label{thm:no-intermittency-sum-dof}
  \[ d_{\mathrm{sum}}^\mathrm{N} = 2 M_2 \]
\end{corollary}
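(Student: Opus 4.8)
The plan is to sandwich the sum-DoF between a converse and an achievability argument that both evaluate to $2 M_2$, exploiting that the full DoF region $\mathcal D^{\mathrm N}$ has already been characterized in Theorem~\ref{thm:no-intermittency-dof-region}. Since by definition $d_{\mathrm{sum}}^{\mathrm N} = \max_{\bm d \in \mathcal D^{\mathrm N}} \sum_{i \neq j} d_{ij}$, it suffices to bound this linear program from both sides, which is exactly the route hinted at: use the DoF region for the upper bound and the intermittent sum-DoF for achievability.

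For the converse I would read off the third inequality of Theorem~\ref{thm:no-intermittency-dof-region}, namely $\max\{ d_{21} + d_{13} + d_{23},\, d_{12} + d_{31} + d_{32} \} \leq M_2$, which simultaneously enforces $d_{21} + d_{13} + d_{23} \leq M_2$ and $d_{12} + d_{31} + d_{32} \leq M_2$. Adding these two bounds gives \[ d_{12} + d_{13} + d_{21} + d_{23} + d_{31} + d_{32} \leq 2 M_2, \] so every achievable tuple obeys $\sum_{i\neq j} d_{ij} \leq 2 M_2$ and hence $d_{\mathrm{sum}}^{\mathrm N} \leq 2 M_2$. These two summands are precisely the genie-aided bounds ``around'' the second-largest node, so no additional converse work is required.

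For achievability I would invoke Theorem~\ref{thm:node-intermittency-sum-dof} at $\tau = 1$ (equivalently $\ntau = 0$): under $M_1 \geq M_2 \geq M_3$ its expression reduces to $2\ntau M_3 + 2\tau M_2 = 2 M_2$, and since the non-intermittent 3WC is exactly the $\tau = 1$ instance of the node-intermittent 3WC, the corresponding lower bound (Lemma~\ref{thm:node-intermittency-sum-dof-lb}) transfers verbatim. As a self-contained alternative I would instead exhibit a single feasible tuple of $\mathcal D^{\mathrm N}$ summing to $2 M_2$, e.g. \[ \bm d = \Big( M_2 - M_3,\ \tfrac{M_3}{2},\ M_2 - M_3,\ \tfrac{M_3}{2},\ \tfrac{M_3}{2},\ \tfrac{M_3}{2} \Big), \] and check the constraints of Theorem~\ref{thm:no-intermittency-dof-region}: every $M_1$-bound evaluates to $M_2 \leq M_1$, both $M_2$-bounds to $M_2$, both $M_3$-bounds to $M_3$, all entries are nonnegative because $M_2 \geq M_3$, and the entries sum to $2(M_2 - M_3) + 4\cdot\tfrac{M_3}{2} = 2 M_2$. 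As $\mathcal D^{\mathrm N}$ is achievable, this yields $d_{\mathrm{sum}}^{\mathrm N} \geq 2 M_2$.

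Combining the two directions gives $d_{\mathrm{sum}}^{\mathrm N} = 2 M_2$, reproducing \cite{MaierChaabanMathar}. I do not expect a genuine obstacle: the converse is a one-line addition of two region inequalities, and the only point needing mild care is the achievability, where one must confirm that the chosen corner tuple (or the $\tau = 1$ specialization) indeed lies in $\mathcal D^{\mathrm N}$ and that fractional entries such as $M_3/2$ are admissible through the rational-approximation and symbol-extension argument already invoked in Section~\ref{sec:node-intermittency-achievability}.
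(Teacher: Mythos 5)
Your proposal is correct and follows essentially the same route as the paper, which also proves the corollary by combining Theorem~\ref{thm:no-intermittency-dof-region} (the region giving the converse via the two $M_2$-bounds) with Theorem~\ref{thm:node-intermittency-sum-dof} specialized to $\tau=1$ for achievability. Your explicit corner tuple is a valid self-contained alternative for the achievability half, but it adds nothing beyond what the cited theorems already provide.
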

\begin{proof}
  The statement follows from Theorems~\ref{thm:node-intermittency-sum-dof} and \ref{thm:no-intermittency-dof-region}.
\end{proof}

\section{Conclusion}
\label{sec:conclusion}

We introduced the MIMO 3WC with node-intermittency and studied its DoF region and sum-DoF. In particular, we devised a non-adaptive encoding scheme based on zero-forcing, interference alignment and erasure coding, and showed its DoF region (and thus sum-DoF) optimality for non-intermittent 3WCs and its sum-DoF optimality for node-intermittent 3WCs. This shows that adaptive encoding is not required in those cases. However, we showed by example that in general there are DoF region points in the node-intermittent 3WC that can only be achieved by adaptive schemes, such as
decode-forward relaying, making adaptive encoding a necessity. Our work contributes to a better understanding of the necessity of adaptive schemes such as relaying in multi-way communications with intermittency.

As remarked in the introduction, node-intermittency is only one of a multitude of practically relevant intermittency scenarios. Links might be intermittent independently of each other, e.g., moving objects passing by only interrupt the link between the two D2D users from time to time, while the other links remain intact.
Or all links being intermittent, but independently of each other, and with different probabilities.
Here, we speak of \emph{link intermittency} and \emph{intermittent links}. Intermittent 3WCs with other intermittency models are interesting directions for future research.

\appendix

\subsection{Proof Template for Sum-DoF Upper Bounds for Intermittent 3WC}
\label{sec:node-intermittency-sum-dof-proof-template}

Throughout the derivations of sum-DoF upper bounds for the intermittent 3WC, certain steps reappear in slight variations.
To avoid repetition, we formulate the following `proof template', where the Fraktur variables $\frakWA$, $\frakWB$, $\frakX$, $\frakY$, and $\frakZ$ serve as placeholders and need to be replaced by random variables as specified in the context of the template's invocation.
We require that
\begin{itemize}
  \item[\stepA]
    $\frakWA$ is independent of $(\frakWB, S^n, \ZcorrS^n)$,
  \item[\stepB]
    $\frakY_\ell$ is independent of $(\frakY^{\ell-1}, S^{\ell-1},\allowbreak S_{\ell+1}^n,\allowbreak \ZcorrS^{\ell-1},\allowbreak \Zcorr{\ell+1}^n,\allowbreak \W_1, \W_2, \W_3)$ given $(S_{\ell}, \Zcorr{\ell}, \frakX_\ell)$,
  \item[\stepC]
    $\MInf(\Zcorr{\ell} ; \frakY_\ell \mid S_{\ell} \frakX_\ell) = \MInf(\Zcorr{\ell} ; \frakZ_\ell \mid S_{\ell} \frakX_\ell)$, and $(\Zcorr{\ell},\frakZ_\ell)$ is independent of $\frakX_\ell$ given $S_{\ell}$,
  \item[\stepD]
    $\MInf(\Zcorr{\ell} ; \frakZ_\ell \mid S_{\ell}) = \ologp$.
\end{itemize}
Note that these preconditions are satisfied for every invocation of the template in this paper.
Then we have
\begin{IEEEeqnarray}{rCl}
  \IEEEeqnarraymulticol{3}{l}{
      \MInf(\frakWA ; \frakWB \frakY^n S^n \ZcorrS^n)
  }\nonumber\\
  \quad
    &\eqA& \MInf(\frakWA ; \frakY^n \mid \frakWB S^n \ZcorrS^n) \nonumber\\
    &=& \sum_{\ell=1}^n \MInf(\frakWA; \frakY_\ell \mid \frakY^{\ell-1} \frakWB S^n \ZcorrS^n) \nonumber\\
    &=& \sum_{\ell=1}^n \Big[ \hEntr(\frakY_\ell \mid \frakY^{\ell-1} \frakWB S^n \ZcorrS^n) \nonumber\\&&\quad {}-{} \hEntr(\frakY_\ell \mid \frakY^{\ell-1} \W_1 \W_2 \W_3 S^n \ZcorrS^n) \Big] \nonumber\\
    &\leq& \sum_{\ell=1}^n \Big[ \hEntr(\frakY_\ell \mid S_{\ell}) \nonumber\\&&\quad {}-{} \hEntr(\frakY_\ell \mid \frakY^{\ell-1} \W_1 \W_2 \W_3 \frakX_\ell S^n \ZcorrS^n) \Big] \nonumber\\
    &\eqB& \sum_{\ell=1}^n \Big[ \hEntr(\frakY_\ell \mid S_{\ell}) - \hEntr(\frakY_\ell \mid S_{\ell} \Zcorr{\ell} \frakX_\ell) \Big] \nonumber\\
    &=& \sum_{\ell=1}^n \MInf(\Zcorr{\ell} \frakX_\ell ; \frakY_\ell \mid S_{\ell}) \nonumber\\
    &=& \sum_{\ell=1}^n \Big[ \MInf(\frakX_\ell ; \frakY_\ell \mid S_{\ell}) + \MInf(\Zcorr{\ell} ; \frakY_\ell \mid S_{\ell} \frakX_\ell) \Big] \nonumber\\
    &\eqC& \sum_{\ell=1}^n \Big[ \MInf(\frakX_\ell ; \frakY_\ell \mid S_{\ell}) + \MInf(\Zcorr{\ell} ; \frakZ_\ell \mid S_{\ell}) \Big] \nonumber\\
    &\eqD& \sum_{\ell=1}^n \MInf(\frakX_\ell ; \frakY_\ell \mid S_{\ell}) + n \ologp \nonumber
\end{IEEEeqnarray}
where the letters indicate the precondition that justifies each step.

\subsection{Sum-DoF Upper Bound for Intermittent 3WC with $M_1 \geq M_2 \geq M_3$ (Part II)}
\label{sec:node-intermittency-sum-dof-case1-part2}

Since the scheme ought to be reliable, we bound the sum rate of $w_{13}$, $w_{23}$ and $w_{21}$ using Fano's inequality:
\begin{IEEEeqnarray}{rCl}
  \IEEEeqnarraymulticol{3}{l}{
      n(R_{13} + R_{23} + R_{21} - \varepsilon_n^{(2)})
  }\nonumber\\
  \quad
    &\leq& \MInf(W_{13} W_{23} W_{21} ; \W_3 \Y_3^n S^n \overbrace{W_{12} \tilde{\Y}_3^n \ZcorrS^n}^{\text{side information}} ) \nonumber\\
    &\leqA& \sum_{\ell=1}^n \MInf(\X_{1,\ell} \X_{2,\ell} ; \hat{\Y}_{3,\ell} \mid S_{\ell}) + n \ologp \nonumber\\
    &\leqB& n \left[ \tau M_2 + \ntau M_3 \right] \logp + n \ologp \nonumber
\end{IEEEeqnarray}
These steps are justified as follows:
\begin{itemize}
  \item[\stepA] Using the proof template presented in Appendix~\ref{sec:node-intermittency-sum-dof-proof-template}, with $\frakWA \eqdef (W_{13},W_{21},W_{23})$, $\frakWB \eqdef (W_{12},W_{31},W_{32})$, $\frakX \eqdef (\X_1,\X_2)$, $\frakY \eqdef \hat{\Y}_3$, $\frakZ \eqdef \hat{\Z}_3$
  \item[\stepB] $(\X_{1,\ell}, \X_{2,\ell}) \leadsto (\Y_{3,\ell}, \tilde{\Y}_{3,\ell})$ is a MIMO channel with $\min\{M_1+M_2,M_3+(M_2-M_3)\}=M_2$ DoFs if $s_\ell=1$, and $\min\{M_1+M_2,M_3+0\}=M_3$ DoFs if $s_\ell=0$
\end{itemize}
Dividing both sides by $n \logp$ and letting $\rho, n \to \infty$ we obtain
\begin{IEEEeqnarray}{rCl}
  d_{13} + d_{23} + d_{21} &\leq& \tau M_2 + \ntau M_3. \nonumber
\end{IEEEeqnarray}

\subsection{Sum-DoF Upper Bound for Intermittent 3WC with $M_2 \geq M_1 \geq M_3$ (Part I)}
\label{sec:node-intermittency-sum-dof-case2-part1}

Since this scheme ought to be reliable, we bound the sum rate of $w_{13}$, $w_{23}$ and $w_{12}$ using Fano's inequality:
\begin{IEEEeqnarray}{rCl}
  \IEEEeqnarraymulticol{3}{l}{
      n(R_{13} + R_{23} + R_{12} - \varepsilon_n^{(1)})
  }\nonumber\\
  \quad
    &\leq& \MInf(W_{13} W_{23} W_{12} ; \W_3 \Y_3^n S^n \overbrace{W_{21} \tilde{\Y}_3^n \ZcorrS^n}^{\text{side information}} ) \nonumber\\
    &\leqA& \sum_{\ell=1}^n \MInf(\X_{1,\ell} \X_{2,\ell} ; \hat{\Y}_{3,\ell} \mid S_{\ell}) + n \ologp \nonumber\\
    &\leqB& n \left[ \tau M_1 + \ntau M_3 \right] \logp + n \ologp \nonumber
\end{IEEEeqnarray}
These steps are justified as follows:
\begin{itemize}
  \item[\stepA] Using the proof template presented in Appendix~\ref{sec:node-intermittency-sum-dof-proof-template}, with $\frakWA \eqdef (W_{12},W_{13},W_{23})$, $\frakWB \eqdef (W_{21},W_{31},W_{32})$, $\frakX \eqdef (\X_1,\X_2)$, $\frakY \eqdef \hat{\Y}_3$, $\frakZ \eqdef \hat{\Z}_3$
  \item[\stepB] $(\X_{1,\ell}, \X_{2,\ell}) \leadsto (\Y_{3,\ell}, \tilde{\Y}_{3,\ell})$ is a MIMO channel with $\min\{M_1+M_2,M_3+(M_1-M_3)\}=M_1$ DoFs if $s_\ell=1$, and $\min\{M_1+M_2,M_3+0\}=M_3$ DoFs if $s_\ell=0$
\end{itemize}
Dividing both sides by $n \logp$ and letting $\rho, n \to \infty$ we obtain
\begin{IEEEeqnarray}{rCl}
  d_{13} + d_{23} + d_{12} &\leq& \tau M_1 + \ntau M_3. \nonumber
\end{IEEEeqnarray}

\subsection{Sum-DoF Upper Bound for Intermittent 3WC with $M_2 \geq M_1 \geq M_3$ (Part II)}
\label{sec:node-intermittency-sum-dof-case2-part2}

We bound the sum rate of $w_{21}$, $w_{31}$ and $w_{32}$ using Fano's inequality:
\begin{IEEEeqnarray}{rCl}
  \IEEEeqnarraymulticol{3}{l}{
      n(R_{21} + R_{31} + R_{32} - \varepsilon_n^{(2)})
  }\nonumber\\
  \quad
    &\leq& \MInf(W_{21} W_{31} W_{32} ; \W_1 \Y_1^n S^n \overbrace{W_{23} \tilde{\Y}_1^n \ZcorrS^n}^{\text{side information}} ) \nonumber\\
    &\leqA& \sum_{\ell=1}^n \MInf(\X_{2,\ell} \X_{3,\ell} ; \hat{\Y}_{1,\ell} \mid S_{\ell}) + n \ologp \nonumber\\
    &\eqB& \sum_{\ell=1}^n \Big[ \MInf(\X_{2,\ell} \X_{3,\ell} ; \Y_{1,\ell} \mid S_{\ell}) \nonumber\\&&\quad {}+{} \MInf(\X_{2,\ell} \X_{3,\ell} ; \tilde{\Y}_{1,\ell} \mid S_{\ell} \Y_{1,\ell}) \Big] + n \ologp \nonumber\\
    &\leqC& n \left[ \tau M_1 + \ntau M_3 \right] \logp + n \ologp \nonumber
\end{IEEEeqnarray}
These steps are justified as follows:
\begin{itemize}
  \item[\stepA] Using the proof template presented in Appendix~\ref{sec:node-intermittency-sum-dof-proof-template}, with $\frakWA \eqdef (W_{21},W_{31},W_{32})$, $\frakWB \eqdef (W_{12},W_{13},W_{23})$, $\frakX \eqdef (\X_2,\X_3)$, $\frakY \eqdef \hat{\Y}_1$, $\frakZ \eqdef \hat{\Z}_1$
  \item[\stepB] Chain rule for mutual information
  \item[\stepC] $(\X_{2,\ell}, \X_{3,\ell}) \leadsto \Y_{1,\ell}$ is a MIMO channel with $\min\{M_2+M_3,M_1\}=M_1$ DoFs if $s_\ell=1$, and $0$ DoFs if $s_\ell=0$; $(\X_{2,\ell}, \X_{3,\ell}) \leadsto \tilde{\Y}_{1,\ell}$ is a MIMO channel with $0$ DoFs if $s_\ell=1$ (because then $\tilde{\Y}_{1,\ell}=0$), and $\min\{ 0+M_3, M_1 \} = M_3$ DoFs if $s_\ell=0$ (because then $\Y_{1,\ell}$ is noise, and $\tilde{\Y}_{1,\ell}$ is independent of $X_{2,\ell}$)
\end{itemize}
Dividing both sides by $n \logp$ and letting $\rho, n \to \infty$ we obtain
\begin{IEEEeqnarray}{rCl}
  d_{21} + d_{31} + d_{32} &\leq& \tau M_1 + \ntau M_3. \nonumber
\end{IEEEeqnarray}

\subsection{Sum-DoF Upper Bound for Intermittent 3WC with $M_2 \geq M_3 \geq M_1$ (Part I)}
\label{sec:node-intermittency-sum-dof-case3-part1}

We bound the sum rate of $w_{13}$, $w_{23}$ and $w_{12}$ using Fano's inequality:
\begin{IEEEeqnarray}{rCl}
  \IEEEeqnarraymulticol{3}{l}{
      n(R_{13} + R_{23} + R_{12} - \varepsilon_n^{(1)})
  }\nonumber\\
  \quad
    &\leq& \MInf(W_{13} W_{23} W_{12} ; \W_3 \Y_3^n S^n \overbrace{W_{21} \ZcorrS^n}^{\text{side information}} ) \nonumber\\
    &\leqA& \sum_{\ell=1}^n \MInf(\X_{1,\ell} \X_{2,\ell} ; \Y_{3,\ell} \mid S_{\ell}) + n \ologp \nonumber\\
    &\leqB& n \left[ M_3 \right] \logp + n \ologp \nonumber
\end{IEEEeqnarray}
These steps are justified as follows:
\begin{itemize}
  \item[\stepA] Using the proof template presented in Appendix~\ref{sec:node-intermittency-sum-dof-proof-template}, with $\frakWA \eqdef (W_{12},W_{13},W_{23})$, $\frakWB \eqdef (W_{21},W_{31},W_{32})$, $\frakX \eqdef (\X_1,\X_2)$, $\frakY \eqdef \Y_3$, $\frakZ \eqdef \Z_3$
  \item[\stepB] $(\X_{1,\ell}, \X_{2,\ell}) \leadsto \Y_{3,\ell}$ is a MIMO channel with $\min\{M_1+M_2,M_3\}=M_3$ DoFs if $s_\ell=1$, and $\min\{M_2,M_3\}=M_3$ DoFs if $s_\ell=0$ (because $\X_{1,\ell}$ is independent of $\Y_{3, \ell}$)
\end{itemize}
Dividing both sides by $n \logp$ and letting $\rho, n \to \infty$ we obtain
\begin{IEEEeqnarray}{rCl}
  d_{13} + d_{23} + d_{12} &\leq& M_3. \nonumber
\end{IEEEeqnarray}

\subsection{Sum-DoF Upper Bound for Intermittent 3WC with $M_2 \geq M_3 \geq M_1$ (Part II)}
\label{sec:node-intermittency-sum-dof-case3-part2}

We bound the sum rate of $w_{21}$, $w_{31}$ and $w_{32}$ using Fano's inequality:
\begin{IEEEeqnarray}{rCl}
  \IEEEeqnarraymulticol{3}{l}{
      n(R_{21} + R_{31} + R_{32} - \varepsilon_n^{(2)})
  }\nonumber\\
  \quad
    &\leq& \MInf(W_{21} W_{31} W_{32} ; \W_1 \Y_1^n S^n \overbrace{W_{23} \tilde{\Y}_1^n \breve{\Y}_1^n \ZcorrS^n}^{\text{side information}} ) \nonumber\\
    &\leqA& \sum_{\ell=1}^n \MInf(\X_{2,\ell} \X_{3,\ell} ; \hat{\Y}_{1,\ell} \mid S_{\ell}) + n \ologp \nonumber\\
    &\leqB& n \left[ M_3 \right] \logp + n \ologp \nonumber
\end{IEEEeqnarray}
These steps are justified as follows:
\begin{itemize}
  \item[\stepA] Using the proof template presented in Appendix~\ref{sec:node-intermittency-sum-dof-proof-template}, with $\frakWA \eqdef (W_{21},W_{31},W_{32})$, $\frakWB \eqdef (W_{12},W_{13},W_{23})$, $\frakX \eqdef (\X_2,\X_3)$, $\frakY \eqdef \hat{\Y}_1$, $\frakZ \eqdef \hat{\Z}_1$
  \item[\stepB] $(\X_{2,\ell}, \X_{3,\ell}) \leadsto (\Y_{1,\ell}, \tilde{\Y}_{1,\ell}, \breve{\Y}_{1,\ell})$ is a MIMO channel with $\min\{M_2+M_3,M_1+0+(M_3-M_1)\}=M_3$ DoFs if $s_\ell=1$, and $\min\{M_2+M_3,0+M_1+(M_3-M_1)\}=M_3$ DoFs if $s_\ell=0$
\end{itemize}
Dividing both sides by $n \logp$ and letting $\rho, n \to \infty$ we obtain
\begin{IEEEeqnarray}{rCl}
  d_{21} + d_{31} + d_{32} &\leq& M_3. \nonumber
\end{IEEEeqnarray}

\bibliographystyle{IEEEtran}
\bibliography{IEEEabrv,references}

\begin{IEEEbiographynophoto}{Joachim Neu}
(S'16)
received the B.Sc.\ and M.Sc.\ degree (with high distinction) in electrical and computer engineering from the Technical University of Munich (TUM), Munich, Germany, in 2015 and 2018, respectively.
He also holds a B.A.\ degree in philosophy from Ludwig-Maximilians-University (LMU), Munich, Germany, from 2015.
Currently, he is a graduate student in electrical engineering at Stanford University, Stanford, CA, USA.
His research interests include multi-user information theory, coding for communications and storage, and distributed systems and algorithms.
Mr.\ Neu received the Reed-Hodgson Stanford Graduate Fellowship (2018--2021) and is an alumnus of the German Academic Scholarship Foundation.
\end{IEEEbiographynophoto}

\begin{IEEEbiographynophoto}{Anas Chaaban}
(S'09--M'14--SM'17)
received the Ma{\^i}trise {\`e}s Sciences degree in electronics from Lebanese University, Lebanon, in 2006, the M.Sc.\ degree in communications technology and the Dr.\ Ing.\ (Ph.D.) degree in electrical engineering and information technology from the University of Ulm and the Ruhr-University of Bochum, Germany, in 2009 and 2013, respectively. From 2008 to 2009, he was with the Daimler AG Research Group On Machine Vision, Ulm, Germany. He was a Research Assistant with the Emmy-Noether Research Group on Wireless Networks, University of Ulm, Germany, from 2009 to 2011, which relocated to the Ruhr-University of Bochum in 2011. He was a Postdoctoral Researcher with the Ruhr-University of Bochum from 2013 to 2014, and with King Abdullah University of Science and Technology from 2015 to 2017. He joined the School of Engineering at the University of British Columbia as an Assistant Professor in 2018. His research interests are in the areas of information theory and wireless communications.
\end{IEEEbiographynophoto}

\begin{IEEEbiographynophoto}{Aydin Sezgin}
(S'01--M'05--SM'13)
received the Dipl.\ Ing.\ (M.S.) degree in communications engineering from Technische Fachhochschule Berlin (TFH), Berlin, in 2000, and the Dr.\ Ing.\ (Ph.D.) degree in electrical engineering from TU Berlin, in 2005.
From 2001 to 2006, he was with the Heinrich-Hertz-Institut, Berlin. From 2006 to 2008, he held a postdoctoral position, and was also a lecturer with the Information Systems Laboratory, Department of Electrical Engineering, Stanford University, Stanford, CA, USA. From 2008 to 2009, he held a postdoctoral position with the Department of Electrical Engineering and Computer Science, University of California, Irvine, CA, USA. From 2009 to 2011, he was the Head of the Emmy-Noether-Research Group on Wireless Networks, Ulm University. In 2011, he joined TU Darmstadt, Germany, as a professor. He is currently a professor of information systems and sciences with the Department of Electrical Engineering and Information Technology, Ruhr-Universität Bochum, Germany. He is interested in signal processing, communication, and information theory, with a focus on wireless networks. He has published several book chapters more than 40 journals and 140 conference papers in these topics. He has coauthored a book on multi-way communications. Aydin is a winner of the ITG-Sponsorship Award, in 2006. He was a first recipient of the prestigious Emmy-Noether Grant by the German Research Foundation in communication engineering, in 2009. He has coauthored papers that received the Best Poster Award at the IEEE Communication Theory Workshop, in 2011, the Best Paper Award at ICCSPA, in 2015, and the Best Paper Award at ICC, in 2019. He has served as an Associate Editor for the \textsc{IEEE Transactions on Wireless Communications}, from 2009 to 2014.
\end{IEEEbiographynophoto}

\begin{IEEEbiographynophoto}{Mohamed-Slim Alouini}
(S'94--M'98--SM'03--F'09)
was
born in Tunis, Tunisia. He received the Ph.D. degree in Electrical Engineering
from the California Institute of Technology (Caltech), Pasadena,
CA, USA, in 1998. He served as a faculty member in the University of Minnesota,
Minneapolis, MN, USA, then in the Texas A\&M University at Qatar,
Education City, Doha, Qatar before joining King Abdullah University of
Science and Technology (KAUST), Thuwal, Makkah Province, Saudi
Arabia as a Professor of Electrical Engineering in 2009. His current
research interests include the modeling, design, and
performance analysis of wireless communication systems.
\end{IEEEbiographynophoto}

\end{document}